\newcommand{\sinc}{\mathrm{sinc}}
\newcommand{\tanc}{\mathrm{tanc}}
\newcommand{\tanhc}{\mathrm{tanhc}}
\def\supp{\mathrm{supp}}
\def\rank{\mathrm{rank}}
\DeclareMathAlphabet{\bit}{OML}{cmm}{b}{it}
\def\<{\leqslant}           
\def\>{\geqslant}           
\def\d{\partial}
\def\Re{\mathrm{Re}}   
\def\Im{\mathrm{Im}}   
\def\rprod{\mathop{\overrightarrow{\prod}}}
\def\cH{\mathcal{H}}   %
\def\mZ{\mathbb{Z}}    
\def\mN{\mathbb{N}}    
\def\mR{\mathbb{R}}    
\def\mC{\mathbb{C}}    
\def\Tr{\mathrm{Tr}}       
\def\rT{\mathrm{T}}        
\def\rF{\mathrm{F}}        
\def\rHS{\mathrm{HS}}      
\def\bP{\mathbf{P}}    
\def\bE{\mathbf{E}}    
\def\bM{\mathbf{M}}    
\def\bra{{\langle}}
\def\ket{{\rangle}}
\def\re{\mathrm{e}}        
\def\rd{\mathrm{d}}        
\def\cL{\mathcal{L}}
\def\bD{\mathbf{D}}
\def\bJ{\mathbf{J}}
\def\br{\mathbf{r}}
\def\x{\times}
\def\ox{\otimes}
\def\fF{\mathfrak{F}}
\def\fH{\mathfrak{H}}
\def\fS{\mathfrak{S}}
\def\fR{\mathfrak{R}}
\def\fP{\mathfrak{P}}
\def\bH{\mathbf{H}}
\def\cF{\mathcal{F}}
\def\cX{\mathcal{X}}
\def\cK{\mathcal{K}}
\def\cC{\mathcal{ C}}
\def\cG{\mathcal{G}}
\def\cI{\mathcal{I}}
\def\cP{\mathcal{P}}
\def\cA{\mathcal{A}}
\def\cB{\mathcal{B}}
\def\cov{\mathbf{cov}}
\def\bG{\mathbf{G}}
\def\eps{\epsilon}
\def\ups{\upsilon}
\def\Ups{\Upsilon}
\begin{document}
\markboth{I.G.Vladimirov, I.R.Petersen \& M.R.James}
{Quadratic-exponential functionals of Gaussian
 quantum processes}

\catchline{}{}{}{}{}

\title{\Large Quadratic-exponential functionals of Gaussian
 quantum processes}

\author{Igor G. Vladimirov\footnote{Corresponding author}, \quad
Ian R. Petersen$^\dagger$ and Matthew R. James$^\ddagger$
}

\address{
School of Engineering,
College of Engineering and Computer Science,
Australian National University,\\ Canberra,
Acton, ACT 2601, Australia\\
$^*$igor.g.vladimirov@gmail.com,
$^\dagger$i.r.petersen@gmail.com,
$^\ddagger$matthew.james@anu.edu.au
}

\maketitle

\thispagestyle{empty}
\begin{abstract}
This paper is concerned with exponential moments of integral-of-quadratic functions of quantum processes
with canonical commutation relations of position-momentum type.  Such quadratic-exponential functionals (QEFs) arise as robust performance criteria in control problems for open quantum harmonic oscillators (OQHOs)   driven by bosonic fields.  We develop a randomised representation for the QEF using a Karhunen-Loeve  expansion of the quantum process on  a bounded time interval over the eigenbasis of its two-point commutator kernel, with noncommuting position-momentum pairs as coefficients.  This representation holds regardless of a particular quantum state and employs averaging over an auxiliary  classical Gaussian random process whose covariance operator is specified by the commutator kernel. This allows the QEF to be related to the moment-generating functional of the quantum process and computed for  multipoint Gaussian states. For stationary Gaussian quantum processes,
we establish a frequency-domain formula for the QEF rate in terms of the Fourier transform
of the quantum covariance kernel in composition with trigonometric functions. A differential equation  is obtained for the QEF rate with respect to the risk sensitivity parameter for its approximation  and numerical computation.
The QEF is also applied to large deviations and worst-case mean square cost bounds for OQHOs in the presence of statistical uncertainty with a quantum relative entropy description.
\end{abstract}

\keywords{Quantum process;
quadratic-exponential functional;
randomised representation;
moment generating functional;
Gaussian quantum state;
stationary Gaussian quantum process;
open quantum harmonic oscillator;
quantum relative entropy.
}

\ccode{AMS Subject Classification:
81S22, 
81S25, 
81P16, 
81S05,      
81R15,  
47B35,  
81Q10,   	
37L40,      
60G15,   	
81Q93,   	
93B35,      
94A17.       
}

\section{Introduction}

Interaction of quantum mechanical systems with their environment (involving, for example,  other quantum systems, external quantum fields or classical measuring devices), which is the main subject of open quantum dynamics, is viewed in quantum control also from the point of achieving certain dynamic properties for such systems. This gives rise to control and filtering settings for quantum networks, which can consist of a quantum plant and a quantum or classical feedback controller or observer with direct or field-mediated coupling, or more complicated interconnections of several quantum systems\cite{GJ_2009,JG_2008}.
Tractable models for such systems with continuous variables include, in particular, open quantum harmonic oscillators (OQHOs) governed by linear quantum stochastic differential equations (QSDEs)  in the framework of the Hudson-Parthasarathy calculus\cite{HP_1984,P_1992,P_2015}. These models are used in linear quantum systems theory\cite{NY_2017,P_2017} in combination with performance criteria, which are organised as cost functionals to be minimised over admissible system interconnections and their parameters.

Similarly to the use of quadratic cost functionals (such as the mean square value of the estimation error) in the Kolmogorov-Wiener-Hopf-Kalman filtering and linear-quadratic-Gaussian control theories\cite{AM_1971} for classical linear stochastic systems, linear quantum control also employs quadratic performance criteria\cite{EB_2005,NJP_2009,WM_2010}.  In particular, quadratic cost functionals and their minimization provide a natural way to quantify and improve the performance of observers in quantum filtering problems in terms of the mean square discrepancy between the system variables and their estimates\cite{MJ_2012}. The mean square optimality criteria for linear quantum stochastic systems are complemented by (and are a limiting case of)  the quantum mechanical adaptation of quadratic-exponential cost functionals,  which originate  from classical risk-sensitive control\cite{BV_1985,J_1973,W_1981}.

The quadratic exponential functional\cite{VPJ_2018a} (QEF) (see also Ref.~\refcite{B_1996}) retains the general structure of  its classical counterparts and is organised as the averaged exponential of the integral of a  quadratic function of the system variables over a bounded time interval. Being a higher-order mixed moment of the quantum variables, the QEF leads to exponential upper bounds on the tail distributions for quadratic functions of the quantum system trajectories\cite{VPJ_2018a}, which corresponds to the large deviations theory for classical  random processes\cite{DE_1997,V_2008}. Another useful property of the QEF is its relation\cite{VPJ_2018b} with upper bounds on the worst-case values of mean square costs in the presence of quantum statistical uncertainty,  when the actual system-field state differs from its nominal model, but the deviation is limited in terms of quantum relative entropy\cite{OP_1993,OW_2010,YB_2009}.  These properties  involve the QEF in such a way that its minimization makes the dynamic behaviour of the open quantum system more conservative and robust. The robustness  considerations are relevant, for example,  for applications to quantum optics\cite{WM_2008} and quantum information processing\cite{NC_2000}, which demand certain insensitivity to unmodelled nonlinear dynamics along with the possibility to isolate the quantum  system from its surroundings in a controlled fashion.

The resulting robust performance analysis and optimal control problems require methods for computing and  minimizing the QEF, which is different both from its classical predecessors and the quantum risk-sensitive control formulation\cite{J_2004,J_2005} with time-ordered exponentials despite the Lie-algebraic links\cite{VPJ_2019a} between these two classes of cost functionals.  These differences  are caused by the noncommutativity  of the underlying quantum variables and have close algebraic and quantum probabilistic connections with the operator exponential structures arising in the context
of operator algebras\cite{AB_2018}, quantum mechanical extensions of the  L\'{e}vy area\cite{CH_2013,H_2018},
and the moment-generating and partition functions for quadratic Hamiltonians in quantum statistical mechanics\cite{BB_2010,PS_2015,S_1994}.

The development of methods for computing the QEF has been a subject of several recent publications based on a parametric randomization technique of Ref.~\refcite{VPJ_2018c}. This result allows the exponential moment of a quadratic function of a finite number of quantum variables with canonical commutation relations (CCRs), similar to those of the positions and momenta, to be represented in a randomised form. This representation involves classical averaging over auxiliary independent standard normal random variables as parameters of the moment-generating function of the quantum variables. The randomised representation was then extended to continuous time using quantum Karhunen-Loeve (QKL) expansions\cite{VPJ_2019b,VJP_2019}  for system variables of OQHOs over different basis functions, with the QKL expansion being a quantum counterpart of its predecessor for classical random processes\cite{GS_2004,IT_2010}. These results have led to an integral operator representation\cite{VPJ_2020a} of the QEF over bounded time intervals  for OQHOs in multipoint Gaussian quantum states\cite{KRP_2010,VPJ_2018a}. This finite-horizon representation has subsequently  been used in Ref.~\refcite{VPJ_2020b} in order to establish a frequency-domain formula for the infinite-horizon  asymptotic growth rate of the logarithm of the QEF for invariant  Gaussian states of stable OQHOs, driven by vacuum bosonic fields\cite{P_1992}.

The purpose of the present paper is a systematic extension of the results, briefly announced in the  above-mentioned conference papers\cite{VPJ_2018b,VPJ_2019b,VJP_2019,VPJ_2020a,VPJ_2020b}, to a wider class of quantum processes, which contain,  as a particular case, those formed from the system variables of OQHOs.
More precisely, the paper is concerned with exponential moments of integral-of-quadratic functions of quantum processes formed by time-varying self-adjoint quantum variables satisfying CCRs similar to those of the positions and momenta.
We develop a Girsanov type representation for the QEF using a QKL expansion of the quantum process on  a bounded time interval over the real and imaginary parts of the orthonormal eigenfunctions of a skew self-adjoint integral operator with the two-point commutator kernel of the process, provided the operator has no zero eigenvalues.  The coefficients of the QKL expansion are organised as noncommuting position-momentum pairs.  The resulting representation of the QEF
is valid regardless of a particular quantum state and employs averaging over an auxiliary  classical Gaussian random process whose covariance operator is defined in terms of the quantum commutator kernel. We use this representation in order to relate the QEF to the moment-generating functional (MGF) of the quantum process and compute it when this process is in a multipoint Gaussian state.
For stationary zero-mean  Gaussian quantum processes (including but not limited to those generated by the system variables of  a stable OQHO in its invariant state), we obtain a frequency-domain formula for the QEF growth rate in terms of the Fourier transforms of the real and imaginary parts of the quantum covariance function in composition with trigonometric functions. This leads to a differential equation for the QEF growth rate with respect to the risk sensitivity parameter, which is applicable to its asymptotic approximation and numerical computation using a homotopy method similar to that for solving parameter dependent algebraic equations\cite{MB_1985}. These results are also specified for  invariant Gaussian states of stable OQHOs driven by vacuum fields, in which case the quantum covariance  function of the stationary Gaussian quantum process, linearly related to the system variables, has a rational Fourier transform. In this case, the eigenvalue problem for the commutator kernel is related to a boundary value problem for a second-order ODE, and a sufficient condition is obtained for the absence of zero eigenvalues. For stable OQHOs, we  also apply the QEF growth rate to upper bounds on the worst-case growth rate of a mean square cost in the presence of statistical uncertainty described in terms of the quantum relative entropy rate of the actual system-field state with respect to its nominal model.

The paper is organised as follows.
Section~\ref{sec:proc} specifies the class of quantum processes with position-momentum type CCRs being considered.
Section~\ref{sec:eig} discusses the eigenvalues and eigenfunctions of the skew self-adjoint integral operator with the two-point commutator kernel of the process.
Section~\ref{sec:QKL} employs this eigenbasis for a finite-horizon QKL expansion of the process.
Section~\ref{sec:quadro} specifies a QEF for the quantum process
and applies the QKL expansion  in order to represent this functional in terms of averaging over an auxiliary  classical random sequence.
Section~\ref{sec:classproc} develops a continuous-time version of this randomised representation using a classical  Gaussian process and the moment-generating functional of the quantum process.
Section~\ref{sec:gauss} specifies this representation for multi-point Gaussian quantum states.
Section~\ref{sec:stat}  establishes the infinite-horizon asymptotic growth rate for the logarithm of the QEF for stationary zero-mean Gaussian quantum processes.
Section~\ref{sec:comp} discusses a homotopy technique for computing the QEF growth rate and its asymptotic expansion over the risk sensitivity parameter.
Section~\ref{sec:sys} specifies  the above results to the system variables of a stable OQHO driven by vacuum input fields.
Section~\ref{sec:worst} applies the QEF rate to upper bounds on the worst-case quadratic costs for OQHOs in the presence of statistical uncertainty.
Section~\ref{sec:conc} makes concluding remarks.  \ref{sec:A} provides a randomised representation for elementary quadratic-exponential functions of position-momentum pairs, which is used in Section~\ref{sec:quadro}.

\section{Quantum Processes Being Considered}
\label{sec:proc}

We consider a quantum process $X:= (X_k)_{1\< k \< n}$ consisting of self-adjoint operators $X_1(t), \ldots, X_n(t)$ on a complex separable Hilbert space $\fH$. It is assumed that these quantum variables are weakly continuous functions\cite{Y_1980}   of time $t\> 0$ with the two-point canonical commutation relations (CCRs)
\begin{equation}
\label{XXcomm}
    [X(s), X(t)^\rT]
    :=
    ([X_j(s),X_k(t)])_{1\< j,k\< n}
    =
    2i\Lambda(s,t) \ox \cI_\fH,
    \qquad
    s,t\>0
\end{equation}
on a common dense domain in $\fH$, specified by a continuous function  $\Lambda: \mR_+^2\to \mR^{n\x n}$ (with $\mR_+:= [0,+\infty)$) satisfying
\begin{equation}
\label{skew}
    \Lambda(s,t) = -\Lambda(t,s)^\rT,
    \qquad
    s,t\> 0.
\end{equation}
Here,   $(\cdot)^\rT$ is the matrix transpose (vectors are organised as columns unless indicated otherwise),  $[\alpha,\beta]:= \alpha\beta - \beta\alpha$ is the commutator of linear operators, $i:= \sqrt{-1}$ is the imaginary unit, $\ox$ is the tensor product of spaces or operators (in particular, the Kronecker product of matrices), and  $\cI_\fH$ is the identity operator on $\fH$. For any complex matrix $C \in \mC^{a\x b}$, the matrix $C\ox\cI_\fH$ is identified with $C$.

In view of (\ref{XXcomm}), (\ref{skew}),  the real antisymmetric matrix $\Lambda(t,t)$ of order $n$ describes the one-point CCRs for the process $X$:
\begin{equation}
\label{Xcomm}
    [X(t), X(t)^\rT]
    =
    2i\Lambda(t,t),
    \qquad
    t\>0.
\end{equation}
Such CCRs hold for pairs
of the quantum mechanical position and momentum  operators\cite{S_1994}   $\xi$ and $\eta:= -i\d_\xi$,  implemented on the Schwartz space\cite{V_2002} and satisfying
$[\xi,\eta] = i$, so that the vector
\begin{equation}
\label{zeta}
    \zeta:=
    {\begin{bmatrix}
        \xi\\
        \eta
    \end{bmatrix}}
\end{equation}
has the CCR matrix $\frac{1}{2}\bJ$ in the sense that
\begin{equation}
\label{posmomCCR}
    [\zeta, \zeta^\rT]
    =
    \begin{bmatrix}
        [\xi,\xi] & [\xi,\eta]\\
        [\eta,\xi] & [\eta,\eta]
    \end{bmatrix}
    =
    i\bJ,
\end{equation}
where
\begin{equation}
\label{bJ}
\bJ: = {\begin{bmatrix}
        0 & 1\\
        -1 & 0
    \end{bmatrix}}
\end{equation}
spans the subspace of antisymmetric matrices of order 2. The conjugate position-momentum pairs provide building blocks for more complicated CCRs between several (or an infinite number of) quantum variables.

For a fixed but otherwise arbitrary time horizon $T>0$,   the two-point CCR function $\Lambda$ in (\ref{XXcomm})  gives rise to an integral operator $\cL$,  which acts on a function
$f \in L^2([0,T],\mC^n)$
as
\begin{equation}
\label{cL}
    \cL(f)(s)
    :=
    \int_0^T
    \Lambda(s,t) f(t)
    \rd t,
    \qquad
    0\< s \< T,
\end{equation}
and is skew self-adjoint in view of (\ref{skew}).
The Hilbert spaces $L^2([0,T],\mC^r)$ of square integrable functions $f,g: [0,T]\to \mC^r$ are endowed with the inner product
$
    \bra f,g\ket := \int_0^T f(t)^* g(t)\rd t
$ and the norm $\|f\|:= \sqrt{\bra f,f\ket} = \sqrt{\int_0^T |f(t)|^2\rd t}$, where $(\cdot)^*:= {{\overline{(\cdot)}}}^\rT$ is the complex conjugate transpose, and $|\cdot|$ is the standard Euclidean norm.
With its kernel function $\Lambda$  being continuous, the operator $\cL$ is compact\cite{RS_1980}.

\section{Eigenbasis for the Two-Point Commutator Kernel}
\label{sec:eig}

Since the operator $\cL$ in (\ref{cL}) is skew self-adjoint (and hence, $i\cL$ is self-adjoint), its eigenvalues are purely imaginary. Furthermore, they are symmetric about the origin. Indeed, since $\cL$ maps the real subspace $L^2([0,T], \mR^n)$ into itself, then for any eigenfunction $f: [0,T]\to \mC^n$  of this operator, with
\begin{equation}
\label{freim}
    \varphi:= \Re f,
    \qquad
    \psi := \Im f,
\end{equation}
and eigenvalue $i\omega$, so that $\cL(f) = i\omega f$, with $\omega\in \mR$, the function $\overline{f} = \varphi-i\psi $ satisfies $\cL(\overline{f}) = \cL(\varphi)-i\cL(\psi) = \overline{\cL(f)} = -i\omega \overline{f}$ and is, therefore,  an eigenfunction of $\cL$ with the eigenvalue $-i\omega$. This property is represented in vector-matrix form as
\begin{equation}
\label{Lff}
    \begin{bmatrix}
      \cL(f) &
      \cL(\overline{f})
    \end{bmatrix}
    =
    i
    \omega
    \begin{bmatrix}
      f &
      \overline{f}
    \end{bmatrix}
    \begin{bmatrix}
      1 & 0\\
      0 & -1
    \end{bmatrix},
\end{equation}
where both sides of the equality are $\mC^{n\x 2}$-valued functions on $[0,T]$.
For what follows, we assume that the null space of the operator $\cL$ is trivial:
\begin{equation}
\label{non0}
    \ker \cL = \{0\},
\end{equation}
that is, $\cL$
    has no zero eigenvalues.
This means that there does not exist a nonzero  $f \in L^2([0,T], \mC^n)$ which makes  $\big[X(s), \int_0^T f(t)^\rT X(t)\rd t\big] = 2i\cL(f)(s)$ vanish for all $0\< s \< T$. The property (\ref{non0}) can therefore  be interpreted as ``complete noncommutativity'' of the quantum process $X$ over the time interval $[0,T]$.

Let $f\in L^2([0,T],\mC^n)\setminus \{0\}$ be an eigenfunction of $\cL$ with the eigenvalue $i\omega$, where $\omega >0$ (without loss of generality under the condition (\ref{non0})) will be referred to as an eigenfrequency of $\cL$.  The functions
$\varphi, \psi \in L^2([0,T], \mR^n)$ in (\ref{freim}) satisfy
$\|\varphi\|^2 + \|\psi\|^2 = \|f\|^2 = \|\overline{f}\|^2$ and are related by
$\cL(\varphi) = -\omega \psi$ and $\cL(\psi) = \omega \varphi$, or, in vector-matrix form,
\begin{equation}
\label{Lphipsi}
    \begin{bmatrix}
      \cL(\varphi) &
      \cL(\psi)
    \end{bmatrix}
    =
    \omega
    \begin{bmatrix}
      \varphi &
      \psi
    \end{bmatrix}
        \bJ  ,
\end{equation}
where $\bJ$ is the matrix from (\ref{bJ}).  The equality (\ref{Lphipsi}) is a real-valued version of (\ref{Lff}) in view of the identities
\begin{equation}
\label{ffphipsi}
        {\begin{bmatrix}
      f &
      \overline{f}
    \end{bmatrix}}
    =
    \begin{bmatrix}
      \varphi &
      \psi
    \end{bmatrix}
    \Delta
\end{equation}
and
$
    i
    \Delta
    {\small\begin{bmatrix}
        1 &  0\\
         0 &  -1
     \end{bmatrix}}
     \Delta^{-1}
    =
    \bJ
$,
where
\begin{equation}
\label{Del}
    \Delta
    :=
        \begin{bmatrix}
      1 & 1\\
      i & -i
    \end{bmatrix},
\end{equation}
with the matrix $\frac{1}{\sqrt{2}} \Delta$ being unitary. Since the eigenfunctions $f$, $\overline{f}$ correspond to the different eigenvalues $\pm i\omega$ of the skew self-adjoint operator $\cL$, they are orthogonal, and their Gram matrix takes the form
$
    \bG(f,\overline{f})
    :=
    \int_0^T
        {\small\begin{bmatrix}
      f(t) &
      \overline{f(t)}
    \end{bmatrix}}^*
        {\small\begin{bmatrix}
      f(t) &
      \overline{f(t)}
    \end{bmatrix}}
    \rd t
    =
    {\small\begin{bmatrix}
      \|f\|^2 & \bra f, \overline{f}\ket \\
      \bra \overline{f}, f\ket & \|f\|^2
    \end{bmatrix}}
    =
    \|f\|^2 I_2
$,
where $I_r$ is the identity matrix of order $r$. On the other hand, it follows from (\ref{ffphipsi}) that
$
    \bG(f,\overline{f}) = \Delta^*\bG(\varphi,\psi)\Delta
$,
and hence, in view of the symmetry of $\bra\cdot, \cdot\ket$ on the real subspace $L^2([0,T],\mR^n)$ and the unitarity of the matrix $\frac{1}{\sqrt{2}}\Delta$,
$
    \bG(\varphi,\psi)
    =
    {\small\begin{bmatrix}
      \|\varphi\|^2 & \bra \varphi, \psi\ket \\
      \bra \varphi, \psi\ket  & \|\psi\|^2
    \end{bmatrix}}
    =
    \frac{1}{2}\Delta \bG(f,\overline{f}) \Delta^* = \frac{1}{2}\|f\|^2 I_2
$,
which is equivalent to
\begin{equation}
\label{norm1_norm2}
    \|\varphi\|^2 = \|\psi\|^2 = \frac{1}{2}\|f\|^2,
    \qquad
    \bra \varphi, \psi\ket = 0.
\end{equation}
If $f$, $g$ are eigenfunctions of $\cL$ with eigenvalues $i\omega$ and $i\mu$, so that $\cL(f) = i\omega f$ and $\cL(g) = i\mu g$, then $\omega\ne \mu$ implies $\bra f,g\ket = 0$. In particular,  if $\omega >0$ and $\mu>0$, then $f$, $\overline{g}$ correspond to the different eigenvalues $i\omega\ne -i\mu$, and hence, $\bra f, \overline{g}\ket = 0$. The eigenfunctions with a common eigenvalue are orthonormalised by the Gram-Schmidt procedure\cite{RS_1980}.

For what follows, all the eigenfrequencies $\omega_k>0$ of the operator $\cL$ in (\ref{cL})  are numbered by positive integers $k \in \mN:= \{1,2,3,\ldots\}$. The eigenfunctions
\begin{equation}
\label{eigff}
    f_k = \varphi_k + i\psi_k,
    \qquad
    \overline{f_k} = \varphi_k - i\psi_k
\end{equation}
of $\cL$, with the corresponding eigenvalues $\pm i \omega_k $,   and  their real and imaginary parts
\begin{equation}
\label{reim}
  \varphi_k := \Re f_k,
  \qquad
  \psi_k := \Im f_k
\end{equation}
satisfy
\begin{equation}
\label{ffff}
    \bra f_j,f_k\ket = \delta_{jk},
    \qquad
    \bra \overline{f_j},\overline{f_k}\ket = \overline{\bra f_j,f_k\ket} = \delta_{jk},
    \qquad
    \bra f_j, \overline{f_k}\ket = 0,
\end{equation}
or equivalently,
\begin{equation}
\label{reimort}
    \bra \varphi_j,\varphi_k\ket
    =
    \bra \psi_j,\psi_k\ket
    =
    \frac{1}{2}\delta_{jk},
    \qquad
    \bra \varphi_j,\psi_k\ket = 0,
    \qquad
    j,k\in \mN,
\end{equation}
with $\delta_{jk}$ the Kronecker delta, where the relations (\ref{norm1_norm2}) are also used.  An equivalent form of (\ref{reimort}) is
\begin{equation}
\label{hh}
  \int_0^T
  h_j(t)^\rT h_k(t)
  \rd t
  =
  {\begin{bmatrix}
    \bra \varphi_j, \varphi_k\ket & \bra \varphi_j, \psi_k\ket\\
    \bra \psi_j, \varphi_k\ket & \bra \psi_j, \psi_k\ket
  \end{bmatrix}}
  =
  \frac{1}{2}\delta_{jk}
  I_2,
\end{equation}
where the functions $h_k \in L^2([0,T], \mR^{n\x 2})$ are formed from the $\mR^n$-valued functions in (\ref{reim}) as
\begin{equation}
\label{hk}
    h_k:=
    {\begin{bmatrix}
      \varphi_k &
      \psi_k
    \end{bmatrix}}.
\end{equation}
The eigenfunctions (\ref{eigff}) can be represented as
$
        {\begin{bmatrix}
      f_k &
      \overline{f_k}
    \end{bmatrix}}
    =
    h_k
    \Delta
$
in terms of (\ref{hk}) by using (\ref{Del}). The resolution of the identity over the orthonormal eigenfunctions of the operator $\cL$ takes the form
$   f
   =
  \sum_{k=1}^{+\infty}
  (\bra f_k, f\ket f_k +\bra \overline{f_k},f\ket \overline{f_k})
$
for any   $  f \in L^2([0,T], \mC^n)$, or formally,
\begin{equation}
\label{cI}
    2
    \sum_{k=1}^{+\infty}
    \Re
    (f_k(s)f_k(t)^*) = \delta(s-t) I_n,
    \qquad
    0\< s,t \< T,
\end{equation}
where $\delta$ is the Dirac delta-function. Hence, the kernel function $\Lambda$ admits the following expansion over the eigenbasis:
\begin{align}
\nonumber
    \Lambda(s,t)
    & =
    i
    \sum_{k=1}^{+\infty}
    \omega_k
    (f_k(s)f_k(t)^*-\overline{f_k(s)}f_k(t)^\rT)
    = -2
    \sum_{k=1}^{+\infty}
    \omega_k
    \Im(f_k(s)f_k(t)^*),\\
\label{Lambdaseries}
    & =
    2
    \sum_{k=1}^{+\infty}
    \omega_k
    (\varphi_k(s)\psi_k(t)^\rT-\psi_k(s)\varphi_k(t)^\rT)
    =
    2
    \sum_{k=1}^{+\infty}
    \omega_k
    h_k(s) \bJ h_k(t)^\rT,
    \qquad
    0\< s,t\< T,
\end{align}
where the matrix $\bJ$ is given by (\ref{bJ}). This expansion
is similar to the Mercer representation\cite{KZPS_1976} for positive semi-definite self-adjoint integral operators with continuous kernels (see also Ref.~\refcite{BP_2012} and references therein).  Since $-i\cL$ is a self-adjoint operator on $L^2([0,T], \mC^n)$ with the kernel $-i\Lambda$, the eigenvalues $\pm \omega_k$ and the corresponding eigenfunctions $f_k$, $\overline{f_k}$, then, in accordance with (\ref{Lambdaseries}), (\ref{hh}), the squared  Hilbert-Schmidt norm\cite{RS_1980}  of the operator $\cL$ in (\ref{cL}) is
\begin{equation}
\label{HSnorm}
    \|\cL\|_{\rHS}^2
    =
    \int_{[0,T]^2}
    \|\Lambda(s,t)\|_\rF^2
    \rd s\rd t
    =
    2\sum_{k=1}^{+\infty}
    \omega_k^2
    =
    -
    \Tr (\cL^2) <+\infty,
\end{equation}
where $\|N\|_\rF := \sqrt{\Tr (N^*N)}$ is the Frobenius norm of a real or complex matrix $N$. The trace of the positive definite self-adjoint operator $-\cL^2$  in (\ref{HSnorm}) takes into account  the double multiplicity of the eigenvalues $-(\pm i\omega_k)^2 = \omega_k^2$.

If $\phi$ is a function of a complex variable, holomorphic in a neighbourhood of the line segment $D:= i\|\cL\|[-1, 1] \subset i\mR$ (containing the spectrum of the operator $\cL$), where $ \|\cL\| = \max_{k \> 1}\omega_k$ is the $L^2$-induced operator norm of $\cL$,
then $\phi(\cL)$
is a bounded operator on $L^2([0,T], \mC^n)$ with the eigenvalues $\phi(\pm i\omega_k)$ and the eigenfunctions (\ref{eigff}) inherited from $\cL$.
 In view of (\ref{cI}), (\ref{Lambdaseries}), the kernel function of $\phi(\cL)$ is given by
\begin{align}
\nonumber
    \sum_{k=1}^{+\infty}
    (\phi(i\omega_k)f_k(s)f_k(t)^*&+\phi(-i\omega_k)\overline{f_k(s)}f_k(t)^\rT)
     =
    2
    \sum_{k=1}^{+\infty}
    (\phi_+(i\omega_k)\Re(f_k(s)f_k(t)^*)
    +
    i\phi_-(i\omega_k)
    \Im(f_k(s)f_k(t)^*)),\\
\label{phiLambdaseries}
    & =
    2
    \sum_{k=1}^{+\infty}
    h_k(s)
    (\phi_+(i\omega_k)I_2 - i\phi_-(i\omega_k)\bJ)
    h_k(t)^\rT,
    \qquad
    0\< s,t\< T,
\end{align}
where $\phi_{\pm}(z):=\frac{1}{2}(\phi(z)\pm \phi(-z))$ are the symmetric and antisymmetric  parts of the function $\phi$, so that $\phi(\pm z) = \phi_+(z)\pm \phi_-(z)$. In particular, if $\phi$ is symmetric (and hence, $\phi'(0) = 0$) and satisfies $\phi(0)=0$,  then (\ref{HSnorm}) implies that  the operator $\phi(\cL)$ is of trace class\cite{RS_1980}, with
\begin{equation}
\label{phiLtrace}
    \Tr \phi(\cL)
    =
    2\sum_{k=1}^{+\infty} \phi(i\omega_k),
\end{equation}
where the series is absolutely convergent in view of the
asymptotic relation $\phi(i\omega) \sim -\frac{1}{2}\phi''(0)\omega^2 $, as $\omega\to 0$.

 \section{Quantum Karhunen-Loeve Expansion Using the Commutator Kernel Eigenbasis}
\label{sec:QKL}

With the quantum process $X$ over the time interval $[0,T]$, we
associate the following sequence of quantum variables on the Hilbert space $\fH$:
\begin{equation}
\label{gammak}
  \gamma_k
   :=
  \frac{1}{\sqrt{\omega_k}}
  \int_0^T
  f_k(t)^\rT
  X(t)
  \rd t
  =
  \xi_k + i\eta_k,
\end{equation}
where
\begin{equation}
\label{xik_etak}
    \xi_k
    :=
    \Re
    \gamma_k
    =
  \frac{1}{\sqrt{\omega_k}}
  \int_0^T
  \varphi_k(t)^\rT
  X(t)
  \rd t,
  \qquad
    \eta_k
    :=
    \Im
    \gamma_k
    =
  \frac{1}{\sqrt{\omega_k}}
  \int_0^T
  \psi_k(t)^\rT
  X(t)
  \rd t,
    \qquad
   k \in \mN,
\end{equation}
and the integrals are understood in the weak sense\cite{Y_1980}.  Here,
the orthonormal eigenfunctions (\ref{eigff}) of the operator $\cL$ in (\ref{cL}) with the two-point CCR kernel $\Lambda$ from (\ref{XXcomm}) are used together with (\ref{reim}) and  the eigenfrequencies $\omega_k$. Also, the real and imaginary parts are extended from $\mC$
to quantum variables as
$\Re z := \frac{1}{2} (z + z^\dagger)$,
$\Im z := \frac{1}{2i} (z - z^\dagger)$, where $(\cdot)^\dagger$ is the operator adjoint. Accordingly,
\begin{equation}
\label{gammak+}
  \gamma_k^\dagger
  =
  \frac{1}{\sqrt{\omega_k}}
  \int_0^T
  f_k(t)^*
  X(t)
  \rd t
  =
  \xi_k - i\eta_k,
\end{equation}
in view of (\ref{gammak}),
since the quantum variables $X_1(t), \ldots, X_n(t)$ and $\xi_k$, $\eta_k$ in (\ref{xik_etak}) are self-adjoint.

\begin{lemma}
\label{lem:posmom}
Suppose the operator $\cL$ in (\ref{cL}) with the continuous commutator kernel $\Lambda$ of the quantum process $X$ in (\ref{XXcomm}) over a given time horizon $T>0$ satisfies (\ref{non0}). Then the quantum variables $\gamma_k$, associated by (\ref{gammak}), (\ref{xik_etak}) with the process $X$ and the eigenbasis (\ref{eigff})--(\ref{reimort})   of the operator $\cL$, satisfy
\begin{equation}
\label{gcomm}
  [\gamma_j, \gamma_k^\dagger] = 2\delta_{jk},
  \quad
  [\gamma_j, \gamma_k] = 0,
  \quad
  [\gamma_j^\dagger, \gamma_k^\dagger] = 0,
          \qquad
        j,k\in \mN.
\end{equation}
\hfill$\square$
\end{lemma}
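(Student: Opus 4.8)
The plan is to compute all three commutators in (\ref{gcomm}) directly from the defining weak-integral representations (\ref{gammak}), (\ref{gammak+}) of $\gamma_k$ and $\gamma_k^\dagger$, reducing each to the eigenstructure of $\cL$ from (\ref{cL}) together with the orthonormality relations (\ref{ffff}). First I would use the bilinearity of the commutator and the weak-integral interpretation of (\ref{gammak}), (\ref{gammak+}) to pull the bracket inside the double integral; writing $\gamma_k^\dagger$ as the smeared variable against $\overline{f_k}$ (since $f_k(t)^* = \overline{f_k(t)}^\rT$), this gives, for the mixed commutator,
\[
    [\gamma_j, \gamma_k^\dagger]
    =
    \frac{1}{\sqrt{\omega_j\omega_k}}
    \int_0^T\int_0^T
    f_j(s)^\rT
    [X(s), X(t)^\rT]
    \,\overline{f_k(t)}
    \,\rd s\,\rd t,
\]
where the vector-matrix arrangement absorbs the sum over the $n$ scalar entries $[X_a(s),X_b(t)]$.

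Next I would substitute the two-point CCRs (\ref{XXcomm}), identifying $\Lambda(s,t)\ox\cI_\fH$ with $\Lambda(s,t)$, so that the inner integral over $t$ becomes the action of $\cL$ on an eigenfunction. The crucial point is that $\overline{f_k}$ and $f_k$ are eigenfunctions of $\cL$ with eigenvalues $-i\omega_k$ and $+i\omega_k$, whence $\int_0^T \Lambda(s,t)\overline{f_k(t)}\,\rd t = \cL(\overline{f_k})(s) = -i\omega_k\overline{f_k(s)}$ and $\int_0^T \Lambda(s,t)f_k(t)\,\rd t = i\omega_k f_k(s)$. After this the remaining single integral over $s$ is an inner product of eigenfunctions: $\int_0^T f_j(s)^\rT\overline{f_k(s)}\,\rd s = \bra f_k, f_j\ket$ for the first commutator, and $\int_0^T f_j(s)^\rT f_k(s)\,\rd s = \bra \overline{f_j}, f_k\ket$ for $[\gamma_j,\gamma_k]$, where the transpose-versus-conjugate bookkeeping uses that the scalar products commute.

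Then I would invoke (\ref{ffff}): the identity $\bra f_k, f_j\ket = \delta_{jk}$ yields $[\gamma_j,\gamma_k^\dagger] = \frac{2i(-i\omega_k)}{\sqrt{\omega_j\omega_k}}\delta_{jk} = 2\delta_{jk}$, the factor $\omega_k/\sqrt{\omega_j\omega_k}$ collapsing to $1$ on the diagonal; while $\bra\overline{f_j}, f_k\ket = 0$ (which follows from $\bra f_k,\overline{f_j}\ket=0$ by conjugate symmetry) forces $[\gamma_j,\gamma_k] = 0$. The final identity $[\gamma_j^\dagger,\gamma_k^\dagger]=0$ then follows without a new computation by taking the operator adjoint of $[\gamma_j,\gamma_k]=0$ and using $[A,B]^\dagger = -[A^\dagger,B^\dagger]$.

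The main obstacle I anticipate is analytic rather than algebraic: rigorously justifying the interchange of the commutator bracket with the two weak integrals defining $\gamma_j$ and $\gamma_k^\dagger$. This relies on the weak continuity of $t\mapsto X(t)$ and on the two-point CCRs (\ref{XXcomm}), which hold only on a common dense domain, extending to the smeared variables so that bilinearity may be applied under the integral sign; the continuity of the kernel $\Lambda$ and the resulting compactness of $\cL$ noted in Section~\ref{sec:proc} should render the double integral well defined and legitimise this step. A secondary care point is keeping the transpose and conjugate-transpose conventions consistent when matching the scalar double integrals with the inner products $\bra\cdot,\cdot\ket$, since $\Lambda$ is real-valued while the eigenfunctions $f_k$ are complex.
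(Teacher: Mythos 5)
Your proposal is correct and follows essentially the same route as the paper's proof: pull the commutator inside the double weak integral, substitute the two-point CCR kernel $2i\Lambda(s,t)$, use the eigenrelations $\cL(f_k)=i\omega_k f_k$, $\cL(\overline{f_k})=-i\omega_k\overline{f_k}$, and conclude via the orthonormality relations (\ref{ffff}), with $[\gamma_j^\dagger,\gamma_k^\dagger]=-[\gamma_j,\gamma_k]^\dagger=0$ obtained by taking adjoints. The only cosmetic difference is that the paper packages the double integral as the inner product $\bra\overline{f_j},\cL(\overline{f_k})\ket$ rather than evaluating the inner $t$-integral pointwise, and your explicit attention to justifying the interchange of bracket and weak integrals is a point the paper passes over silently.
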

\begin{proof}
In view of (\ref{XXcomm}), (\ref{cL}), (\ref{ffff}), it follows from (\ref{gammak}), (\ref{gammak+}) that
$    [\gamma_j,\gamma_k^\dagger]
    =
  \frac{1}{\sqrt{\omega_j\omega_k}}
  \int_{[0,T]^2}
  f_j(s)^\rT
  [X(s),X(t)^\rT]
  \overline{f_k(t)}
  \rd s\rd t
  =
  \frac{2i}{\sqrt{\omega_j\omega_k}}
  \int_{[0,T]^2}
  f_j(s)^\rT
  \Lambda(s,t)
  \overline{f_k(t)}
  \rd s\rd t
  =
  \frac{2i}{\sqrt{\omega_j\omega_k}}
  \bra
    \overline{f_j},
    \cL(\overline{f_k})
  \ket
 =
  2
  \sqrt{\frac{\omega_k}{\omega_j}}
  \bra
    \overline{f_j},
    \overline{f_k}
  \ket
  =
        2\delta_{jk}
$ for all         $j,k\in \mN$, which establishes the first equality in (\ref{gcomm}).
By a similar reasoning,
$    [\gamma_j,\gamma_k]
    =
  \frac{1}{\sqrt{\omega_j\omega_k}}
  \int_{[0,T]^2}
  f_j(s)^\rT
  [X(s),X(t)^\rT]
  f_k(t)
  \rd s\rd t
  =
  \frac{2i}{\sqrt{\omega_j\omega_k}}
  \int_{[0,T]^2}
  f_j(s)^\rT
  \Lambda(s,t)
  f_k(t)
  \rd s\rd t
  =
  \frac{2i}{\sqrt{\omega_j\omega_k}}
  \bra
    \overline{f_j},
    \cL(f_k)
  \ket
 =
  -2
  \sqrt{\frac{\omega_k}{\omega_j}}
  \bra
    \overline{f_j},
    f_k
  \ket
  =
  0
$,
and hence,
$    [\gamma_j^\dagger,\gamma_k^\dagger]
    =
    -
    [\gamma_j,\gamma_k]^\dagger
    =0
$
for all $j,k\in \mN$,
which completes the proof of (\ref{gcomm}).
\end{proof}

The CCRs (\ref{gcomm}) show that $\gamma_k$ in (\ref{gammak}) are organised as pairwise commuting annihilation operators (with $\gamma_k^\dagger$ in (\ref{gammak+}) the corresponding  creation operators), so that the self-adjoint quantum variables $\xi_k$, $\eta_k$ in (\ref{xik_etak}) are conjugate pairs of the quantum mechanical  positions and momenta mentioned in Sec.~\ref{sec:proc}, with
\begin{equation}
\label{xietacomm}
  [\xi_j,\xi_k] = 0,
  \quad
  [\eta_j,\eta_k] = 0,
  \quad
  [\xi_j,\eta_k] = i \delta_{jk},
  \qquad
  j,k\in \mN.
\end{equation}
Accordingly, the vectors
\begin{equation}
\label{zetak}
  \zeta_k
  :=
  {\begin{bmatrix}
    \xi_k\\
    \eta_k
  \end{bmatrix}}
\end{equation}
commute between each other (for different $k$) and have a common CCR matrix $\frac{1}{2}\bJ$, since
$
    [\zeta_j,\zeta_k^\rT]
    =
    {\small\begin{bmatrix}
      [\xi_j,\xi_k] & [\xi_j,\eta_k]\\
      [\eta_j,\xi_k] & [\eta_j,\eta_k]
    \end{bmatrix}}
    =
    i \delta_{jk}
    \bJ
$,
where (\ref{xietacomm}) is used together with (\ref{bJ}). In view of the orthonormality of the  eigenbasis (\ref{eigff}), the process $X$ can be recovered from (\ref{gammak}), (\ref{gammak+}) as
\begin{align}
\nonumber
    X(t)
    &=
    \sum_{k=1}^{+\infty}
    \sqrt{\omega_k}
    (\overline{f_k(t)}
    \gamma_k
    +
    f_k(t)
    \gamma_k^\dagger
    ) =
    2
    \sum_{k=1}^{+\infty}
    \sqrt{\omega_k}
    \Re
    (    f_k(t)
    \gamma_k^\dagger
    )\\
\label{XQKL}
    & =
    2
    \sum_{k=1}^{+\infty}
    \sqrt{\omega_k}
    (\varphi_k(t)\xi_k + \psi_k(t)\eta_k)
    =
    2
    \sum_{k=1}^{+\infty}
    \sqrt{\omega_k}
    h_k(t)
    \zeta_k,
    \qquad
    0\< t\< T,
\end{align}
where the coefficients $\sqrt{\omega_k}\zeta_k$ involve the pairs of noncommuting positions and momenta $\xi_k$, $\eta_k$ from (\ref{zetak}), and (\ref{hk}) is used. The representation (\ref{XQKL}) is a quantum Karhunen-Loeve (QKL)
expansion of the process $X$ over the eigenbasis of its two-point CCR kernel (whose particular one-mode version with $n=2$ was discussed in Ref.~\refcite{VJP_2019}).

\section{Quadratic-Exponential Functionals  and the QKL Expansion}
\label{sec:quadro}

Consider an integral-of-quadratic function
of the quantum process $X$ over the time interval $[0,T]$:
\begin{equation}
\label{Q}
    Q
    :=
    \int_0^T
    X(t)^{\rT} X(t)
    \rd t
    =
    \int_0^T
    \sum_{k=1}^n
    X_k(t)^2
    \rd t,
\end{equation}
which is  a
positive semi-definite self-adjoint quantum variable.
A more general form $\int_0^T X(t)^{\rT} \Pi X(t) \rd t$
of such a function, specified by a real positive  semi-definite symmetric matrix $\Pi$ of order $n$, is reduced to (\ref{Q}) by replacing $X$ with its weighted version $\sqrt{\Pi} X$  which has the two-point CCR kernel $\sqrt{\Pi} \Lambda \sqrt{\Pi}$.

The minimization of the quadratic cost functional $\bE Q$ is used as
a performance criterion in linear quadratic Gaussian control and filtering problems\cite{MJ_2012,NJP_2009,ZJ_2012} for quantum systems, where $X$ consists of those system variables whose moderate mean square values are preferable. Here,
\begin{equation}
\label{bE}
    \bE \zeta
    :=
    \Tr(\rho \zeta)
\end{equation}
is the expectation of a quantum variable $\zeta$ over an underlying density operator $\rho$ on the space $\fH$. A more severe penalty on $Q$ in (\ref{Q}) is imposed by the quadratic-exponential functional\cite{VPJ_2018a} (QEF)
\begin{equation}
\label{Xi}
    \Xi
    :=
    \bE \re^{\frac{\theta}{2} Q}
\end{equation}
as a risk-sensitive cost, where $\theta>0$ is a risk sensitivity parameter. The QEF $\Xi$ allows the large deviations of $X$ to be quantified in terms of exponential upper bounds\cite{VPJ_2018a} on the tail probabilities for $Q$:
\begin{equation}
\label{tail}
    \bP([2\alpha, +\infty))
    \<
    \re^{-
    \sup_{\theta\>0}
    (
        \alpha\theta
        -
        \ln \Xi
    )},
    \qquad
    \alpha > 0,
\end{equation}
where $\bP(\cdot)$ is the probability distribution\cite{H_2001}  of the self-adjoint quantum variable $Q$, and the supremum is the Legendre transformation of $\ln \Xi$ as a function of $\theta\> 0$ (which vanishes at $\theta=0$ and is finite for sufficiently small $\theta>0$). Another useful property of $\Xi$ is its relation to upper bounds on $\bE Q$  in the presence of statistical uncertainty described in terms of  quantum relative entropy\cite{OW_2010,VPJ_2018b,YB_2009}, which will be discussed in Sec.~\ref{sec:worst} for quantum harmonic oscillators. These connections make the QEF $\Xi$ relevant to robustness properties of such systems  in the context of quantum control applications.

We will now apply the QKL expansion (\ref{XQKL})  of the quantum process $X$ from Sec.~\ref{sec:QKL} to computing the QEF. To this end, we will first represent the quadratic-exponential function $\re^{\frac{\theta}{2}Q}$ of $X$ in (\ref{Xi})   in a form which does not depend on a particular quantum state $\rho$.

\begin{lemma}
\label{lem:rand}
Under the conditions of Lemma~\ref{lem:posmom},   the quadratic-exponential function of the quantum process $X$ over the time interval $[0,T]$ in (\ref{Xi}) can be represented as
\begin{equation}
\label{Qexp}
  \re^{\frac{\theta}{2} Q}
  =
  \re^{-C}
  \bM \re^{\Sigma}.
\end{equation}
Here,
\begin{equation}
\label{C}
  C
  :=
  \frac{1}{2}
  \Tr \ln\cos (\theta \cL)
\end{equation}
is a nonnegative quantity, associated with the eigenfrequencies $\omega_k$ of the operator $\cL$ in (\ref{cL}). Also,
\begin{equation}
\label{Sig}
    \Sigma
    :=
  \sum_{k=1}^{+\infty}
  \sigma_k(\alpha_k \xi_k + \beta_k\eta_k)
\end{equation}
is a self-adjoint quantum variable which depends parametrically on
\begin{equation}
\label{uvvarsk}
    \sigma_k :=  \sqrt{2\tanh (\theta \omega_k)},
    \qquad
    k \in \mN,
\end{equation}
and on
mutually independent  standard normal (Gaussian with zero mean and unit variance) random variables $\alpha_k$, $\beta_k$, with $\bM(\cdot)$ in (\ref{Qexp}) denoting the classical expectation over
these random variables.
  \hfill$\square$
\end{lemma}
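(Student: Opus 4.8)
The plan is to exploit the QKL expansion (\ref{XQKL}) to diagonalise the quadratic form $Q$ into decoupled one-mode contributions, and then to reduce the whole statement to a single-mode randomised identity, which is precisely the content of \ref{sec:A}. First I would substitute (\ref{XQKL}), in the form $X(t) = 2\sum_k \sqrt{\omega_k}\, h_k(t)\zeta_k$, into (\ref{Q}) and use the matrix orthonormality relations (\ref{hh}), equivalently (\ref{reimort}). Since each $\xi_k,\eta_k$ commutes with the scalar functions $\varphi_k,\psi_k$, the time integration collapses the double sum through $\int_0^T h_j(t)^\rT h_k(t)\rd t = \frac12\delta_{jk}I_2$, so that the cross terms drop out and
\begin{equation*}
    Q = 2\sum_{k=1}^{+\infty}\omega_k(\xi_k^2+\eta_k^2).
\end{equation*}
This is the key simplification: the commutator-kernel eigenbasis turns $Q$ into a weighted sum of one-mode ``energies'', each depending only on the conjugate pair $\zeta_k$ in (\ref{zetak}).

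Because the pairs $\zeta_k$ commute across different $k$, as recorded in (\ref{xietacomm}), the exponential factorises into a product of commuting one-mode factors,
\begin{equation*}
    \re^{\frac{\theta}{2}Q} = \prod_{k=1}^{+\infty}\re^{\theta\omega_k(\xi_k^2+\eta_k^2)}.
\end{equation*}
To each factor I would apply the elementary randomised representation of \ref{sec:A} with $\lambda := \theta\omega_k>0$, which expresses $\re^{\lambda(\xi_k^2+\eta_k^2)}$ as $\frac{1}{\cosh\lambda}\bM_k\re^{\sqrt{2\tanh\lambda}\,(\alpha_k\xi_k+\beta_k\eta_k)}$, where $\bM_k$ averages over an independent standard normal pair $(\alpha_k,\beta_k)$; this matches the choice $\sigma_k=\sqrt{2\tanh(\theta\omega_k)}$ in (\ref{uvvarsk}). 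Collecting the scalar prefactors gives $\prod_k\frac{1}{\cosh(\theta\omega_k)} = \re^{-\sum_k\ln\cosh(\theta\omega_k)}$, and I would identify this exponent with $C$ in (\ref{C}) by applying the trace formula (\ref{phiLtrace}) to the symmetric function $\phi(z):=\ln\cos(\theta z)$, for which $\phi(0)=0$ and $\phi(i\omega_k)=\ln\cos(i\theta\omega_k)=\ln\cosh(\theta\omega_k)$, whence $\frac12\Tr\ln\cos(\theta\cL) = \sum_k\ln\cosh(\theta\omega_k)$. Since $\cosh(\theta\omega_k)\geq 1$, this also yields $C\geq 0$.

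The genuine technical work lies in passing from the finite-mode product to the infinite one and in interchanging the infinite product with the classical average $\bM$. I would first record absolute convergence of the relevant series: by (\ref{HSnorm}) one has $\sum_k\omega_k^2<+\infty$, and since $\ln\cosh(\theta\omega_k)\sim\frac12\theta^2\omega_k^2$ and $\tanh(\theta\omega_k)\sim\theta\omega_k\to 0$ as $\omega_k\to 0$, both $C$ and the parameters $\sigma_k$ are well controlled, so that $\Sigma$ in (\ref{Sig}) is a meaningful self-adjoint operator-valued series and $\re^{-C}$ is a finite positive number. The main obstacle is then the Fubini-type justification $\prod_k\bM_k(\cdot) = \bM\prod_k(\cdot)$: the one-mode factors are exponentials of \emph{unbounded} operators, so the interchange must be established on a suitable common dense domain, using the independence of the pairs $(\alpha_k,\beta_k)$ across modes together with the mutual commutativity of the $\zeta_k$. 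I would handle this by truncating to the first $N$ modes, where the identity is a finite product of the Appendix identities and the interchange is elementary, and then pass to the limit $N\to+\infty$, controlling the tail contributions via the summability noted above. The resulting limit gives $\re^{\frac{\theta}{2}Q} = \re^{-C}\bM\re^{\Sigma}$, which is (\ref{Qexp}).
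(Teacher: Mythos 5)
Your proposal follows essentially the same route as the paper's own proof: substitute the QKL expansion (\ref{XQKL}) into (\ref{Q}) and use (\ref{hh}) to diagonalise $Q=2\sum_k\omega_k(\xi_k^2+\eta_k^2)$, factorise the exponential over the commuting modes, apply the Appendix identity (Theorem~\ref{th:fact}) mode by mode with $\sigma_k=\sqrt{2\tanh(\theta\omega_k)}$, and identify the prefactor with $C$ via (\ref{phiLtrace}) applied to $\phi(z)=\ln\cos(\theta z)$. The only difference is that you explicitly flag and sketch the truncation argument needed to interchange the infinite product with the classical expectation $\bM$, a point the paper's proof passes over silently (invoking only independence and commutativity), so your write-up is, if anything, slightly more careful on that technicality.
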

\begin{proof} By substituting the QKL expansion (\ref{XQKL}) into (\ref{Q}) and using (\ref{hh}), it follows that
\begin{equation}
\label{Q1}
      Q
     =
    4
    \sum_{j,k=1}^{+\infty}
    \sqrt{\omega_j\omega_k}\,
    \zeta_j^\rT
    \int_0^T
    h_j(t)^\rT
    h_k(t)
    \rd t
    \zeta_k
    = 2
    \sum_{k=1}^{+\infty}
    \omega_k
    \zeta_k^\rT
    \zeta_k
    =
    2
    \sum_{k=1}^{+\infty}
    \omega_k
    (\xi_k^2 + \eta_k^2).
\end{equation}
This reduction to a single series is due to the mutual orthogonality (\ref{reimort})  of the real and imaginary parts (\ref{reim}) of the eigenfunctions  (\ref{eigff}). A combination of (\ref{Q1}) with
Theorem~\ref{th:fact} leads to
$
    \re^{\frac{\theta}{2}Q}
    =
    \prod_{k=1}^{+\infty}
    \re^{\theta \omega_k (\xi_k^2 + \eta_k^2)}
    =
    \prod_{k=1}^{+\infty}
    \Big(
    \frac{1}{\cosh (\theta \omega_k)}
    \bM
    \re^{\sigma_k(\alpha_k \xi_k + \beta_k \eta_k)}
    \Big)
     =
    \re^{-\sum_{k=1}^{+\infty} \ln \cosh (\theta\omega_k)}
    \bM
    \re^{\sum_{k=1}^{+\infty} \sigma_k(\alpha_k \xi_k + \beta_k \eta_k) }
$,
which establishes (\ref{Qexp}) in view of (\ref{C})--(\ref{uvvarsk}).
Here, use is also made of the commutativity between the position-momentum pairs $(\xi_k,\eta_k)$ for different $k$.
The quantity $C$ in (\ref{C}) comes from the relations
\begin{equation}
\label{C1}
    \sum_{k=1}^{+\infty}\ln \cosh (\theta \omega_k)
  =
  \sum_{k=1}^{+\infty}\ln \cos(i \theta \omega_k)
  =
  \frac{1}{2}
  \Tr \ln\cos (\theta \cL)
\end{equation}
which take into account  the symmetry of the  functions $\cos$, $\cosh$ and the double multiplicity of the eigenfrequencies $\omega_k$. The operator trace in (\ref{C1}) inherits finiteness from the Hilbert-Schmidt norm of $\cL$ in (\ref{HSnorm}) since it is a particular case of (\ref{phiLtrace}) with a symmetric function $\phi(z):= \ln\cos (\theta z)$ satisfying $\phi(i\omega)= \ln \cosh (\theta \omega)\sim \frac{1}{2}\theta^2\omega^2 $, as $\omega\to 0$.
\end{proof}

\section{Auxiliary Classical Gaussian Random Process}
\label{sec:classproc}

The parameter randomisation in the form of averaging over the classical random sequence of $\alpha_k$, $\beta_k$ in Lemma~\ref{lem:rand} can be carried over  into continuous time. To this end, substitution of (\ref{xik_etak}) into (\ref{Sig}) leads to
\begin{equation}
\label{SZ}
    \Sigma
    =
  \sum_{k=1}^{+\infty}
    \frac{\sigma_k}{\sqrt{\omega_k}}
    \int_0^T
  (\alpha_k
  \varphi_k(t)
  +
  \beta_k
  \psi_k(t)
  )^\rT
  X(t)
  \rd t
  =
    \sqrt{\theta}
    \int_0^T
    X(t)^\rT
    \rd Z(t),
\end{equation}
where, in view of (\ref{uvvarsk}),
\begin{equation}
\label{Z}
    Z(t)
    :=
    \sum_{k=1}^{+\infty}
    \sqrt{2\tanhc (\theta\omega_k)}
    \int_0^t
    (\alpha_k \varphi_k(\tau)
    +
    \beta_k \psi_k(\tau)    )
    \rd \tau
    =
    \sum_{k=1}^{+\infty}
    \sqrt{\tanhc (\theta\omega_k)}\,
    H_k(t)
    {\begin{bmatrix}
      \alpha_k\\
      \beta_k
    \end{bmatrix}}
\end{equation}
on the time interval     $0\< t\< T$ is a classical $\mR^n$-valued zero-mean Gaussian random process  satisfying
\begin{equation}
\label{Z0}
    Z(0)=0.
\end{equation}
Here,  $H_k: [0,T] \to \mR^{n\x 2}$ are absolutely continuous functions,  associated with (\ref{hk}) by
\begin{equation}
\label{Hk}
  H_k(t)
  :=
  \sqrt{2}\int_0^t h_k(\tau) \rd \tau
  =
  \sqrt{2}\int_0^T \chi_{[0,t]}(\tau) h_k(\tau) \rd \tau,
  \qquad
  0\< t \< T,
\end{equation}
where $\chi_S(\cdot)$ is the indicator function of a set $S$,
and $\tanhc z := \tanc (-iz)$ is a hyperbolic version of $\tanc z := \frac{\tan z}{z}$ (extended to $1$ at $z=0$ by continuity). The covariance structure of the process (\ref{Z}) is as follows.

\begin{lemma}
\label{lem:Zcov}
Under the conditions of Lemmas~\ref{lem:posmom} and \ref{lem:rand},
the covariance operator of the incremented process $Z$ in (\ref{Z}) on the time interval $[0,T]$ is related to  $\cL$ in (\ref{cL}) by
\begin{equation}
\label{cK}
    \cK
    =
    \tanc (\theta\cL)
    =
    \tanhc(i\theta \cL)
\end{equation}
and satisfies
\begin{equation}
\label{cK01}
  0 \prec \cK \prec \cI,
\end{equation}
where $\cI$ is the identity operator on $L^2([0,T], \mR^n)$.\hfill$\square$
\end{lemma}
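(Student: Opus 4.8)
The plan is to compute the covariance kernel of the increments of the process $Z$ directly from the expansion (\ref{Z}) and to identify it with the kernel of $\tanc(\theta\cL)$ furnished by the functional-calculus formula (\ref{phiLambdaseries}).

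First I would differentiate (\ref{Z}), using $\dot{H}_k = \sqrt{2}\, h_k$ from (\ref{Hk}), to obtain $\dot{Z}(t) = \sum_{k} \sqrt{2\tanhc(\theta\omega_k)}\, h_k(t) w_k$, where $w_k := (\alpha_k,\beta_k)^\rT$. For $g,h\in L^2([0,T],\mR^n)$ the covariance operator $\cK$ is specified by the quadratic form $\bra g, \cK h\ket = \bM\big[\big(\int_0^T g^\rT\rd Z\big)\big(\int_0^T h^\rT\rd Z\big)\big]$. Substituting the expansion of $\dot Z$ and using that the $w_k$ are mutually independent with $\bM[w_jw_k^\rT] = \delta_{jk} I_2$, the double series collapses to a single series, which shows that $\cK$ is the integral operator on $L^2([0,T],\mR^n)$ with kernel $K(s,t) = 2\sum_{k=1}^{+\infty} \tanhc(\theta\omega_k)\, h_k(s) h_k(t)^\rT$.

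Next I would match this with (\ref{phiLambdaseries}) for $\phi(z) := \tanc(\theta z)$. Because $\tan$ is odd, $\phi$ is even, so its antisymmetric part $\phi_-$ vanishes, the $\bJ$-term in (\ref{phiLambdaseries}) drops out, and the kernel of $\phi(\cL)$ reduces to $2\sum_k \tanc(i\theta\omega_k)\, h_k(s) h_k(t)^\rT$. The identity $\tan(ix)=i\tanh x$ gives $\tanc(i\theta\omega_k)=\tanhc(\theta\omega_k)$, so this equals $K(s,t)$ and yields $\cK = \tanc(\theta\cL)$; the second equality in (\ref{cK}) is merely the definition $\tanhc(z)=\tanc(-iz)$. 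Before invoking the functional calculus I would verify that $\phi$ is holomorphic near the spectrum $D = i\|\cL\|[-1,1]$ of $\cL$: the singularities of $z\mapsto \tanc(\theta z)$ are the poles of $\tan(\theta z)$ at $z=(\pi/2+\pi m)/\theta$, $m\in\mZ$, all lying on the real axis at distance $\> \pi/(2\theta)$ from $D\subset i\mR$, while the apparent singularity at $z=0$ is removable with $\tanc(0)=1$.

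Finally, for the bounds (\ref{cK01}), I would note that $\cK$ has the eigenvalues $\tanhc(\theta\omega_k)=\tanh(\theta\omega_k)/(\theta\omega_k)$, each of double multiplicity. For every $x=\theta\omega_k>0$ the elementary inequalities $0<\tanh x<x$ give $0<\tanhc(\theta\omega_k)<1$; since condition (\ref{non0}) makes the orthonormal system $\{\sqrt2\,\varphi_k,\sqrt2\,\psi_k\}$ from (\ref{reim}), (\ref{reimort}) complete in $L^2([0,T],\mR^n)$ (cf. the resolution of the identity (\ref{cI})), it follows that $0<\bra f,\cK f\ket<\|f\|^2$ for all $f\ne 0$, that is $0\prec\cK\prec\cI$. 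The main obstacle is justifying the interchange of summation with expectation and integration when passing to the single-series kernel and identifying it with the bounded operator $\tanc(\theta\cL)$; this convergence is controlled by the Hilbert-Schmidt finiteness (\ref{HSnorm}) of $\cL$ together with the boundedness of $\tanhc$ on $[0,+\infty)$, which makes the kernel series converge in $L^2([0,T]^2)$.
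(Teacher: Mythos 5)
Your proposal is correct and follows essentially the same route as the paper's own proof: you compute the quadratic form of the increments of $Z$ using the independence of the Gaussian coefficients, collapse the double series to the kernel $2\sum_k \tanhc(\theta\omega_k)h_k(s)h_k(t)^\rT$, identify it with the kernel of $\tanc(\theta\cL)$ via (\ref{phiLambdaseries}) with the symmetric function $\phi(z)=\tanc(\theta z)$, and obtain (\ref{cK01}) from the eigenvalues $\tanhc(\theta\omega_k)\in(0,1)$. Your added verifications (holomorphy of $\tanc(\theta z)$ near the purely imaginary spectrum, completeness of $\{\sqrt{2}\varphi_k,\sqrt{2}\psi_k\}$ under (\ref{non0}) to get the strict form inequalities) are sound refinements of details the paper leaves implicit.
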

\begin{proof}
Since $\alpha_k$, $\beta_k$ are mutually independent standard  normal random variables, and hence, $\bM\Big({\small\begin{bmatrix}
  \alpha_j\\
  \beta_j
\end{bmatrix}}\begin{bmatrix}
  \alpha_k &   \beta_k
\end{bmatrix}\Big) = \delta_{jk}I_2$,  it follows from (\ref{Z}), (\ref{Hk}) that
\begin{align}
\nonumber
    \bM
    \Big(
    \Big(\int_0^T f(t)^\rT \rd Z(t)\Big)^2
    \Big)
    & =
    2
    \bM
    \Big(
    \Big(
    \sum_{k=1}^{+\infty}
    \sqrt{\tanhc (\theta\omega_k)}
    \int_0^T
    f(t)^\rT h_k(t)
    \rd t
    \begin{bmatrix}
      \alpha_k\\
      \beta_k
    \end{bmatrix}
    \Big)^2
    \Big)\\
\nonumber
    & =
    2
    \sum_{k=1}^{+\infty}
    \tanhc (\theta\omega_k)
    \int_{[0,T]^2}
    f(s)^\rT
    h_k(s)
    h_k(t)^\rT
    f(t)
    \rd s\rd t\\
\label{cKdef}
    & =
    2
    \sum_{k=1}^{+\infty}
    \tanhc (\theta\omega_k)
    \Big|
    \int_0^T
    h_k(t)^\rT
    f(t)
    \rd t
    \Big|^2=
    \bra f, \cK(f)\ket
\end{align}
for any $f \in L^2([0,T], \mR^n)$. Here, $\cK$ is the
covariance operator of the incremented process $Z$ with the kernel function
\begin{equation}
\label{Kst}
    K(s,t)
    :=
    2
    \sum_{k=1}^{+\infty}
    \tanhc (\theta\omega_k)
    h_k(s)
    h_k(t)^\rT,
    \qquad
    0 \< s,t \< T.
\end{equation}
This function can be represented in the form  (\ref{phiLambdaseries}) with a symmetric function $\phi(z):= \tanc(\theta z)$, and hence, (\ref{Kst}) is the kernel function of the operator $\tanc(\theta \cL)$, which implies (\ref{cK}). The inequalities (\ref{cK01}) follow from the fact that the eigenvalues (of double multiplicity) of the operator (\ref{cK}) satisfy $\tanc(\pm i\theta \omega_k) = \tanhc(\theta \omega_k) \in (0,1)$. These inequalities are closely related to the identity $    2
    \sum_{k=1}^{+\infty}
    \big|
    \int_0^T
    h_k(t)^\rT
    f(t)
    \rd t
    \big|^2 = \|f\|^2
$ for any $f \in L^2([0,T], \mR^n)$, whose terms are weighted by the eigenvalues of $\cK$ in (\ref{cKdef}).
\end{proof}

The operator inequalities (\ref{cK01}) are equivalent to
\begin{equation}
\label{iso}
    0 <
    \bra f, \cK(f)\ket
    <
    \|f\|^2
    =
    \bM\Big(\Big(\int_0^T f(t)^\rT \rd V(t)\Big)^2\Big)
\end{equation}
for any $f \in L^2([0,T], \mR^n)\setminus \{0\}$, where $V$ is a standard Wiener process\cite{KS_1991} in $\mR^n$ (whose increments have the identity covariance operator $\cI$), so that the last equality in (\ref{iso}) is the Ito isometry.
Since the functions (\ref{reim}) satisfy  (\ref{hh}), a combination of (\ref{Hk}) with the Plancherel identity leads to
\begin{equation}
\label{HH}
  \sum_{k=1}^{+\infty}
  H_k(s)H_k(t)^\rT
  =
  \bra
    \chi_{[0,s]},
    \chi_{[0,t]}
  \ket
  I_n
  =
  \min (s,t)I_n,
  \qquad
  0\< s,t\< T,
\end{equation}
which is the covariance function of the Wiener process $V$.  The process $Z$ in (\ref{Z}) has a different covariance function
\begin{equation}
\label{K}
  \bM(Z(s)Z(t)^\rT)
  =
  \sum_{k=1}^{+\infty}
  \tanhc (\theta\omega_k)\,
  H_k(s)H_k(t)^\rT,
  \qquad
  0\< s,t\< T,
\end{equation}
which is majorised by (\ref{HH}) in the sense that
$
    \bM (\bra f, Z\ket^2)
     =
    \int_{[0,T]^2}
    f(s)^\rT
    \bM(Z(s)Z(t)^\rT)
    f(t)
    \rd s\rd t
     =
      \sum_{k=1}^{+\infty}
  \tanhc (\theta\omega_k)
  \big|
  \int_0^T
  H_k(t)^\rT f(t)
  \rd t
  \big|^2
  \<
      \sum_{k=1}^{+\infty}
  \big|
  \int_0^T
  H_k(t)^\rT f(t)
  \rd t
  \big|^2
  =
    \int_{[0,T]^2}
    \min(s,t)f(s)^\rT f(t)\rd s\rd t
    =
    \bM (\bra f, V\ket^2)
$
for any $f \in L^2([0,T],\mR^n)$ since, as mentioned above, the function $\tanhc$ does not exceed $1$ on the real axis.

By using (\ref{SZ}), the randomised representation (\ref{Qexp}) can now be recast in continuous time as
\begin{equation}
\label{eQZ}
  \re^{\frac{\theta}{2}Q}
  =
  \re^{- \frac{1}{2}\Tr \ln \cos (\theta\cL)}
    \bM \re^{\sqrt{\theta}\int_0^T X(t)^\rT \rd Z(t) },
\end{equation}
where the averaging is over the classical Gaussian random process $Z$ from (\ref{Z}).
The general structure of this representation is reminiscent of the Doleans-Dade exponential\cite{DD_1970}.  However, the covariance function (\ref{K}) of the process $Z$ in (\ref{eQZ})
is associated with the commutator kernel $\Lambda$ in  (\ref{XXcomm}), and so also is the correction factor $\re^{- \frac{1}{2}\Tr \ln \cos (\theta\cL)}$. The latter (also see the proof of Theorem~\ref{th:fact}) comes from the Weyl CCRs (rather than the martingale property of the  Radon-Nikodym density process in Girsanov's theorem\cite{G_1960} on absolutely continuous change of measure for classical diffusion processes). In the limiting
classical case, when the commutator kernel $\Lambda$  vanishes (and so do the linear operator $\cL$ in (\ref{cL}) and its eigenfrequencies $\omega_k$), the covariance function (\ref{K}) reduces to (\ref{HH}), and $Z$ becomes the standard Wiener process.

The validity of (\ref{eQZ}) does not rely on a particular quantum state and employs only the commutation structure of the process $X$. Nevertheless, the relevance of this representation to computing the QEF (\ref{Xi})
is clarified by the following theorem which summarises the previous results.

\begin{theorem}
\label{th:MGF}
Suppose the conditions of Lemmas~\ref{lem:posmom} and \ref{lem:rand} are satisfied. Then  the QEF (\ref{Xi}) is related by
\begin{equation}
\label{XiZ}
  \Xi
    =
  \re^{- C}
  \bM
    \Omega(\sqrt{\theta}Z)
\end{equation}
to the quantity $C$ in (\ref{C}) and the auxiliary Gaussian random process $Z$ in (\ref{Z}) through a modified moment-generating functional (MGF)
\begin{equation}
\label{Psi}
    \Omega(z)
    :=
      \bE
    \re^{z(T)^\rT X(T)-\int_0^T z(t)^\rT\rd X(t)}
\end{equation}
of the quantum process $X$ over the time interval $[0,T]$, considered for functions
$z \in L^2([0,T],\mR^n)$ satisfying $z(0)=0$.  \hfill$\square$
\end{theorem}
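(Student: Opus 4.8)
The plan is to read off (\ref{XiZ}) by averaging the state-independent identity (\ref{eQZ}) over the quantum state and then exchanging this averaging with the classical expectation $\bM$ over the auxiliary process $Z$. Taking $\bE$ of both sides of
\[
  \re^{\frac{\theta}{2}Q}
  =
  \re^{-C}\,\bM\,\re^{\sqrt{\theta}\int_0^T X(t)^\rT \rd Z(t)},
\]
and observing that the correction factor $\re^{-C}$, with $C$ from (\ref{C}), is a deterministic scalar depending only on the eigenfrequencies $\omega_k$ of $\cL$ and neither on the state $\rho$ nor on the realisation of $Z$, I would pull it out front to reach $\Xi = \re^{-C}\,\bE\,\bM\,\re^{\sqrt{\theta}\int_0^T X^\rT \rd Z}$.

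Next I would identify the inner quantum average with the modified MGF (\ref{Psi}) evaluated at the frozen sample path $z := \sqrt{\theta}Z$. Because the process $Z$ in (\ref{Z}) is absolutely continuous with $Z(0)=0$ by (\ref{Z0}), the boundary-corrected exponent $z(T)^\rT X(T)-\int_0^T z(t)^\rT \rd X(t)$ defining $\Omega(z)$ is precisely the integration-by-parts form of $\int_0^T X(t)^\rT \rd z(t)$, the contribution at $t=0$ being annihilated by $z(0)=0$ and no quadratic-covariation term appearing since $z$ is of bounded variation; this coincides with $\sqrt{\theta}\int_0^T X(t)^\rT \rd Z(t)$. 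Hence, for each fixed realisation of $Z$, one has $\bE\,\re^{\sqrt{\theta}\int_0^T X^\rT \rd Z} = \Omega(\sqrt{\theta}Z)$, so that substituting this back yields (\ref{XiZ}) once the order of averaging is justified.

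The remaining and most delicate point is the Fubini-type interchange $\bE\,\bM = \bM\,\bE$, which turns $\re^{-C}\,\bE\,\bM(\cdots)$ into $\re^{-C}\,\bM\,\bE(\cdots) = \re^{-C}\,\bM\,\Omega(\sqrt{\theta}Z)$. I expect this to be the main obstacle, since the integrand is an unbounded, operator-valued exponential whose matrix elements must be integrated against the Gaussian law of the coefficients $\alpha_k,\beta_k$ underlying $Z$, so a naive appeal to Fubini is not automatic. The plan is to first truncate the series representation of $\Sigma$ in (\ref{Sig}) (equivalently of $Z$ via (\ref{SZ})) to finitely many modes, where the exponent is a finite real-linear combination of the self-adjoint $\xi_k,\eta_k$ and the two expectations commute by elementary Gaussian integration, and then to pass to the limit. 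The limit is controlled by the finiteness of $C$ established in (\ref{C1})--(\ref{phiLtrace}) from the Hilbert--Schmidt bound (\ref{HSnorm}), together with the operator inequality $\cK \prec \cI$ of Lemma~\ref{lem:Zcov}, which keeps the Gaussian weights $\tanhc(\theta\omega_k)$ summable and the process $Z$ strictly dominated by the Wiener process $V$; these bounds supply the uniform integrability needed for a dominated-convergence passage.
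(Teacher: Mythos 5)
Your proposal is correct and follows essentially the same route as the paper's proof: apply the quantum expectation to the state-independent representation (\ref{eQZ}), interchange $\bE$ and $\bM$, and use integration by parts together with the initial condition (\ref{Z0}) to recognise the inner quantum average as the MGF $\Omega$ evaluated pathwise at $\sqrt{\theta}Z$. The only difference is that you elaborate a truncation-plus-dominated-convergence justification for the Fubini-type interchange, a step the paper dispatches in one line by appealing to the statistical independence of the quantum variables and the auxiliary classical randomisation.
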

\begin{proof}
In view of (\ref{C}), application of the quantum expectation (\ref{bE}) to  (\ref{eQZ}) allows (\ref{Xi}) to be represented as
\begin{equation}
\label{XiZ'}
  \Xi
  =
  \re^{- C}
  \bE
    \bM
    \re^{\sqrt{\theta}\int_0^T X(t)^\rT\rd Z(t)}
    =
  \re^{- C}
      \bM
  \bE
    \re^{\sqrt{\theta}\int_0^T X(t)^\rT\rd Z(t)},
\end{equation}
where use is made of the commutativity between the quantum and classical expectations $\bE(\cdot)$, $\bM(\cdot)$,  which describe the averaging over statistically independent quantum variables and auxiliary classical random variables. A combination of (\ref{XiZ'}) with the integration by parts $\int_0^T X(t)^\rT\rd Z(t) = Z(T)^\rT X(T) - \int_0^T Z(t)^\rT \rd X(t)$ in view of the initial condition (\ref{Z0}) leads to (\ref{XiZ}), where the MGF $\Omega$ from (\ref{Psi}) is evaluated   in a pathwise fashion at the rescaled version  $\sqrt{\theta}Z$ of the random process (\ref{Z}).
\end{proof}

The quantum state $\rho$ enters the representation (\ref{XiZ}) of the QEF $\Xi$ through the MGF $\Omega$ of the quantum process $X$, whose commutation kernel specifies the covariance structure of the auxiliary zero-mean Gaussian random process $Z$ through (\ref{cK}).

\section{QEF Representation for Multipoint Gaussian Quantum States}
\label{sec:gauss}

Note that (\ref{XiZ}) does not employ specific assumptions on the underlying quantum state $\rho$ except for the finiteness of exponential moments of the quantum process $X$ (in order for its MGF in (\ref{Psi}) to be well-defined). We will now specify this representation to the case when $X$ is in a multipoint Gaussian quantum state. Similarly to the classical Gaussian random processes\cite{GS_2004}, the Gaussian state for the quantum process $X$ is determined by its mean $\mu$ and real covariance kernel $P$ given by
\begin{equation}
\label{muP}
  \mu(t):= \bE X(t),
  \qquad
  P(s,t):= \Re \cov(X(s), X(t)),
  \qquad
  s,t\> 0,
\end{equation}
which are assumed to be continuous functions of time with values in $\mR^n$ and $\mR^{n\x n}$, respectively, with
\begin{equation}
\label{symm}
    P(s,t) = P(t,s)^\rT,
    \qquad
    s,t\> 0.
\end{equation}
Here,
\begin{equation}
\label{covX}
  \cov(X(s), X(t))
   :=
  \bE ((X(s)-\mu(s))(X(t)-\mu(t))^\rT)=
  P(s,t) + i\Lambda(s,t)
\end{equation}
is the quantum covariance function of the process $X$, with the imaginary part $\Lambda$ not depending on the quantum state. The kernel $P$ gives rise to a positive semi-definite self-adjoint operator $\cP$ acting on $L^2([0,T], \mC^n)$ as
\begin{equation}
\label{cP}
    \cP(f)(s):= \int_0^T P(s,t)f(t)\rd t,
    \qquad
    0\< s \< T.
\end{equation}
The positive semi-definiteness of $\cP$ follows from that of
the self-adjoint operator  $\cP+i\cL$ since
$
    \bra f, (\cP+i\cL)(f)\ket
     =
    \int_{[0,T]^2}
    f(s)^*
    \cov(X(s), X(t))
    f(t)
    \rd s\rd t
     =
    \bE(Y^\dagger Y)
    \> 0
$
for all     $
    f \in L^2([0,T],\mC^n)$, where     $
    Y:=
    \int_0^T f(t)^\rT (X(t)-\mu(t))
    \rd t$.
In the multipoint Gaussian quantum state, the quasi-characteristic functional\cite{CH_1971}  (QCF) of the process $X$ over the time interval $[0,T]$ takes the form
\begin{equation}
\label{QCF}
  \bE \re^{i\int_0^T u(t)^\rT X(t)\rd t}
  =
  \re^{i\bra\mu,u\ket  - \frac{1}{2} \bra u, \cP(u) \ket },
  \qquad
   u \in L^2([0,T],\mR^n).
\end{equation}
This QCF is identical to its classical Gaussian counterpart except that it comes with the stronger constraint $\cP+i\cL \succcurlyeq 0$.

\begin{theorem}
\label{th:main}
Suppose the conditions of Lemmas~\ref{lem:posmom} and \ref{lem:rand} are satisfied, and the quantum process $X$ over the time interval $[0,T]$ is in a multipoint Gaussian quantum state in the sense of (\ref{QCF}) with continuous mean $\mu$ and real covariance kernel $P$ in (\ref{muP}). Also, suppose the risk sensitivity parameter $\theta$ satisfies
\begin{equation}
\label{spec}
    \theta \br(\cP\cK) < 1,
\end{equation}
where $\br(\cdot)$ is the spectral radius, and  $\cP$, $\cK$ are the operators from (\ref{cP}), (\ref{cK}). Then the QEF (\ref{Xi}) can be found from
\begin{equation}
\label{XiZ2}
  \ln \Xi
    =
    -  \frac{1}{2}
  \Tr (\ln\cos (\theta\cL) + \ln (\cI - \theta \cP\cK))
  +
  \frac{\theta}{2} \bra \mu, (\cK(\cI-\theta \cP \cK)^{-1})(\mu)\ket.
\end{equation}
\hfill$\square$
\end{theorem}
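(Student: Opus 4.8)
The plan is to start from the state-independent representation $\Xi = \re^{-C}\bM\,\Omega(\sqrt{\theta}Z)$ of Theorem~\ref{th:MGF} and to carry out the two remaining averagings in turn: first the quantum expectation hidden in the modified MGF $\Omega$ of (\ref{Psi}), which is available in closed form for the Gaussian state (\ref{QCF}), and then the classical Gaussian expectation $\bM$ over the coefficients $\alpha_k,\beta_k$ of the auxiliary process $Z$ in (\ref{Z}). The net effect will be a Gaussian integral over a countable family of variables whose value is governed by the operators $\cP$ and $\cK$.

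First I would evaluate $\Omega(\sqrt{\theta}Z)$ for a fixed realisation of $Z$. Since $Z$ is absolutely continuous with $Z(0)=0$ by (\ref{Z0}), integration by parts turns the exponent in (\ref{Psi}) into $\sqrt{\theta}\int_0^T \dot{Z}(t)^\rT X(t)\rd t$, so that $\Omega(\sqrt{\theta}Z) = \bE\,\re^{\sqrt{\theta}\int_0^T \dot{Z}(t)^\rT X(t)\rd t}$. In the Gaussian state the self-adjoint operator $\int_0^T \dot{Z}^\rT X\,\rd t$ has a classical Gaussian law, so this pathwise quantum MGF is finite, and the QCF (\ref{QCF}) extended to the complex argument $-i\dot{Z}$ by analytic continuation gives $\Omega(\sqrt{\theta}Z) = \exp\big(\sqrt{\theta}\bra\mu,\dot{Z}\ket + \frac{\theta}{2}\bra\dot{Z},\cP(\dot{Z})\ket\big)$, a quantity quadratic in $Z$ into which the state enters only through $\mu$ and $\cP$.

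Next I would insert the expansion (\ref{Z}), (\ref{Hk}), for which $\dot{Z} = \sqrt{2}\sum_k\sqrt{\tanhc(\theta\omega_k)}\,h_k a_k$ with $a_k := {\small\begin{bmatrix}\alpha_k\\\beta_k\end{bmatrix}}$. Using the orthonormality (\ref{hh}) of the blocks $h_k$ and the fact that, by (\ref{cK}), $\cK$ acts as multiplication by $\tanhc(\theta\omega_k)$ on the span of $\varphi_k,\psi_k$, the exponent becomes $\ell^\rT a + \frac{1}{2}a^\rT M a$ in the coordinates $a := (a_k)_{k\in\mN}$. Here, in the orthonormal basis $\{\sqrt{2}\varphi_k,\sqrt{2}\psi_k\}$ of $L^2([0,T],\mR^n)$, the symmetric operator $M = \theta\cK^{1/2}\cP\cK^{1/2}$ is of trace class (inherited from $\cP$, which is trace class by Mercer's theorem since $P$ is continuous), and the drift is $\ell = \sqrt{\theta}\,\cK^{1/2}\mu$ upon identifying $\mu$ with its coefficient sequence. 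Since $\bM(a_j a_k^\rT) = \delta_{jk}I_2$, the classical average $\bM\,\Omega(\sqrt{\theta}Z)$ is the Gaussian integral $\det(\cI - M)^{-1/2}\exp\big(\frac{1}{2}\ell^\rT(\cI-M)^{-1}\ell\big)$, whose convergence is ensured precisely by the hypothesis (\ref{spec}): since $\br(\cK^{1/2}\cP\cK^{1/2}) = \br(\cP\cK)$, the condition $\theta\br(\cP\cK)<1$ is equivalent to $\cI - M\succ 0$.

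It then remains to recast this in the symmetric form (\ref{XiZ2}). For the determinant I would use $\det(\cI - \theta\cK^{1/2}\cP\cK^{1/2}) = \det(\cI - \theta\cP\cK)$ together with $\ln\det = \Tr\ln$, and for the quadratic form the resolvent identity $\cK^{1/2}(\cI-\theta\cK^{1/2}\cP\cK^{1/2})^{-1}\cK^{1/2} = \cK(\cI-\theta\cP\cK)^{-1}$, which follows by expanding the resolvent as a Neumann series and telescoping the half-powers of $\cK$, or directly from $\cK(\cI-\theta\cP\cK) = (\cI-\theta\cK\cP)\cK$. Taking logarithms and adding $-C = -\frac{1}{2}\Tr\ln\cos(\theta\cL)$ from (\ref{C}) yields (\ref{XiZ2}). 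The main obstacle I anticipate is analytic rather than algebraic: justifying the analytic continuation of the QCF that identifies $\Omega(\sqrt{\theta}Z)$ with the Gaussian expression, and ensuring that all the infinite-dimensional determinants and traces are well defined, which reduces to the trace-class property of $\cP$ and hence of $\theta\cP\cK$.
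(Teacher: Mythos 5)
Your route is correct in substance but differs from the paper's in its second half, and it contains one technical misstatement that needs repair. The first step---using the Gaussian QCF (\ref{QCF}) to evaluate $\Omega(\sqrt{\theta}Z)$ pathwise as an exponential that is quadratic in $Z$ with coefficients $\mu$ and $\cP$---is exactly the paper's intermediate identity (\ref{XiZ1}). Where you genuinely depart is in computing the classical average $\bM$: the paper introduces a second, independent classical Gaussian process $N$ with the same mean $\mu$ and covariance $P$ as the quantum process, applies the tower property twice to convert the quadratic-in-$Z$ exponent into $\bM\,\re^{\frac{\theta}{2}\bra N,\cK(N)\ket}$, and only then invokes the Fredholm determinant formula (quadratic form $\cK$ against covariance $\cP$). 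You instead work directly in the QKL coordinates $a_k$, representing the exponent as $\ell^\rT a+\frac{1}{2}a^\rT M a$ with $M=\theta\cK^{1/2}\cP\cK^{1/2}$ and $\ell$ the coefficient sequence of $\sqrt{\theta}\cK^{1/2}\mu$, and apply the same Gaussian/Fredholm formula to the i.i.d.\ standard normals. Both routes rest on the same determinant technology; yours dispenses with the auxiliary process and the double conditioning, at the cost of the symmetrisation identities $\det(\cI-\theta\cK^{1/2}\cP\cK^{1/2})=\det(\cI-\theta\cP\cK)$ and $\cK^{1/2}(\cI-\theta\cK^{1/2}\cP\cK^{1/2})^{-1}\cK^{1/2}=\cK(\cI-\theta\cP\cK)^{-1}$, which you correctly supply, and you correctly identify (\ref{spec}) with $\cI-M\succ 0$ by isospectrality.

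The flaw: $Z$ is \emph{not} absolutely continuous, and $\dot{Z}$ does not exist as an element of $L^2([0,T],\mR^n)$. The eigenvalues $\tanhc(\theta\omega_k)$ of $\cK$ tend to $1$ (since $\omega_k\to 0$ by (\ref{HSnorm})), so $\cK$ is not trace class, and the series $\sum_k\sqrt{2\tanhc(\theta\omega_k)}\,(\alpha_k\varphi_k+\beta_k\psi_k)$ that you call $\dot{Z}$ satisfies $\|\dot{Z}\|^2=\sum_k\tanhc(\theta\omega_k)(\alpha_k^2+\beta_k^2)=+\infty$ almost surely; indeed, in the classical limit $\cK=\cI$ the process $Z$ is a standard Wiener process. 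This is repairable rather than fatal: every expression in which $\dot{Z}$ appears---$\bra\mu,\dot{Z}\ket$, $\bra\dot{Z},\cP(\dot{Z})\ket$, and hence the exponent $\ell^\rT a+\frac{1}{2}a^\rT Ma$---is well defined once interpreted as the stochastic integrals $\int_0^T\mu(t)^\rT\rd Z(t)$ and $\int_{[0,T]^2}\rd Z(s)^\rT P(s,t)\rd Z(t)$ (these converge almost surely because $\cK^{1/2}\mu\in L^2$ and $M$ is of trace class, $\cP$ being of trace class by Mercer's theorem, as you note). So you should phrase the argument in terms of these integrals, as the paper's (\ref{XiZ1}) does, rather than through a pathwise derivative; with that adjustment, and with the analytic continuation of the QCF to the real exponent justified by the Gaussian law of the self-adjoint variable $\int_0^T X(t)^\rT\rd Z(t)$ conditional on $Z$, your proof is sound.
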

\begin{proof}
Due to (\ref{QCF}), the MGF (\ref{Psi}) admits a closed-form representation, and its substitution into (\ref{XiZ}) yields
\begin{equation}
\label{XiZ1}
  \Xi
    =
    \re^{-C}
    \bM
  \re^{\sqrt{\theta} \int_0^T \mu(t)^\rT \rd Z(t) + \frac{\theta}{2}\int_{[0,T]^2} \rd Z(s)^\rT P(s,t)\rd Z(t)}.
\end{equation}
The right-hand side of (\ref{XiZ1}) does not involve quantum variables and is organised as a quadratic-exponential moment for the classical Gaussian random process $Z$ in (\ref{Z}). In order to compute it,
let $N$ be another independent auxiliary classical Gaussian random process with values in $\mR^n$ and the same mean $\mu$ and covariance function $P$ as the quantum process $X$ in (\ref{muP}):
$$
  \bM N(t) = \mu(t),
  \qquad
  \bM ((N(s)-\mu(s))(N(t)-\mu(t))^\rT) = P(s,t),
  \qquad
  s,t\> 0.
$$
Then by using the MGFs for Gaussian random processes and the tower property of iterated conditional classical expectations\cite{S_1996}, it follows that
\begin{align}
\nonumber
    \bM
  \re^{\sqrt{\theta} \int_0^T \mu(t)^\rT \rd Z(t) + \frac{\theta}{2}\int_{[0,T]^2} \rd Z(s)^\rT P(s,t)\rd Z(t)}
   & =
  \bM
    \bM
    \big(
  \re^{\sqrt{\theta}\int_0^T N(t)^\rT\rd Z(t)}
  \mid
  Z\big)\\
\nonumber
  & =
  \bM
  \re^{\sqrt{\theta}\int_0^T N(t)^\rT\rd Z(t)}
  =
  \bM
    \bM
    \big(
  \re^{\sqrt{\theta}\int_0^T N(t)^\rT\rd Z(t)}
  \mid
  N\big)\\
\nonumber
    & =
    \bM
  \re^{\frac{\theta}{2}\bra N, \cK(N)\ket }
  =
  \re^{-\frac{1}{2} \Tr \ln (\cI - \theta \cP \cK) + \frac{\theta}{2} \bra \mu, (\cK(\cI-\theta \cP \cK)^{-1})(\mu)\ket}\\
\label{MMM}
  & =
  \re^{\frac{\theta}{2} \bra \mu, (\cK(\cI-\theta \cP \cK)^{-1})(\mu)\ket}
  \prod_{k=1}^{+\infty}
  \frac{1}{\sqrt{1-\theta \lambda_k}},
\end{align}
where use is also made of  the covariance operator $\cK$  of the incremented process $Z$ in (\ref{cK}), (\ref{cKdef}) and the Fredholm determinant formula
(Theorem~3.10 on p.~36 of Ref.~\refcite{S_2005}, see also Ref.~\refcite{G_1994} and references therein).
Here,  $\lambda_k\>0$ are the eigenvalues of $\cP\cK$ which is a compact operator on $L^2([0,T], \mR^n)$ (isospectral to the positive semi-definite self-adjoint operator $\sqrt{\cK} \cP \sqrt{\cK}$) as the composition of a bounded operator ($\cK$) and a compact operator ($\cP$ in (\ref{cP})). For the validity of (\ref{MMM}), the condition
$    \br(\cP\cK) = \max_{k\in \mN} \lambda_k < \frac{1}{\theta}
$
has to be satisfied, which is equivalent to (\ref{spec}). In this case, the quadratic form $\bra \mu, (\cK(\cI-\theta \cP \cK)^{-1})(\mu)\ket$ in the mean $\mu$ (which is square integrable over $[0,T]$ due to its continuity) on the right-hand side of (\ref{MMM})  is finite and nonnegative since it is specified by a  positive definite self-adjoint operator satisfying
$
    \cK(\cI-\theta \cP \cK)^{-1}
     = \sqrt{\cK}(\cI-\theta \sqrt{\cK}\cP \sqrt{\cK})^{-1}\sqrt{\cK}
     \preccurlyeq
    \frac{1}{1-\theta \br(\cP\cK)}
    \cK
    \prec
    \frac{1}{1-\theta \br(\cP\cK)}
    \cI
$,
where use is also made of (\ref{cK01}). Furthermore, $-\Tr \ln (\cI - \theta \cP \cK) = -\sum_{k=1}^{+\infty} \ln(1-\theta \lambda_k) $ is also a finite nonnegative quantity due to (\ref{spec}), the asymptotic relation $-\ln(1-\theta \lambda)\sim \theta \lambda$ as $\lambda\to 0$,  and since the operator $\cP\cK$
(in view of its isospectrality to $\sqrt{\cK} \cP \sqrt{\cK} \succcurlyeq 0$) is of trace class\cite{B_1988}:
$
    \Tr (\cP\cK)
    =
    \sum_{k=1}^{+\infty}
    \lambda_k
    =
    \Tr (\sqrt{\cP} \cK \sqrt{\cP})
    \<
    \Tr \cP
    =
    \int_0^T
    \Tr P(t,t)\rd t < +\infty
$,
where (\ref{cK01}) is used again along with continuity of the real covariance kernel function $P$. Therefore, substitution of (\ref{C}), (\ref{MMM}) into (\ref{XiZ1}) leads to the relation (\ref{XiZ2}) for the QEF (\ref{Xi}).
\end{proof}

In the limiting classical case mentioned in Sec.~\ref{sec:classproc}, when $\cL = 0$ and $\cK=\cI$ in accordance with (\ref{cK}), the condition (\ref{spec}) reduces to $\theta<\frac{1}{\br(\cP)}$, and (\ref{XiZ2}) takes the form
$
  \ln \Xi
    =
    -  \frac{1}{2}
  \Tr \ln (\cI - \theta \cP\cK)
  +
  \frac{\theta}{2} \bra \mu, (\cI-\theta \cP )^{-1}(\mu)\ket
$.

\section{Infinite-Horizon QEF Growth Rate in the Stationary Gaussian Case}
\label{sec:stat}

We will now discuss the infinite-horizon asymptotic behaviour (as $T\to +\infty$)  of the QEF  (\ref{Xi}), computed in Theorem~\ref{th:main}, in the stationary Gaussian case.   It is assumed in what follows that the Gaussian quantum process $X$ has zero mean (for simplicity), while the real covariance and commutator kernels $P$, $\Lambda$  in (\ref{muP}), (\ref{XXcomm}) depend on the time difference:
\begin{equation}
\label{muPstat}
  \bE X(t) = 0,
  \qquad
  \Re \bE (X(s)X(t)^\rT) = P(s-t),
  \qquad
    [X(s), X(t)^\rT]
    =
    2i\Lambda(s-t),
\end{equation}
and hence, so also does the quantum covariance kernel of the process $X$ in (\ref{covX}):
\begin{equation}
\label{EXX}
    \bE (X(s)X(t)^\rT) = P(s-t) + i\Lambda(s-t),
    \qquad
    s,t\> 0.
\end{equation}
Here, the functions $P, \Lambda: \mR\to \mR^{n\x n}$ are assumed to be continuous  and absolutely integrable:
\begin{equation}
\label{int}
  \int_\mR
  \|P(\tau) + i\Lambda(\tau)\|_\rF
  \rd \tau
  =
  \int_\mR
  \sqrt{\|P(\tau)\|_\rF^2  + \|\Lambda(\tau)\|_\rF^2}
  \rd \tau
  < +\infty
\end{equation}
(any other matrix norm can also be used instead of $\|\cdot\|_\rF$
without affecting the integrability).
With a slight abuse of notation, the symmetry and antisymmetry properties (\ref{symm}), (\ref{skew}) of these kernels are equivalent to
$$
    P(\tau) = P(-\tau)^\rT,
    \qquad
    \Lambda(\tau) = -\Lambda(-\tau)^\rT,
    \qquad
    \tau \in \mR.
$$
The dependence on the risk sensitivity parameter $\theta$ and the time horizon $T$ will be  reinstated in the form of  subscripts (which were omitted previously for brevity), so that the operators (\ref{cP}), (\ref{cL}) take the form
\begin{equation}
\label{cPT_cLT}
    \cP_T(f)(s)
     = \int_0^T P(s-t)f(t)\rd t,
     \qquad
    \cL_T(f)(s)
    =
    \int_0^T
    \Lambda(s-t) f(t)
    \rd t,
    \qquad
    0\< s \< T,
\end{equation}
for any $f\in L^2([0,T], \mC^n)$ and $T>0$. Accordingly,  the operator (\ref{cK}) is given by
\begin{equation}
\label{cKT}
    \cK_{\theta,T}
    =
    \tanc (\theta\cL_T).
\end{equation}
In the case of $\mu=0$ being considered, the $\mu$-dependent term in (\ref{XiZ2}) vanishes, and this representation for the QEF $\Xi_{\theta,T}$ in (\ref{Xi}) reduces to the sum of ``trace-analytic''\cite{VP_2010} functionals:
\begin{equation}
\label{lnXi1}
  \ln \Xi_{\theta,T}
    =
    -  \frac{1}{2}
  \Tr (\varphi(\theta \cP_T \cK_{\theta,T}   ) + \psi(\theta\cL_T )),
\end{equation}
where
\begin{equation}
\label{phipsi}
  \varphi(z):= \ln(1-z),
  \qquad
  \psi(z):= \ln \cos z,
  \qquad
  z \in \mC,
\end{equation}
are holomorphic functions whose domains contain the spectra of the operators $\theta \cP_T \cK_{\theta,T}   $ (under the condition (\ref{spec})) and $\theta\cL_T $, at which these functions are evaluated.

We will now take into account the dependence of the   operators $\cP_T $, $\cL_T $ in (\ref{cPT_cLT}) (and the related operator $\cK_{\theta,T}  $ in (\ref{cKT})) on the time horizon  $T>0$. Each of them is organised as an integral operator $\cF_T$ on $L^2([0,T],\mC^n)$ whose kernel $F_T: [0,T]^2\to \mC^{n\x n}$ is obtained from a bounded continuous and absolutely integrable function $f: \mR\to \mC^{n \x n}$ (a shift-invariant kernel) as $F_T(s,t):= f(s-t)$ for all  $0\< s,t\< T$. Accordingly, the composition $\cH_T:= \cF_T \cG_T$ of such integral operators, whose kernel functions $F_T, G_T: [0,T]^2\to \mC^{n\x n}$ are generated by $f, g: \mR \to \mC^{n\x n}$ as described above,  is an integral operator whose kernel is an appropriately restricted convolution  $H_T(s,t):= \int_0^T F_T(s,u)G_T(u,t)\rd u = \int_0^T f(s-u)g(u-t)\rd u$ of the functions $f$, $g$ for all
$0\< s,t\< T$. If the operator $\cH_T$ is of trace class, then
$
    \Tr \cH_T
    =
    \int_0^T \Tr H_T(t,t)\rd t
    =
    \int_{[0,T]^2}
    \Tr (f(s-t)g(t-s))
    \rd s \rd t
$.
This relation extends to the rightward-ordered product $\cF_T := \rprod_{k=1}^q \cF_T^{(k)}$ of any number $q$ of such operators  with the kernel functions $F_T^{(k)}: [0,T]^2 \to \mC^{n\x n}$, generated by $f_k: \mR \to \mC^{n\x n}$ as described above, with $k=1, \ldots, q$, so that if the operator $\cF_T$ is of trace class, then
\begin{equation}
\label{TrFT}
    \Tr \cF_T
    =
    \int_{[0,T]^q}
    \Tr
    \rprod_{k=1}^q
    f_k (t_k - t_{k+1})
    \rd t_1\x \ldots \x \rd t_q,
\end{equation}
where $t_{q+1}:= t_1$.
Application of Lemma~6 from Appendix~C of Ref.~\refcite{VPJ_2018a}
to (\ref{TrFT}) leads to
\begin{equation}
\label{lim}
    \lim_{T\to +\infty}
    \Big(
        \frac{1}{T}
        \Tr \cF_T
    \Big)
    =
    \frac{1}{2\pi}
    \int_{\mR}
    \rprod_{k=1}^q
    \Phi_k(\lambda)
    \rd \lambda,
\end{equation}
where $\Phi_k(\lambda):= \int_\mR \re^{-i\lambda t }f_k(t)\rd t$ is the Fourier transform of the kernel function $f_k$. In turn, (\ref{lim}) extends to functions $\mC^q\ni z:= (z_j)_{1\< j\< q}\mapsto  h(z) := \sum_{k \in \mZ_+^q} c_k z^k \in \mC$ of $q$ complex variables with coefficients $c_k \in \mC$ and radii of convergence $r:=(r_j)_{1\<j\< q} \in \mR_+^q$ in the sense that $\sum_{k \in \mZ_+^q} |c_k| r^k<+\infty$, where the multiindex notation $z^k:= z_1^{k_1}\x \ldots \x z_q^{k_q}$ (and similarly for $r^k$) is used for any $q$-index $k:= (k_j)_{1\< j\< q} \in \mZ_+^q$, with $\mZ_+:= \{0,1,2,\ldots\}$ the set of nonnegative integers. For such a function $h$,
\begin{equation}
\label{lim1}
    \lim_{T\to +\infty}
    \Big(
        \frac{1}{T}
        \Tr h(\cF_T^{(1)}, \ldots, \cF_T^{(q)})
    \Big)
    =
    \frac{1}{2\pi}
    \int_{\mR}
    \Tr
    h(
    \Phi_1(\lambda),
    \ldots,
    \Phi_q(\lambda)
    )
    \rd \lambda,
\end{equation}
provided the same expansion of $h$  is applied to the noncommutative variables on both sides of (\ref{lim1}), and $\sup_{\lambda \in \mR} \|\Phi_j(\lambda)\| < r_j$ for all $j = 1, \ldots, q$, where $\|\cdot \|$ is the operator norm (the largest singular value) of a matrix.

The following theorem is concerned with the asymptotic growth rate of the quantity (\ref{lnXi1}), as $T\to +\infty$, and employs the Fourier transforms of the real covariance and commutator kernels in (\ref{muPstat}):
\begin{equation}
\label{Phi0_Psi0}
    \Phi(\lambda)
    :=
    \int_\mR \re^{-i\lambda t }
    P(t)
    \rd t,
    \qquad
    \Psi(\lambda)
    :=
    \int_\mR \re^{-i\lambda t }
    \Lambda(t)
    \rd t,
    \qquad
    \lambda \in \mR,
\end{equation}
which are well-defined continuous functions due to the integrability condition (\ref{int}).   Note that $\Phi(\lambda)$ is a complex positive semi-definite Hermitian  matrix, while  $\Psi(\lambda)$  is skew Hermitian for any $\lambda \in \mR$, with $\Phi(\lambda)+i\Psi(\lambda)$ being a complex positive semi-definite Hermitian  matrix as the Fourier transform  of the quantum covariance kernel $P+i\Lambda$ in (\ref{EXX}).

\begin{theorem}
\label{th:limXi}
Suppose the quantum process $X$ is stationary Gaussian with zero mean and a continuous quantum covariance function (\ref{EXX}) satisfying (\ref{int}). Also, suppose the operator $\cL_T$ in (\ref{cPT_cLT}) has no zero eigenvalues
for all sufficiently large time horizons $T>0$.
Furthermore, suppose the risk sensitivity parameter $\theta>0$ in (\ref{Xi}) satisfies
\begin{equation}
\label{spec1}
    \theta
    \sup_{\lambda \in \mR}
    \lambda_{\max}
    (
        \Phi(\lambda)
        \tanc
        (\theta \Psi(\lambda))
    )
    < 1,
\end{equation}
where $\lambda_{\max}(\cdot)$ is the largest eigenvalue, and the functions $\Phi$, $\Psi$ are given by (\ref{Phi0_Psi0}). Then the QEF $\Xi_{\theta,T}$, defined by  (\ref{Xi}), (\ref{Q}), has the following infinite-horizon growth rate:
\begin{equation}
\label{Ups}
    \Ups(\theta)
    :=
\lim_{T\to +\infty}
    \Big(
        \frac{1}{T}
        \ln \Xi_{\theta,T}
    \Big)
     =
    -
    \frac{1}{4\pi}
    \int_{\mR}
    \ln\det
    D_\theta(\lambda)
    \rd \lambda,
\end{equation}
where
\begin{equation}
\label{D}
    D_\theta(\lambda)
    :=
    \cos(
        \theta \Psi(\lambda)
    ) -
        \theta
        \Phi(\lambda)
        \sinc
        (\theta \Psi(\lambda)),
\end{equation}
and $\sinc z := \frac{\sin z}{z}$ (which is extended  as $\sinc 0 := 1$ by continuity).\hfill$\square$
\end{theorem}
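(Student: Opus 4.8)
The plan is to start from the finite-horizon formula (\ref{lnXi1}), which is available for all sufficiently large $T$ since the standing hypothesis (that $\cL_T$ has no zero eigenvalues for large $T$) makes Lemmas~\ref{lem:posmom}, \ref{lem:rand} and Theorem~\ref{th:main} applicable. I would divide (\ref{lnXi1}) by $T$ and pass to the limit $T\to+\infty$ by applying the asymptotic trace formula (\ref{lim1}) separately to the two trace-analytic functionals $\Tr\varphi(\theta\cP_T\cK_{\theta,T})$ and $\Tr\psi(\theta\cL_T)$, with $\varphi,\psi$ from (\ref{phipsi}). A preliminary step is to check that the frequency-domain condition (\ref{spec1}) secures, for all large $T$, the finite-horizon condition (\ref{spec}), that is $\theta\br(\cP_T\cK_{\theta,T})<1$; this follows from the convergence of the spectral radius of the block-Toeplitz operator $\cP_T\cK_{\theta,T}$, whose symbol is $\Phi(\lambda)\tanc(\theta\Psi(\lambda))$, to $\sup_{\lambda\in\mR}\lambda_{\max}(\Phi(\lambda)\tanc(\theta\Psi(\lambda)))$.

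Here both operators in (\ref{cPT_cLT}) are convolution (block-Toeplitz) operators with shift-invariant kernels $P(s-t)$ and $\Lambda(s-t)$, whose Fourier transforms are the symbols $\Phi(\lambda)$ and $\Psi(\lambda)$ in (\ref{Phi0_Psi0}). For the $\psi$-term I would apply (\ref{lim1}) with the single-variable function $h(z):=\psi(\theta z)=\ln\cos(\theta z)$, and for the $\varphi$-term I would use $\cK_{\theta,T}=\tanc(\theta\cL_T)$ from (\ref{cKT}) and regard $\varphi(\theta\cP_T\cK_{\theta,T})=\ln(\cI-\theta\cP_T\tanc(\theta\cL_T))$ as a holomorphic function of the two Toeplitz operators $\cP_T$, $\cL_T$. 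The multivariate version of (\ref{lim1}) then yields
\begin{equation*}
    \lim_{T\to+\infty}\frac1T\Tr\psi(\theta\cL_T)
    =
    \frac{1}{2\pi}\int_\mR \ln\det\cos(\theta\Psi(\lambda))\,\rd\lambda,
\end{equation*}
\begin{equation*}
    \lim_{T\to+\infty}\frac1T\Tr\varphi(\theta\cP_T\cK_{\theta,T})
    =
    \frac{1}{2\pi}\int_\mR \ln\det(I_n-\theta\Phi(\lambda)\tanc(\theta\Psi(\lambda)))\,\rd\lambda,
\end{equation*}
where $\Tr\ln=\ln\det$ is used, $\cos(\theta\Psi(\lambda))$ is positive definite Hermitian (because $\Psi(\lambda)$ is skew Hermitian, so its eigenvalues are purely imaginary and those of $\cos(\theta\Psi(\lambda))$ are values of $\cosh\geq 1$), and the convergence hypothesis for the second formula is precisely (\ref{spec1}).

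Combining the two limits with the prefactor $-\tfrac12$ from (\ref{lnXi1}) and using $\det(AB)=\det A\,\det B$, I would obtain
\begin{equation*}
    \Ups(\theta)
    =
    -\frac{1}{4\pi}\int_\mR
    \ln\det\big[(I_n-\theta\Phi(\lambda)\tanc(\theta\Psi(\lambda)))\cos(\theta\Psi(\lambda))\big]\,\rd\lambda.
\end{equation*}
Since $\tanc(\theta\Psi(\lambda))$ and $\cos(\theta\Psi(\lambda))$ are functions of the same matrix $\Psi(\lambda)$ and hence commute, the scalar identity $\tanc(z)\cos(z)=\sinc(z)$ lifts to $\tanc(\theta\Psi)\cos(\theta\Psi)=\sinc(\theta\Psi)$, which reduces the bracket to $\cos(\theta\Psi(\lambda))-\theta\Phi(\lambda)\sinc(\theta\Psi(\lambda))=D_\theta(\lambda)$ from (\ref{D}), giving (\ref{Ups}). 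Note that the ordering is preserved correctly even though $\Phi$ and $\Psi$ do not commute, because $\tanc(\theta\Psi)\cos(\theta\Psi)$ collapses to $\sinc(\theta\Psi)$ on the right of $\Phi$.

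The main obstacle is the rigorous justification of the limit via (\ref{lim1}) for these two non-polynomial functions. For the $\psi$-term, the symbol $\Psi(\lambda)$ is only known to have purely imaginary spectrum, where $\ln\cos(\theta\,\cdot)$ is holomorphic but whose Maclaurin series has finite radius $\tfrac{\pi}{2\theta}$; when $\theta\sup_\lambda\|\Psi(\lambda)\|$ exceeds $\tfrac{\pi}{2}$ one must invoke the holomorphic functional calculus interpretation of (\ref{lim1}) rather than its literal power-series form, exploiting that the skew self-adjoint $\cL_T$ has its spectrum on the imaginary axis where $\cos$ has no zeros. One must also control the approximation uniformly in $T$ in order to interchange $\lim_{T\to+\infty}$ with the operator trace and with the $\lambda$-integral; here the absolute integrability (\ref{int}) of $P+i\Lambda$ together with the trace-class bounds behind (\ref{lim1}) (descended from the Hilbert-Schmidt estimate (\ref{HSnorm}) and from $\Tr(\cP_T\cK_{\theta,T})\leq\Tr\cP_T<+\infty$) supply the required domination.
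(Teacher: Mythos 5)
Your proposal follows essentially the same route as the paper's own proof: starting from (\ref{lnXi1}), applying the trace-limit formula (\ref{lim1}) to $\Tr\,\psi(\theta\cL_T)$ (a single operator, no ordering issues) and to $\Tr\,\varphi(\theta\cP_T\cK_{\theta,T})$ viewed as a double power series in the noncommuting operators $\cP_T$, $\cL_T$ (the paper makes this explicit in (\ref{phiPK}) using the Maclaurin expansions of $\ln(1-z)$ and $\tanc$, together with a dominated convergence argument), and then merging the two frequency-domain integrals via the identity $\tanc z\,\cos z=\sinc z$ applied to $\theta\Psi(\lambda)$ and the multiplicativity of $\ln\det$, exactly as in (\ref{psilim})--(\ref{Xilim}). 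One small correction to your extra preliminary step: $\cK_{\theta,T}=\tanc(\theta\cL_T)$ is a function of the \emph{truncated} convolution operator and is not itself a block-Toeplitz operator with symbol $\tanc(\theta\Psi(\lambda))$, so the asserted spectral-radius convergence of $\cP_T\cK_{\theta,T}$ to $\sup_\lambda\lambda_{\max}(\Phi\tanc(\theta\Psi))$ cannot be quoted as a standard Toeplitz fact and would itself have to be routed through the same series/trace machinery (the paper glosses over this point too, simply calling (\ref{spec1}) the frequency-domain counterpart of (\ref{spec})); your remaining caveats about convergence radii and uniform domination are flagged, not resolved, at the same informal level as in the paper.
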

\begin{proof}
In the case of one integral operator, the noncommutativity  issue does not arise, and (\ref{lim1}) is directly applicable to the second part of (\ref{lnXi1}) as
\begin{equation}
\label{psilim}
    \lim_{T\to +\infty}
    \Big(
        \frac{1}{T}
        \Tr \psi(\theta\cL_T )
    \Big)
    =
    \frac{1}{2\pi}
    \int_{\mR}
    \Tr \ln\cos(
        \theta \Psi(\lambda)
    )
    \rd \lambda
    =
    \frac{1}{2\pi}
    \int_{\mR}
    \ln\det \cos(
        \theta \Psi(\lambda)
    )
    \rd \lambda,
\end{equation}
where the function $\psi$ is given by (\ref{phipsi}), and use is made of the identity $\Tr \ln N = \ln\det N$ for square matrices $N$, along with the  Fourier transform (\ref{Phi0_Psi0})
of the commutator kernel $\Lambda$ in (\ref{muPstat}).
Now, in application of (\ref{lim1}) to the first part of (\ref{lnXi1}),  the function $\varphi$ from (\ref{phipsi}) is evaluated at the operator $\theta \cP_T  \cK_{\theta,T}  $ which involves  two noncommuting integral operators $\cP_T $, $\cL_T $ in (\ref{cPT_cLT}) and the related operator $\cK_{\theta,T}  $ in (\ref{cKT}) as
\begin{equation}
\label{phiPK}
    \varphi(\theta \cP_T  \cK_{\theta,T}  )
    =
    -
    \sum_{r=1}^{+\infty}
    \frac{1}{r}
    \theta^r
    (\cP_T  \cK_{\theta,T}   )^r
    =
    -
    \sum_{r=1}^{+\infty}
    \frac{1}{r}
    \theta^r
    \sum_{k_1, \ldots, k_r = 0}^{+\infty}
    \rprod_{j=1}^r
    \big(
    c_{k_j}
    \theta^{2k_j}
    \cP_T
    \cL_T ^{2k_j}
    \big)
\end{equation}
under the condition (\ref{spec}).
Here, use is made of the Maclaurin series expansion $\tanc z = \sum_{k=0}^{+\infty} c_k z^{2k}$ (with coefficients $c_k \in \mR$) in view of the symmetry of the $\tanc$  function.
Application of (\ref{lim1}) to (\ref{phiPK}) along with a dominated convergence argument
yields
\begin{align}
\nonumber
     \lim_{T\to +\infty}
    \Big(
        \frac{1}{T}
        \Tr \varphi(\theta\cP_T \cK_{\theta,T}  )
    \Big)
    & =
    -
    \frac{1}{2\pi}
    \sum_{r=1}^{+\infty}
    \frac{1}{r}
    \theta^r\!\!\!
    \sum_{k_1, \ldots, k_r = 0}^{+\infty}
    \int_\mR\!\!
    \Tr
    \rprod_{j=1}^r\!
    \big(
    c_{k_j}
    \theta^{2k_j}
    \Phi(\lambda)
    \Psi(\lambda)^{2k_j}
    \big)
    \rd \lambda\\
\nonumber
    & =
    \frac{1}{2\pi}
    \int_{\mR}
    \Tr
    \ln(
    I_n -
        \theta
        \Phi(\lambda)
        \tanc
        (\theta \Psi(\lambda))
    )
    \rd \lambda\\
\label{philim}
    & =
    \frac{1}{2\pi}
    \int_{\mR}
    \ln\det(
    I_n -
        \theta
        \Phi(\lambda)
        \tanc
        (\theta \Psi(\lambda))
    )
    \rd \lambda,
\end{align}
where
the Fourier transforms (\ref{Phi0_Psi0}) are used.  The limit relation (\ref{philim}) holds under the condition (\ref{spec1}) which is an infinite-horizon counterpart of (\ref{spec}) in  the frequency domain.
By combining (\ref{psilim}), (\ref{philim}), it follows that the quantity (\ref{lnXi1}) has the asymptotic growth rate
\begin{align}
\nonumber
\lim_{T\to +\infty}
    \Big(
        \frac{1}{T}
        \ln \Xi_{\theta,T}
    \Big)
    & =
    -
    \frac{1}{4\pi}
    \int_{\mR}
    \ln\det(
    I_n -
        \theta
        \Phi(\lambda)
        \tanc
        (\theta \Psi(\lambda))
    )
    \rd \lambda
    -
    \frac{1}{4\pi}
    \int_{\mR}
    \ln\det \cos(
        \theta \Psi(\lambda)
    )
    \rd \lambda\\
\label{Xilim}
    & =
    -
    \frac{1}{4\pi}
    \int_{\mR}
    \ln\det(
    \cos(
        \theta \Psi(\lambda)
    ) -
        \theta
        \Phi(\lambda)
        \sinc
        (\theta \Psi(\lambda))
    )
    \rd \lambda,
\end{align}
where the identity
$\tanc z \cos z = \sinc z$ is applied to the matrix $\theta \Psi(\lambda)$. In view of (\ref{D}), the relation (\ref{Xilim}) establishes (\ref{Ups}).
\end{proof}

Under the condition (\ref{spec1}), the integrand $\ln\det D_\theta(\lambda)$ in (\ref{Ups}) is a nonpositive-valued  symmetric function of the frequency $\lambda$.
In the limiting classical case, when $\Lambda = 0$ (and hence, $\Psi=0$ in (\ref{Phi0_Psi0})),  so that $X$ is a stationary Gaussian random process\cite{GS_2004}  with zero mean and the spectral density $\Phi$ in (\ref{Phi0_Psi0}),  the condition (\ref{spec1}) takes the form
\begin{equation}
\label{class}
    \theta
    <
    \theta_*:=
    \frac{1}{\sup_{\lambda \in \mR}
    \lambda_{\max}
    (
        \Phi(\lambda)
    )},
\end{equation}
and the right-hand side of (\ref{Ups}) reduces to
\begin{equation}
\label{V}
    \Ups_*(\theta)
    :=
    -
    \frac{1}{4\pi}
    \int_{\mR}
    \ln\det(
    I_n
 -
        \theta
        \Phi(\lambda)
    )
    \rd \lambda
\end{equation}
in view of (\ref{D}).

Returning to the quantum case, we note that
the QEF growth rate (\ref{Ups}), in contrast to its classical counterpart, depends  on both functions $\Phi$, $\Psi$ which form the ``quantum spectral density'' $\Phi + i\Psi$ of the stationary Gaussian quantum process $X$. Furthermore, the condition (\ref{spec1}) is substantially nonlinear with respect to  $\theta$ and, unlike $\theta_*$ in (\ref{class}),  does not admit a closed-form representation. However, since $\tanc$ on the imaginary axis (that is, $\tanhc$ on the real axis) takes values in $(0,1]$, then
$
    \lambda_{\max}
    (
        \Phi(\lambda)
        \tanc
        (\theta \Psi(\lambda))
    )
    =
    \lambda_{\max}
    (
        \sqrt{\tanc
        (\theta \Psi(\lambda))}
        \Phi(\lambda)
        \sqrt{\tanc
        (\theta \Psi(\lambda))}
    )
    \< \lambda_{\max}(\Phi(\lambda))
$
for any $\lambda \in \mR$ by the isospectrality argument used in Sec.~\ref{sec:gauss}, and hence, the fulfillment of the classical constraint (\ref{class}) ensures (\ref{spec1}).

Also note that a combination of the QEF growth rate (\ref{Ups}) with the upper bounds (\ref{tail}) on the tail distribution of the quantum variable $Q_T$ in (\ref{Q}) leads to their infinite-horizon asymptotic version:
\begin{equation}
\label{supP}
    \limsup_{T\to +\infty}
    \Big(
        \frac{1}{T}
        \ln
        \bP_T([2\alpha T, +\infty))
    \Big)
    \<
    -
    \sup_{\theta\>0}
    (
        \alpha\theta
        -
        \Ups(\theta)
    ),
    \qquad
    \alpha >0,
\end{equation}
where $\bP_T(\cdot)$ is the probability distribution of $Q_T$. The Legendre transformation of $\Ups$ on the right-hand side of (\ref{supP}) requires techniques for computing the functional (\ref{Ups}) at different values of $\theta$.

\section{Evaluation of the QEF Growth Rate Using a Homotopy Technique}
\label{sec:comp}

The QEF growth rate $\Ups(\theta)$ in (\ref{Ups}) can be computed by the following  technique, similar to the homotopy methods for numerical solution of parameter dependent algebraic equations\cite{MB_1985}.
To this end, we will use a function $U_\theta: \mR \to \mC^{n\x n}$ defined by
\begin{equation}
\label{U}
  U_\theta(\lambda):= -D_\theta(\lambda)^{-1}\d_\theta D_\theta(\lambda)
\end{equation}
as a ``logarithmic derivative'' of the function $D_\theta$ from  (\ref{D}) with respect to $\theta >0$ satisfying (\ref{spec1}) (which ensures that $\det D_\theta (\lambda)\ne 0$ for all $\lambda \in \mR$).

\begin{theorem}
\label{th:diff}
Under the conditions of Theorem~\ref{th:limXi}, the QEF growth rate $\Ups(\theta)$ in  (\ref{Ups}) satisfies the ODE\footnote{more precisely, an integro-differential equation}
\begin{equation}
\label{Ups'}
  \Ups'(\theta)
  =
    \frac{1}{4\pi}
    \int_{\mR}
    \Tr U_\theta(\lambda)
    \rd \lambda,
\end{equation}
with the initial condition $\Ups(0)=0$. Here, the function (\ref{U}) is computed as
\begin{equation}
\label{U1}
U_\theta
=
\Psi
    (    \Psi\cos(
        \theta \Psi
        ) -
        \Phi
        \sin
        (\theta \Psi)
        )^{-1}
        (\Phi \cos(\theta \Psi)
    +\Psi\sin(\theta \Psi)
    )
\end{equation}
(the argument $\lambda$ is omitted for brevity),
takes values in the subspace of Hermitian matrices of order $n$ and satisfies a Riccati equation
\begin{equation}
\label{U'}
  \d_\theta U_\theta = \Psi^2 + U_\theta^2
\end{equation}
at any frequency $\lambda\in \mR$,
with the initial condition $U_0 = \Phi$ from (\ref{Phi0_Psi0}).\hfill$\square$
\end{theorem}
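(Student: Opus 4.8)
The plan is to fix the frequency $\lambda$ and work with the $n\times n$ matrices $\Phi=\Phi(\lambda)$ and $\Psi=\Psi(\lambda)$ from (\ref{Phi0_Psi0}), exploiting throughout that $\Psi$ is skew-Hermitian (so that $\cos(\theta\Psi)$ and $\sinc(\theta\Psi)$ are Hermitian while $\sin(\theta\Psi)$ is skew-Hermitian) and $\Phi$ is Hermitian. First I would obtain the ODE (\ref{Ups'}) by differentiating the representation (\ref{Ups}) under the integral sign, which is justified by dominated convergence: the condition (\ref{spec1}) keeps $\det D_\theta(\lambda)$ positive for every $\lambda$, and $\Phi,\Psi$ decay at infinity by the Riemann--Lebesgue lemma applied to (\ref{int}), providing an integrable bound on the $\theta$-derivative of the integrand. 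Jacobi's formula then gives $\d_\theta\ln\det D_\theta=\Tr(D_\theta^{-1}\d_\theta D_\theta)=-\Tr U_\theta$ for the function $U_\theta$ in (\ref{U}), while the initial condition $\Ups(0)=0$ is immediate from $D_0(\lambda)=I_n$ in (\ref{D}).

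To derive the explicit formula (\ref{U1}), I would differentiate (\ref{D}) in $\theta$. Using $\d_\theta\cos(\theta\Psi)=-\Psi\sin(\theta\Psi)$ together with $\d_\theta[\theta\sinc(\theta\Psi)]=\cos(\theta\Psi)$ (most transparently via $\theta\sinc(\theta\Psi)=\Psi^{-1}\sin(\theta\Psi)$), I get
\[
  \d_\theta D_\theta=-\Psi\sin(\theta\Psi)-\Phi\cos(\theta\Psi),
\]
so that $U_\theta=-D_\theta^{-1}\d_\theta D_\theta=D_\theta^{-1}(\Phi\cos(\theta\Psi)+\Psi\sin(\theta\Psi))$. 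The decisive algebraic step is the identity
\[
  D_\theta\Psi=\Psi\cos(\theta\Psi)-\Phi\sin(\theta\Psi),
\]
obtained by right-multiplying (\ref{D}) by $\Psi$ and cancelling $\Psi^{-1}\Psi$; inverting yields $D_\theta^{-1}=\Psi(\Psi\cos(\theta\Psi)-\Phi\sin(\theta\Psi))^{-1}$, whose substitution into the expression for $U_\theta$ reproduces (\ref{U1}).

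For the Riccati equation (\ref{U'}) the pivotal observation is that $D_\theta$ solves the second-order linear matrix ODE $\d_\theta^2 D_\theta=-D_\theta\Psi^2$, with $\Psi^2$ acting on the right; this I would check by differentiating $\d_\theta D_\theta$ once more, or by noting that both $\cos(\theta\Psi)$ and $\Psi^{-1}\sin(\theta\Psi)$ satisfy $\d_\theta^2 Y=-Y\Psi^2$. Differentiating $U_\theta=-D_\theta^{-1}\d_\theta D_\theta$ and using $\d_\theta(D_\theta^{-1})=-D_\theta^{-1}(\d_\theta D_\theta)D_\theta^{-1}$ then gives $\d_\theta U_\theta=(D_\theta^{-1}\d_\theta D_\theta)^2-D_\theta^{-1}\d_\theta^2 D_\theta=U_\theta^2+\Psi^2$, with $U_0=\Phi$ read off from $D_0=I_n$ and $\d_\theta D_\theta|_{\theta=0}=-\Phi$. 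Hermiticity of $U_\theta$ then follows cleanly from this equation: since $\Psi^*=-\Psi$ gives $(\Psi^2)^*=\Psi^2$, taking adjoints shows that $U_\theta^*$ solves the very same initial value problem, and uniqueness for the polynomial (hence locally Lipschitz) Riccati ODE forces $U_\theta^*=U_\theta$.

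The main obstacle is the bookkeeping forced by the noncommutativity of $\Phi$ and $\Psi$: each cancellation of $\Psi^{-1}\Psi$ and, in particular, the right-hand placement of $\Psi^2$ in the second-order ODE must be tracked with care, and the factorisation $D_\theta^{-1}=\Psi(\,\cdots)^{-1}$ is literally valid only where $\Psi(\lambda)$ is nonsingular. At frequencies where $\Psi(\lambda)$ is singular I would read (\ref{U1}) through the entire power series of $\sinc$, so that $U_\theta$, defined everywhere by $-D_\theta^{-1}\d_\theta D_\theta$, extends by continuity; the differentiation identity (\ref{Ups'}), the Riccati equation (\ref{U'}), and the Hermiticity argument all hold verbatim, as none of them requires inverting $\Psi$.
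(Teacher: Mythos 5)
Your proposal is correct and follows essentially the same route as the paper's proof: the Jacobi identity $(\ln\det N)'=\Tr(N^{-1}N')$ for (\ref{Ups'}), the differentiation $D_\theta'=-\Psi\sin(\theta\Psi)-\Phi\cos(\theta\Psi)$ and the rearrangement $D_\theta\Psi=\Psi\cos(\theta\Psi)-\Phi\sin(\theta\Psi)$ for (\ref{U1}), the second-order ODE $D_\theta''=-D_\theta\Psi^2$ feeding into the Riccati equation, and Hermiticity inherited from $U_0=\Phi$ through the ODE. Your additions are only refinements of steps the paper glosses over: an explicit dominated-convergence justification for differentiating under the integral, an explicit uniqueness argument for the Riccati IVP (the paper states the inheritance and also gives a separate direct algebraic verification $D_\theta(U_\theta-U_\theta^*)D_\theta^*=0$), and the power-series reading of $\sinc$ at frequencies where $\Psi(\lambda)$ is singular, which the paper handles by the same continuity convention.
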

\begin{proof}
The relation (\ref{Ups'}) is obtained from (\ref{Ups}), (\ref{U}) by applying the identity $(\ln\det N)' = \Tr (N^{-1}N')$, where $(\cdot)':= \d_\theta(\cdot)$, so that $(\ln\det D_\theta)' = -\Tr U_\theta$.   With $\frac{\sin(\theta z)}{z}$ extended by continuity to $\theta$ at $z=0$,
the function $D_\theta$ in (\ref{D})  is represented as
\begin{equation}
\label{D1}
    D_\theta
    =
    \cos(
        \theta \Psi
    ) -
        \Phi
        \Psi^{-1}
        \sin
        (\theta \Psi)
\end{equation}
for any $\lambda\in \mR$, and hence, its  differentiation with respect to $\theta$ yields
\begin{equation}
\label{D'}
    D_\theta'
    =
    -\Psi
    \sin(
        \theta \Psi
    )
    -
    \Phi
        \cos
        (\theta \Psi).
\end{equation}
Substitution of (\ref{D1}), (\ref{D'}) into (\ref{U}) leads to (\ref{U1}). By differentiating  (\ref{D'}) in $\theta$, it follows that (\ref{D1})
satisfies the linear second-order  ODE
\begin{equation}
\label{D''}
  D_\theta''
  =
    -\Psi^2
    \cos(
        \theta \Psi
    )
    +
    \Phi
    \Psi
        \sin
        (\theta \Psi)
    =
  - D_\theta \Psi^2,
\end{equation}
with the initial conditions $D_0 = I_n$, $D_0' = -\Phi$. Therefore, the differentiation of (\ref{U}) leads to
$    U_\theta'
    =
    -D_\theta^{-1}D_\theta''
    +D_\theta^{-1}D_\theta' D_\theta^{-1}D_\theta'
    =
    \Psi^2 + U_\theta^2
$,
where the relation $(N^{-1})' = -N^{-1}N'N^{-1}$ is used along with (\ref{D''}), thus establishing (\ref{U'}).  The solution of (\ref{U'}) inherits the Hermitian property from its initial condition $U_0 = \Phi$, since  $\Psi(\lambda) = -\Psi(\lambda)^*$ in (\ref{Phi0_Psi0}) for any $\lambda\in \mR$, and $(N^2)^* = (\pm N)^2 = N^2$
for Hermitian or skew  Hermitian matrices $N$, respectively.  Regardless of the initial value problem for the ODE (\ref{U'}),   the Hermitian property $U_\theta(\lambda)=U_\theta(\lambda)^*$ can also be obtained directly from (\ref{U}),  (\ref{D1}), (\ref{D'}) as
$
    D_\theta(U_\theta - U_\theta^*)D_\theta^*
     = D_\theta D_\theta'^*-D_\theta' D_\theta^*
    =
    (    c -
        \Phi
        \Psi^{-1}
        s
    )
    (    -\Psi
    s
    -
    \Phi
        c
    )^*
            +
    (\Psi
    s
    +
        \Phi
        c)
        (    c -
        \Phi
        \Psi^{-1}
        s
)^*
    =
    (
        \Phi
        \Psi^{-1}
        s
        -c
    )
    (\Psi
    s
    +
    c
    \Phi
     )
            +
    (\Psi
    s
    +
        \Phi
        c)
        (    c -
        \Psi^{-1}
        s
        \Phi
)
    =
    \Phi s^2 + \Phi\Psi^{-1}s c\Phi-c\Psi s-c^2\Phi
     + \Psi sc-s^2 \Phi+\Phi c^2-\Phi c \Psi^{-1}s \Phi
    =
    \Phi(c^2 + s^2) -  (c^2 + s^2)\Phi  = 0
$,
where the matrices $c:= \cos(\theta \Psi) = c^*$ and $s:= \sin(\theta \Psi) = -s^*$ commute between themselves and with $\Psi = -\Psi^*$ (but not necessarily with $\Phi = \Phi^*$) and satisfy $c^2 + s^2 = I_n$.
\end{proof}

The relation (\ref{U}), which links the quadratically nonlinear ODE (\ref{U'}) with the linear ODE (\ref{D''}), can be regarded as a matrix-valued analogue of the Hopf-Cole transformation\cite{C_1951,H_1950} converting the viscous Burgers equation to the heat (or diffusion) equation.  We also mention an analogy between (\ref{U}) and the logarithmic transformation in the context of dynamic programming
equations for stochastic control\cite{F_1982} (see also Ref.~\refcite{VP_2010}).

The right-hand side of (\ref{Ups'}) can be evaluated by numerical integration over the frequency axis and used for computing (\ref{Ups}) as
$
    \Ups(\theta)
    =
    \int_0^\theta \Ups'(v)\rd v
    =
    \frac{1}{4\pi}
    \int_{\mR \x [0,\theta]}
    \Tr U_v(\lambda)
    \rd \lambda \rd v
$.
In particular, (\ref{Ups'}) yields
$
    \Ups'(0)
    =
    \frac{1}{4\pi}
    \int_\mR
    \Tr \Phi(\lambda)
    \rd \lambda
    =
    \frac{1}{2}
    \bE(X(0)^\rT X(0))
    =
    \frac{1}{2}
    \lim_{T\to +\infty}
    \big(
        \frac{1}{T}
        \bE Q_T
    \big)
$,
which, in accordance with (\ref{Q}),  (\ref{Xi}),  is related to the mean square cost functional growth rate  for the process $X$.
In addition to its role for the computation of $\Ups$, the function $\Ups'$ admits the following representation (see also Theorem~1 of Ref.~\refcite{VPJ_2018a}):
\begin{equation}
\label{Ups'1}
  \Ups'(\theta)
  =
  \frac{1}{2}
  \lim_{T\to +\infty}
  \Big(
    \frac{1}{T}
    \bE_{\theta, T} Q_T
  \Big),
\end{equation}
where $\bE_{\theta, T} \zeta := \Tr(\rho_{\theta, T}\zeta)$ is the quantum expectation over a modified density operator $\rho_{\theta, T}:= \frac{1}{\Xi_{\theta, T}}\re^{\frac{\theta}{4}Q_T} \rho \re^{\frac{\theta}{4}Q_T}$.  Therefore, (\ref{Ups'1}) relates $\Ups'$ to the asymptotic growth rate of the weighted average of the quantum variable $Q_T$ in (\ref{Q}) rather than its exponential moment.

If $\Phi(\lambda)$ is a rational function  of the frequency $\lambda$, then so also is $\det (I_n - \theta \Phi(\lambda))$, and the integral in the classical counterpart (\ref{V}) of the QEF growth rate can be evaluated using the residue theorem\cite{S_1992}. This observation can be combined with the Maclaurin series expansions of the trigonometric functions, which allows (\ref{D}) to be approximated as
\begin{equation}
\label{Dprox}
    D_\theta
    =
    I_n - \frac{1}{2}\theta^2 \Psi^2 - \theta \Phi \Big(I_n - \frac{1}{6}\theta^2 \Psi^2\Big)
    +
    o(\theta^3)
    =
    I_n - \theta \Phi
    -
    \frac{1}{2}\theta^2
    \Big(I_n - \frac{\theta}{3}\Phi\Big)\Psi^2
    +
    o(\theta^3),
    \qquad
    {\rm as}\ \theta \to 0.
\end{equation}
Substitution of (\ref{Dprox}) into (\ref{Ups}) leads to the approximate computation of the QEF growth rate as a perturbation of its classical counterpart (\ref{V}):
\begin{equation}
\label{VUps}
  \Ups(\theta)
  =
  \Ups_*(\theta)
  +
  \frac{\theta^2}{8\pi}
  \int_\mR
  \Tr
  \Big((I_n - \theta\Phi(\lambda))^{-1}\Big(I_n - \frac{\theta}{3}\Phi(\lambda)\Big)\Psi(\lambda)^2\Big)
  \rd \lambda
  + o(\theta^3),
    \quad
    {\rm as}\
    \theta \to 0.
\end{equation}
If $\Psi$ is also a rational function, then
the integrand in (\ref{VUps}) is a rational function whose continuation to the closed right half-plane $\{s\in \mC: \Re s \> 0\}$ has no poles on the imaginary axis under the condition (\ref{class}). This makes the correction term also amenable to calculation via the residues in the open right-half plane. Since $\Psi(\lambda)^2\preccurlyeq 0$ for all $\lambda \in \mR$ (due to $\Psi(\lambda)^*= -\Psi(\lambda)$ as mentioned before),  and the function $\Psi$ is continuous and is not identically zero, the relation (\ref{VUps}) also implies that $\Ups(\theta)< \Ups_*(\theta)$ for all sufficiently small $\theta >0$.

\section{Application to Open Quantum Harmonic Oscillators}
\label{sec:sys}

The case of rational functions $\Phi$, $\Psi$ in (\ref{Phi0_Psi0}), mentioned in the previous section,  arises, for example,  in the context of an open quantum harmonic oscillator (OQHO) driven by external bosonic fields. The latter are modelled by a multichannel quantum Wiener process $W:=(W_k)_{1\< k \< m}$ consisting of an even number of time-varying self-adjoint operators $W_1(t), \ldots, W_m(t)$ on a symmetric Fock space\cite{P_1992} $\fF$. For any $t\>0$ and $k=1, \ldots, m$, the operator $W_k(t)$ acts on a subspace $\fF_t$ of $\fF$, with  the
increasing family $(\fF_t)_{t\> 0}$  forming a filtration in accordance with the continuous tensor-product structure\cite{PS_1972}  of $\fF$. The component quantum Wiener processes $W_1, \ldots, W_m$ satisfy the two-point CCRs
\begin{equation}
\label{WWcomm}
    [W(s), W(t)^\rT]
     =
    2i\min(s,t)J,
    \qquad
    s,t\>0,
\end{equation}
where
    $J:=  \bJ \ox I_{m/2}$
is an orthogonal real  antisymmetric matrix (so that $J^2 = -I_m$), with the matrix $\bJ$ given by (\ref{bJ}).
The OQHO under consideration
is endowed with an even number of time-varying  self-adjoint quantum variables $\cX _1(t), \ldots, \cX _\nu(t)$ acting on the subspace
\begin{equation}
\label{fHt}
    \fH_t:= \fH_0 \ox \fF_t
\end{equation}
of the system-field tensor-product space $\fH:= \fH_0 \ox \fF$. Here, $\fH_0$ is a complex separable Hilbert space for the action of the initial system variables $\cX _1(0), \ldots, \cX _\nu(0)$. The vector     $\cX :=(\cX _k)_{1\< k\< \nu}$ of the system variables satisfies the Weyl CCRs\cite{F_1989}
\begin{equation}
\label{Weyl}
  \re^{i(u+v)^\rT \cX (t)}
  =
  \re^{iu^\rT\Theta v}
    \re^{iu^\rT \cX (t)}
      \re^{iv^\rT \cX (t)},
      \qquad
      u,v\in \mR^\nu,\
      \
      t\> 0,
\end{equation}
specified by a constant matrix $\Theta = -\Theta^{\rT} \in \mR^{\nu\x \nu}$.
The infinitesimal Heisenberg form of (\ref{Weyl}), similar to (\ref{Xcomm}), is  given by
\begin{equation}
\label{XCCR}
    [\cX (t),\cX (t)^\rT] = 2i \Theta.
\end{equation}
The evolution of the OQHO is governed by a linear Hudson-Parthasarathy QSDE\cite{HP_1984,P_1992}
\begin{equation}
\label{dX}
    \rd \cX  = A\cX  \rd t + B\rd W,
\end{equation}
driven by the quantum Wiener process $W$ described above. Here, the matrices $A \in \mR^{\nu\x \nu}$, $B\in \mR^{\nu\x m}$ are parameterised as
\begin{equation}
\label{AB}
    A = 2\Theta (R + M^\rT JM),
     \qquad
     B = 2\Theta M^\rT
\end{equation}
by the energy and coupling matrices $R = R^\rT \in \mR^{\nu\x \nu}$, $M \in \mR^{m\x \nu}$ which specify the system Hamiltonian $\frac{1}{2} \cX ^\rT R \cX $ and the vector $M\cX $ of $m$ system-field coupling operators. Due to their specific structure (\ref{AB}), the matrices $A$, $B$ satisfy the physical realizability (PR) condition\cite{JNP_2008}
\begin{equation}
\label{PR}
    A \Theta + \Theta A^\rT +  \mho = 0,
    \qquad
    \mho := BJB^\rT,
\end{equation}
which is an algebraic manifestation of the CCR matrix preservation in (\ref{XCCR}).
We will be concerned with the QEF (\ref{Xi}) for a quantum process $X:= (X_k)_{1\< k \< n}$ of time-varying self-adjoint operators $X_1, \ldots, X_n$ on $\fH$ related to the system variables of the OQHO by
\begin{equation}
\label{X}
    X:= S \cX,
\end{equation}
where $S \in \mR^{n\x \nu}$ is a given weighting matrix, which specifies the relative importance of the system variables. Since $S$ enters (\ref{Q}) only through the matrix $S^\rT S$ as $    Q_T
    =
    \int_0^T
    \cX(t)^{\rT} S^\rT S \cX(t)
    \rd t
$, then it can be assumed, without loss of generality, that $S$ is of full row rank:
\begin{equation}
\label{rank}
  \rank S = n \< \nu.
\end{equation}
The process $X$ in (\ref{X}) satisfies the two-point CCRs of the last equality  in (\ref{muPstat}), with the commutator kernel computed as\cite{VPJ_2018a}
\begin{equation}
\label{Lambda}
    \Lambda(\tau)
     :=
    \left\{
    {\begin{matrix}
    S\re^{\tau A}\Theta S^\rT& {\rm if}\  \tau\> 0\\
    S\Theta\re^{-\tau A^{\rT}}S^\rT & {\rm if}\  \tau< 0\\
    \end{matrix}}
    \right.,
    \qquad
    \tau \in \mR,
\end{equation}
so that its one-point CCR matrix is $\Lambda(0) = S\Theta S^\rT$, in accordance with (\ref{XCCR}).
The following theorem relates the corresponding integral operator $\cL_T$ in (\ref{cPT_cLT}) to a boundary value problem.

\begin{lemma}
\label{lem:BVP}
Suppose the quantum process $X$ in  (\ref{X}),  with the CCR kernel $\Lambda$ in  (\ref{Lambda}),  is  associated with the system variables of the OQHO described by   (\ref{WWcomm})--(\ref{AB}). Then for any time horizon $T>0$, the operator $\cL_T$ in (\ref{cPT_cLT})  maps a function $f\in L^2([0,T], \mC^n)$  to
\begin{equation}
\label{cLTS}
  \cL_T(f)(t) = S \cC h(t),
  \qquad
  0\< t \< T,
\end{equation}
where $h: [0,T]\to \mC^{2\nu}$ is an absolutely continuous function,  which is split into $h_{\pm}: [0,T]\to \mC^\nu$ as
\begin{equation}
\label{hhh}
    h(t)
    :=
    \begin{bmatrix}
      h_+(t)\\
      h_-(t)
    \end{bmatrix}
\end{equation}
and satisfies the ODE
\begin{equation}
\label{BVP}
  \dot{h}(t)
  =
  \cA h(t) + \cB S^\rT f(t),
  \qquad
  0 \< t\< T
\end{equation}
(with $\dot{(\, )}$ the time derivative)
subject to the boundary conditions
\begin{equation}
\label{h+-0}
    h_+(0) = h_-(T) = 0.
\end{equation}
Here, the matrices $\cA^{2\nu\x 2\nu}$, $\cB\in \mR^{2\nu \x \nu}$, $\cC \in \mR^{\nu\x 2\nu}$ are given by
\begin{equation}
\label{cABC}
    \cA
    :=
  \begin{bmatrix}
    A & 0\\
    0 & -A^\rT
  \end{bmatrix},
  \qquad
  \cB :=
  \begin{bmatrix}
    \Theta \\
    -I_\nu
  \end{bmatrix},
  \qquad
      \cC
    :=
  \begin{bmatrix}
    I_\nu & \Theta
  \end{bmatrix}.
\end{equation}
Every eigenfunction of $\cL_T$ with a nonzero eigenvalue is infinitely differentiable.
\hfill$\square$
\end{lemma}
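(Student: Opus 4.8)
The plan is to evaluate $\cL_T(f)(t)=\int_0^T \Lambda(t-s)f(s)\,\rd s$ directly by inserting the explicit commutator kernel (\ref{Lambda}) and splitting the integral at the diagonal $s=t$ according to the sign of $t-s$. For $s\in[0,t]$ one has $t-s\>0$, so that $\Lambda(t-s)=S\re^{(t-s)A}\Theta S^\rT$, whereas for $s\in[t,T]$ one has $t-s<0$, so that $\Lambda(t-s)=S\Theta \re^{(s-t)A^\rT}S^\rT$. Introducing the ``causal'' and ``anticausal'' vector functions
\[
  h_+(t):=\int_0^t \re^{(t-s)A}\Theta S^\rT f(s)\,\rd s,
  \qquad
  h_-(t):=\int_t^T \re^{(s-t)A^\rT}S^\rT f(s)\,\rd s,
\]
the two contributions combine into $\cL_T(f)(t)=S h_+(t)+S\Theta h_-(t)=S\cC h(t)$, with $h$ as in (\ref{hhh}) and $\cC=[I_\nu\ \ \Theta]$ from (\ref{cABC}), which is exactly (\ref{cLTS}). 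Since $f\in L^2([0,T],\mC^n)$ is in particular integrable, both $h_\pm$ are integrals with variable limits of an $L^1$ integrand and hence absolutely continuous, which establishes the asserted regularity of $h$.

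Next I would read off the boundary conditions (\ref{h+-0}) at once: $h_+(0)$ and $h_-(T)$ are integrals over degenerate intervals and therefore vanish. The ODE (\ref{BVP}) then follows by differentiating $h_\pm$ under the integral sign. For $h_+$ the Leibniz rule produces a boundary term at the upper endpoint $s=t$, equal to $\re^{0}\Theta S^\rT f(t)=\Theta S^\rT f(t)$, plus the interior term $\int_0^t A\re^{(t-s)A}\Theta S^\rT f(s)\,\rd s=A h_+(t)$, giving $\dot h_+=A h_+ +\Theta S^\rT f$. For $h_-$ the boundary term arises at the lower endpoint $s=t$ with a minus sign, equal to $-S^\rT f(t)$, while $\d_t \re^{(s-t)A^\rT}=-A^\rT\re^{(s-t)A^\rT}$ yields the interior term $-A^\rT h_-(t)$, giving $\dot h_-=-A^\rT h_- -S^\rT f$. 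Stacking these two scalar blocks reproduces $\dot h=\cA h+\cB S^\rT f$ with $\cA$ and $\cB$ precisely as in (\ref{cABC}). The only point demanding care here is the bookkeeping of the two Leibniz boundary terms and the sign attached to the $\d_t \re^{(s-t)A^\rT}$ contribution, since this is exactly what forces the block-diagonal form $\cA=\mathrm{diag}(A,-A^\rT)$ and the specific structure of $\cB$; beyond this sign accounting there is no genuine analytic difficulty.

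Finally, for the smoothness of an eigenfunction $f$ with $\cL_T(f)=i\omega f$ and $\omega\ne 0$, I would run a bootstrap argument built on the representation just obtained. Writing $f=(i\omega)^{-1}\cL_T(f)=(i\omega)^{-1}S\cC h$ and using the absolute continuity of $h$ shows $f\in C^0([0,T],\mC^n)$, and hence $h\in C^0$ as well. Assuming inductively that $f,h\in C^k$, the right-hand side $\cA h+\cB S^\rT f$ of (\ref{BVP}) lies in $C^k$, so $\dot h\in C^k$ and therefore $h\in C^{k+1}$; then $f=(i\omega)^{-1}S\cC h\in C^{k+1}$ too. Iterating this step yields $f,h\in C^\infty$, which proves the last assertion of the lemma.
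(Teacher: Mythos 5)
Your proposal is correct and follows essentially the same route as the paper's proof: the same splitting of the kernel integral at $s=t$ into the causal and anticausal parts $h_\pm$, the same Leibniz-rule differentiation giving $\dot h_+ = Ah_+ + \Theta S^\rT f$ and $\dot h_- = -A^\rT h_- - S^\rT f$, and the same assembly into the ODE (\ref{BVP}) with the boundary conditions (\ref{h+-0}). The only minor deviation is in the final smoothness claim, where the paper first exploits the identity $\cC\cB=0$ to show that $\cL_T(f)$ has an absolutely continuous derivative and a square-integrable second derivative for \emph{every} $f\in L^2([0,T],\mC^n)$ before bootstrapping, whereas you bootstrap directly through the ODE starting from the continuity of $h$ and the eigenfunction relation $f=(i\omega)^{-1}S\cC h$ --- both are valid versions of the same induction.
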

\begin{proof}
Due to the structure of the commutator kernel $\Lambda$ in  (\ref{Lambda}), the operator $\cL_T$ in (\ref{cPT_cLT}) acts from $L^2([0,T],\mC^n)$ to the space of twice differentiable functions with square integrable second-order derivatives. Indeed, similarly to the Wiener-Hopf method\cite{GK_1958}, the image $g=\cL_T(f)$ of $f\in L^2([0,T],\mC^n)$ can be represented as
\begin{equation}
\label{gh}
  g(t) =
  S \cC
  h(t),
  \qquad
  0\< t\< T,
\end{equation}
in terms of the matrices $S$,  $\cC$ in (\ref{X}), (\ref{cABC}) and the function $h$ in (\ref{hhh}),  given by
\begin{equation}
\label{h+-}
    \qquad
    h_+(t)
    :=
    \int_0^t
    \re^{(t-\tau)A}\Theta S^\rT
    f(\tau)\rd \tau,
    \qquad
    h_-(t)
    :=
    \int_t^T
    \re^{(\tau-t)A^\rT}
    S^\rT
    f(\tau)\rd \tau
\end{equation}
and hence, satisfying the boundary conditions (\ref{h+-0}).
By using the matrix exponential structure of the kernel functions again, differentiation of (\ref{h+-})   yields
\begin{equation}
\label{h+-'}
    \dot{h}_+ = Ah_+ + \Theta S^\rT f,
    \qquad
    \dot{h}_- = - A^\rT h_--S^\rT f.
\end{equation}
In view of (\ref{hhh}), the ODEs (\ref{h+-'}) can be assembled into (\ref{BVP}),   with the matrices $\cA$, $\cB$ given by (\ref{cABC}), thus establishing (\ref{cLTS}).
Since $\cC \cB = 0$, the derivative of $g$ in (\ref{gh}), computed as
\begin{equation}
\label{gdot}
    \dot{g}
    =
    S\cC (\cA h + \cB S^\rT f) = S\cC\cA h
\end{equation}
by using (\ref{BVP}), is absolutely continuous, and  hence, the second derivative
\begin{equation}
\label{gddot}
    \ddot{g}
    =
    S\cC\cA (\cA h + \cB S^\rT f)
    =
    S\cC\cA^2 h + S\cC \cA \cB S^\rT f
\end{equation}
is square integrable over $[0,T]$ for any $f \in L^2([0,T], \mC^n)$. Therefore, if $f$ is an eigenfunction of $\cL_T$ with an  eigenfrequency $\omega>0$, then $f = \frac{1}{i\omega} g$ inherits this property (of being twice differentiable and having a square integrable second derivative), and hence, so also does the right-hand side of (\ref{gddot}). Successive differentiation of both sides of (\ref{gddot}) and application of induction leads to infinite differentiability of the eigenfunction $f$.
\end{proof}

Lemma~\ref{lem:BVP} represents the operator $\cL_T$ in (\ref{cPT_cLT})  as an auxiliary dynamical system on the time interval $[0,T]$ with the input $f$, output $g = \cL_T(f)$ in (\ref{cLTS}) and internal state $h$ governed by (\ref{BVP}). Accordingly, an appropriate restriction of the CCR kernel $\Lambda$ in (\ref{Lambda}) is the Green function for the BVP (\ref{BVP}), (\ref{h+-0}).
The presence of the boundary constraints (\ref{h+-0}) on the internal state of this system complicates analysis of its zero dynamics and development of conditions for $\cL_T$ not having  zero eigenvalues. The full row rank condition (\ref{rank}) on $S$ is necessary for $\ker \cL_T = \{0\}$. Indeed, otherwise, $\ker (S^\rT)\ne \{0\}$, and any nonzero function $f \in L^2([0,T], \ker (S^\rT))$ satisfies $\cL_T(f) =0$, whereby $\ker \cL_T \ne \{0\}$. Sufficient conditions  are provided below.

\begin{theorem}
\label{th:non0}
Suppose the assumptions of Lemma~\ref{lem:BVP} are satisfied, and the matrix $\mho = -\mho^\rT  \in \mR^{\nu\x \nu}$ in (\ref{PR}) is nonsingular:
\begin{equation}
\label{BJB}
  \det \mho \ne 0.
\end{equation}
Then for any time horizon $T>0$, the image of a function $f\in L^2([0,T], \mC^n)$ under the operator $\cL_T$ in (\ref{cPT_cLT}) can be represented as
\begin{equation}
\label{out}
  \cL_T(f)(t) = Sz(t),
  \qquad
  0\< t \< T,
\end{equation}
where $z:[0,T]\to \mC^\nu$ is a twice differentiable function satisfying the second-order ODE
\begin{equation}
\label{z''}
  \ddot{z}+(\mho A^\rT \mho^{-1}-A) \dot{z}-\mho A^\rT \mho^{-1} A z =-\mho S^\rT f,
\end{equation}
with the boundary conditions
\begin{equation}
\label{z+0_z-T}
    (I_\nu+\Theta \mho^{-1} A)z(0) = \Theta \mho^{-1} \dot{z}(0),
    \qquad
    \dot{z}(T)  = A z(T).
\end{equation}
If, in addition to the above assumptions, the weighting matrix $S$ in (\ref{X}) is square and nonsingular,
\begin{equation}
\label{detS}
  n=\nu, \qquad \det S\ne 0,
\end{equation}
then the operator $\cL_T$ has no zero eigenvalues for any $T>0$.
\hfill$\square$
\end{theorem}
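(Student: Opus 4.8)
The plan is to build the entire argument on Lemma~\ref{lem:BVP}, which already realises $\cL_T$ as the input--output map of the auxiliary linear system (\ref{BVP}), (\ref{h+-0}). Writing $z:=\cC h = h_+ + \Theta h_-$ with $\cC$, $\cA$, $\cB$ from (\ref{cABC}) and $h$ from (\ref{hhh}), the representation (\ref{cLTS}) immediately yields $\cL_T(f)(t)=Sz(t)$, which is (\ref{out}). The regularity of $z$ and the relations feeding into the ODE come from differentiating along the flow of (\ref{BVP}). Since $\cC\cB = \Theta-\Theta = 0$, one gets $\dot z = \cC\cA h = A h_+ - \Theta A^\rT h_-$, which is again absolutely continuous, so $z$ is twice differentiable; and since $\cC\cA\cB = A\Theta + \Theta A^\rT = -\mho$ by the physical realizability condition (\ref{PR}), differentiating once more gives $\ddot z = \cC\cA^2 h - \mho S^\rT f = A^2 h_+ + \Theta(A^\rT)^2 h_- - \mho S^\rT f$.

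The second-order ODE (\ref{z''}) should then follow by eliminating $h_+$ and $h_-$ among the three expressions for $z$, $\dot z$, $\ddot z$. Concretely, abbreviating $P:=\mho A^\rT\mho^{-1}$ (which exists under (\ref{BJB})), the claim (\ref{z''}) is exactly $\ddot z = (A-P)\dot z + PA z - \mho S^\rT f$, so it suffices to verify the matrix identity $(A-P)\dot z + PAz = A^2 h_+ + \Theta(A^\rT)^2 h_-$. Substituting the expressions for $z$ and $\dot z$, I expect the coefficients of $h_+$ to match automatically (the $PA h_+$ terms cancel), while the coefficient of $h_-$ collapses, after repeated use of (\ref{PR}) in the form $A\Theta+\Theta A^\rT = -\mho$, to the single relation $\mho A^\rT = P\mho$, which is just the definition of $P$. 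This is the computational core of the statement and the step I expect to be the main obstacle: the algebra must be organised so that every occurrence of $A\Theta$ and $\Theta A^\rT$ is absorbed through the physical realizability identity, never through any inverse of $\Theta$ (which is not assumed to exist).

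The boundary conditions (\ref{z+0_z-T}) I would read off directly from the endpoint constraints (\ref{h+-0}). At $t=T$, the condition $h_-(T)=0$ gives $z(T)=h_+(T)$ and $\dot z(T)=A h_+(T)=A z(T)$, which is the second relation in (\ref{z+0_z-T}). At $t=0$, the condition $h_+(0)=0$ gives $z(0)=\Theta h_-(0)$ and $\dot z(0)=-\Theta A^\rT h_-(0)$; forming $(I_\nu+\Theta\mho^{-1}A)z(0)-\Theta\mho^{-1}\dot z(0)$ and substituting these, the expression becomes $\Theta\mho^{-1}(\mho + A\Theta + \Theta A^\rT)h_-(0)$, which vanishes by (\ref{PR}), yielding the first relation in (\ref{z+0_z-T}). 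As in the ODE step, this derivation uses only the nonsingularity of $\mho$ and never inverts $\Theta$.

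For the final nonsingularity claim, under the additional hypothesis (\ref{detS}) I would argue as follows. Suppose $\cL_T(f)=0$. By (\ref{out}) and $\det S\ne 0$ this forces $Sz\equiv 0$, hence $z\equiv 0$, so $\dot z\equiv 0$ and $\ddot z\equiv 0$ on $[0,T]$. Substituting into the ODE (\ref{z''}) leaves $\mho S^\rT f = 0$, and since both $\mho$ (by (\ref{BJB})) and $S^\rT$ (by (\ref{detS})) are nonsingular, this gives $f=0$. Therefore $\ker\cL_T=\{0\}$ for every $T>0$, which is the assertion. I anticipate that no additional estimates are needed here: once Part~1 has been established, the forcing term $-\mho S^\rT f$ with invertible $\mho$ and $S^\rT$ does all the work.
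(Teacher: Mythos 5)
Your proposal is correct and follows essentially the same route as the paper's proof: both rest entirely on Lemma~\ref{lem:BVP}, set $z=\cC h$, use $\cC\cB=0$ and $\cC\cA\cB=A\Theta+\Theta A^\rT=-\mho$ together with the PR condition (\ref{PR}) to obtain the ODE (\ref{z''}) and the boundary conditions (\ref{z+0_z-T}), and conclude $\ker\cL_T=\{0\}$ from the invertibility of $\mho S^\rT$ exactly as you do. The only difference is organisational: the paper eliminates $h$ by inverting the block matrix $\begin{bmatrix}\cC\\ \cC\cA\end{bmatrix}=\begin{bmatrix}I_\nu & \Theta\\ A & -\Theta A^\rT\end{bmatrix}$ (nonsingular since its Schur complement gives $\det\mho\ne 0$) to express $h_\pm$ through $(z,\dot z)$ and then substitutes, whereas you verify the same identities in the forward direction, absorbing every $A\Theta$ and $\Theta A^\rT$ via (\ref{PR}) and using $\mho^{-1}$ only through $\mho A^\rT\mho^{-1}$ --- the same algebra run the opposite way.
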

\begin{proof}
In view of (\ref{cLTS}),  the relation (\ref{out}) holds with the function $z$ defined in terms of (\ref{hhh}) as
\begin{equation}
\label{zh}
  z:= \cC h
\end{equation}
and inheriting the absolute continuity from $h$. Similarly to (\ref{gdot}), differentiation of (\ref{zh}) leads to
\begin{equation}
\label{zdot}
    \dot{z}
    =
    \cC (\cA h + \cB S^\rT f) = \cC\cA h,
\end{equation}
where use is also made of (\ref{BVP}).
A combination of (\ref{zh}), (\ref{zdot}) with (\ref{cABC}) yields
\begin{equation}
\label{zzh}
    \begin{bmatrix}
      z\\
      \dot{z}
    \end{bmatrix}
    =
    \begin{bmatrix}
      \cC\\
      \cC\cA
    \end{bmatrix}
    h
    =
    \begin{bmatrix}
      I_\nu & \Theta\\
      A & -\Theta A^\rT
    \end{bmatrix}
    h.
\end{equation}
By using the Schur complement\cite{HJ_2007} of the block $I_\nu$ together with the PR condition (\ref{PR}) and the assumption (\ref{BJB}), it follows that the matrix in (\ref{zzh}) is nonsingular:
\begin{equation}
\label{detmho}
    \det
    {\begin{bmatrix}
      I_\nu & \Theta\\
      A & -\Theta A^\rT
    \end{bmatrix}}
    =
    \det( - A\Theta - \Theta A^\rT) = \det \mho \ne 0  .
\end{equation}
The inversion of this matrix solves (\ref{zzh}) for $h$ as
\begin{equation}
\label{hzz}
    h
    =
    {\begin{bmatrix}
      I_\nu + \Theta \mho^{-1} A & -\Theta\mho^{-1}\\
      -\mho^{-1}A & \mho^{-1}
    \end{bmatrix}
    \begin{bmatrix}
      z\\
      \dot{z}
    \end{bmatrix}},
\end{equation}
thus allowing the boundary conditions (\ref{h+-0}) to be represented in terms of $z$, $\dot{z}$ by (\ref{z+0_z-T}).
Similarly to (\ref{gddot}), by differentiating (\ref{zdot}) and using the ODE (\ref{BVP}) together with the matrices (\ref{cABC}), the PR condition (\ref{PR}) and the relation (\ref{hzz}), it follows that
$
    \ddot{z}
     =
    \cC\cA \dot{h}
    =
    \cC\cA^2 h + \cC \cA \cB S^\rT f
    =
    {\small\begin{bmatrix}
      A^2 &  \Theta  (A^\rT)^2
    \end{bmatrix}
    \begin{bmatrix}
      I_\nu + \Theta \mho^{-1} A & -\Theta\mho^{-1}\\
      -\mho^{-1}A & \mho^{-1}
    \end{bmatrix}
    \begin{bmatrix}
      z\\
      \dot{z}
    \end{bmatrix}}
    +
    (A\Theta + \Theta  A^\rT) S^\rT f
    =
    \mho A^\rT \mho^{-1} A z  + (A-\mho A^\rT \mho^{-1}) \dot{z} -\mho S^\rT f
$,
which establishes (\ref{z''}). Here, use is also made of the identity
$
    A^2 \Theta - \Theta (A^\rT)^2 =
    A A\Theta - \Theta A^\rT A^\rT
    =
    -A(\mho + \Theta A^\rT) + (\mho + A\Theta )A^\rT
    =
    \mho A^\rT - A \mho
$, following from (\ref{PR}). Now, let (\ref{detS}) be satisfied. Then, in view of   (\ref{out}), the inclusion $f \in \ker \cL_T$ holds if and only if the function $z$ is identically zero over the time interval $[0,T]$ and hence,  so also are its derivatives,  including $\dot{z}$, $\ddot{z}$. The latter makes the left-hand side of the ODE (\ref{z''}) vanish identically, which  is equivalent to $f=0$ almost everywhere on $[0,T]$ since $\det (\mho S^\rT)\ne 0$ (both matrices $\mho$ and $S$ are nonsingular),  and hence, $\ker \cL_T = \{0\}$.
\end{proof}

The relation (\ref{detmho}) implies the observability\cite{AM_1971} of the matrix pair $(\cC, \cA)$ in (\ref{cABC}) and the controllability of the pair $(\cA, \cB)$. The latter property follows from the identity
$
    (\cC \cA^k)^\rT
    =
    (\cA^k)^\rT \cC^\rT
    =
    -
    (\cA^k)^\rT
    {\small\begin{bmatrix}
      0 & I_\nu\\
      I_\nu & 0
    \end{bmatrix}}
    \cB
    =
    -
    (-1)^k
    {\small\begin{bmatrix}
      0 & I_\nu\\
      I_\nu & 0
    \end{bmatrix}}
    \cA^k
    \cB
$ for all     $
    k = 0, 1, 2, \ldots$
and can also be obtained directly from $\det \mho\ne 0$ by using the Schur complement technique as
$
    \det
    {\small\begin{bmatrix}
      \cB & \cA \cB
    \end{bmatrix}}
    =
    \det
    {\small\begin{bmatrix}
      \Theta  & A\Theta \\
      -I_\nu & A^\rT
    \end{bmatrix}}
    =
    \det( A\Theta + \Theta A^\rT) = (-1)^\nu\det \mho \ne 0
$.
In the case of (\ref{detS}), the process $X$ in (\ref{X}) itself consists of the system variables of an OQHO with the CCR matrix $S\Theta S^\rT$ and the energy and coupling  matrices $S^{-\rT} RS^{-1}$, $MS^{-1}$ obtained by an appropriate transformation of those in (\ref{XCCR}), (\ref{AB}), with $S^{-\rT}:= (S^{-1})^\rT$. The corresponding transformation of the QSDE (\ref{dX}) is $    \rd X  = SAS^{-1}X  \rd t + SB\rd W
$.

Now, suppose the input bosonic fields are in the vacuum state\cite{P_1992} $\ups$ on the Fock space $\fF$, with the density operator $\rho$ in (\ref{bE}) given by
\begin{equation}
\label{rhoups}
    \rho
    :=
    \rho_0\ox \ups,
\end{equation}
where $\rho_0$ is the initial system state on $\fH_0$.
If, in addition to (\ref{rhoups}),
the matrix $A$ in (\ref{AB}) is Hurwitz,
then  the OQHO (\ref{dX}) has a unique invariant zero-mean Gaussian quantum state\cite{VPJ_2018a}. In this invariant  state, $X$ in (\ref{X}) is a stationary Gaussian quantum process satisfying (\ref{muPstat}), where, in addition to (\ref{Lambda}), which does not  depend on a particular state, the real part of the  quantum covariance function takes the form
\begin{equation}
\label{P}
    P(\tau)
     :=
    \left\{
    {\begin{matrix}
    S\re^{\tau A}\Gamma S^\rT& {\rm if}\  \tau\> 0\\
    S\Gamma\re^{-\tau A^{\rT}}S^\rT & {\rm if}\  \tau< 0\\
    \end{matrix}}
    \right.,
    \qquad
    \tau \in \mR.
\end{equation}
Here, $\Gamma= \int_{\mR_+} \re^{tA} BB^\rT \re^{tA^\rT} \rd t$ is the controllability Gramian\cite{AM_1971} of the matrix pair $(A,B)$ satisfying the algebraic Lyapunov equation (ALE)
$    A \Gamma + \Gamma A^\rT +  BB^\rT = 0$,
which can be combined with (\ref{PR}) as
$A (\Gamma+i\Theta) + (\Gamma+i\Theta) A^\rT +  B(I_m+iJ)B^\rT = 0$.
Since $A$ is Hurwitz, the quantum covariance function $P(\tau) + i\Lambda(\tau)$  of the process $X$ decays exponentially fast,  as $\tau\to \infty$, thus securing the integrability in (\ref{int}). The Fourier transforms (\ref{Phi0_Psi0}) of the
kernels (\ref{P}), (\ref{Lambda}) are rational functions (see also Eq.~(5.8) of Ref.~\refcite{VPJ_2019a}):
\begin{equation}
\label{Phi1_Psi1}
    \Phi(\lambda)
    =
    F(i\lambda) BB^\rT F(i\lambda)^*,
    \qquad
    \Psi(\lambda)
    =
    F(i\lambda) \mho F(i\lambda)^*,
    \qquad
    \lambda \in \mR,
\end{equation}
where use is made of the matrix $\mho$ from (\ref{PR}) along with the $\mC^{n\x \nu}$-valued transfer function
\begin{equation}
\label{F0}
    F(s)
    :=
    S
    (sI_\nu - A)^{-1},
    \qquad
    s \in \mC,
\end{equation}
which relates the Laplace transform of the process $X$ in (\ref{X}) to that of the incremented process $BW$ in (\ref{dX}):
$
    \int_{\mR_+}
    \re^{-st}
    X(t)
    \rd t
    =
    S
    (sI_\nu - A)^{-1}\cX(0)
    +
    F(s)B
    \int_{\mR_+}
    \re^{-st} \rd W(t)
$,
provided $\Re s >0$.  Since $A$ is Hurwitz,
the rational functions $\Phi$, $\Psi$ in (\ref{Phi1_Psi1}) have no poles in the strip $\{\lambda\in \mC: |\Im \lambda| < |\ln \br (\re^A)|\}$. The fulfillment of the sufficient  conditions of Theorem~\ref{th:non0} makes Theorems~\ref{th:limXi} and \ref{th:diff} applicable to computing the QEF growth rate (\ref{Ups}) for the invariant state of the stable OQHO in terms of the Fourier transforms (\ref{Phi1_Psi1}).
In this case, in view of (\ref{F0}),  the conditions (\ref{BJB}), (\ref{detS}) imply that $\det \Psi(\lambda)\ne 0$ for all $\lambda \in \mR$, which extends to the strip mentioned above.

\section{Quantum Statistical Uncertainty with an Entropy Theoretic Description}
\label{sec:worst}

In addition to the tail probability bounds (\ref{supP}), which apply to the  OQHO in its invariant Gaussian state, the QEF  growth rate $\Ups(\theta)$ provides asymptotic upper bounds
for the mean square cost functionals $\bE Q_T$ in the presence of statistical uncertainty described in terms of quantum relative entropy (see Sec.~IV of Ref.~\refcite{VPJ_2018b}
 and references therein).
More precisely, consider a setting,
 where the system-field density operator $\rho$ in (\ref{rhoups}) specifies a nominal quantum state which can differ  from the true density operator $\sigma$ on the system-field space $\fH$. Accordingly, $\bE(\cdot)$ in  (\ref{bE}) is interpreted as the nominal quantum expectation which, in general, differs  from its counterpart
\begin{equation}
\label{bEtrue}
    \bE_\sigma \zeta:= \Tr (\sigma \zeta)
\end{equation}
over the true state $\sigma$.
Since for any time horizon $T>0$,  the quantum variable $Q_T$ in (\ref{Q}) acts on the system-field subspace $\fH_T$ in (\ref{fHt}), its true and nominal moments depend on
\begin{equation}
\label{sigrhoT}
    \sigma_T:= \fP_T \sigma \fP_T,
    \qquad
    \rho_T:= \fP_T \rho \fP_T = \rho_0 \ox \ups_T,
\end{equation}
respectively, where $\ups_T$ is the vacuum field state on the Fock subspace $\fF_T$, and $\fP_T$ is the orthogonal projection of $\fH$ onto $\fH_T$ associated with the time interval $[0,T]$. In particular, the QEF $\Xi_{\theta,T}$ in (\ref{Xi}) is over the restricted nominal density operator $\rho_T$.  It is the true density operator $\sigma$, rather than its nominal model $\rho$,  that determines physically meaningful statistical properties of quantum variables (such as (\ref{bEtrue})), with the discrepancy between them constituting the quantum statistical uncertainty. If it is assumed a priori that $\sigma$, which is known imprecisely,  is not ``too far'' from the reference quantum state $\rho$,  this uncertainty can be described, for example,  by
the class
\begin{equation}
\label{fSeps}
    \fS_\eps
    :=
  \Big\{
    \sigma:\
    \limsup_{T\to +\infty}
    \Big(
    \frac{1}{T}
    \bD(\sigma_T \| \rho_T)
    \Big)
    \< \eps
  \Big\}.
\end{equation}
Here, the parameter $\eps\> 0$ limits the growth rate of the quantum relative entropy\cite{OW_2010} of the restricted true density operator $\sigma_T$ with respect to its nominal model $\rho_T$ in (\ref{sigrhoT}):
\begin{equation}
\label{bD}
  \bD(\sigma_T \| \rho_T)
  :=
  \bE_\sigma
  (\ln \sigma_T - \ln \rho_T)
  =
  \Tr
  (\sigma_T (\ln \sigma_T - \ln \rho_T))
  =
  - \bH(\sigma_T)
  -\bE_\sigma\ln \rho_T,
\end{equation}
where $\bH(\sigma_T)= -\bE_\sigma \ln \sigma_T = -\Tr(\sigma_T \ln \sigma_T)$ is the von Neumann entropy of $\sigma_T$ (cf. Eq.~(7) of Ref.~\refcite{YB_2009}), and the idempotence $\fP_T^2 = \fP_T$ of the projection operators is used. The supports of the density operators (as subspaces of $\fH_T$ in (\ref{fHt})) are assumed  to satisfy the inclusion $\supp \sigma_T \subset \supp \rho_T$ for all $T>0$,  which is a quantum analogue of the absolute continuity of classical probability measures.
Similarly to its classical counterpart\cite{CT_2006}, the quantum relative entropy (\ref{bD}) is always nonnegative and vanishes only if $\sigma_T  = \rho_T$. Therefore, the nominal state $\rho \in \fS_\eps$ and the parameter $\eps\> 0$ can be regarded as the ``centre'' and ``size'' of the uncertainty class $\fS_\eps$, respectively.
For large values of $\eps>0$, some states $\sigma \in \fS_\eps$ in (\ref{fSeps})  can lead to substantially higher values of the cost functionals than those predicted by the nominal model. In application to quadratic costs, the following theorem provides an upper bound on the worst-case growth rate for such a functional in terms of the nominal QEF.

\begin{theorem}
\label{th:worst}
Suppose the OQHO in (\ref{WWcomm})--(\ref{AB}) is stable (its matrix $A$ is Hurwitz), and the quantum variables $Q_T$ in (\ref{Q}) are associated with the process $X$ in (\ref{X}), with the operator $\cL_T$ in (\ref{cPT_cLT}) with   the CCR kernel $\Lambda$ in  (\ref{Lambda}) having no zero eigenvalues for all sufficiently  large $T>0$.
Also, suppose the class of true system-field states $\sigma$ is specified by  (\ref{fSeps}), with $\eps \> 0$.    Then the worst-case growth rate of the mean square cost $\bE_\sigma Q_T$ satisfies
\begin{equation}
\label{supEQ}
  \sup_{\sigma  \in \fS_\eps}
  \limsup_{T\to+\infty}
  \Big(
  \frac{1}{T}
  \bE_\sigma Q_T
  \Big)
  \<
  2
  \inf_{\theta >0\ {\rm subject\ to}\ (\ref{spec1})}
    \frac{\Ups(\theta) + \eps}{\theta}.
\end{equation}
Here, $\Ups(\theta)$ is the QEF growth rate (\ref{Ups}) computed in terms of the spectral functions (\ref{Phi1_Psi1}) for the invariant state of the OQHO driven by vacuum fields in the framework of the nominal model.
\hfill$\square$
\end{theorem}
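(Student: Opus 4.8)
The plan is to derive (\ref{supEQ}) from the Legendre-type duality between the quantum relative entropy and the cumulant generating functional $\ln\Xi_{\theta,T}$, combined with the infinite-horizon limit of Theorem~\ref{th:limXi}. The cornerstone is the variational inequality
\[
    \Tr(\sigma_T Y)
    \<
    \bD(\sigma_T\|\rho_T)+\ln\Tr(\rho_T\re^Y),
\]
valid for any self-adjoint operator $Y$ on $\fH_T$ for which the right-hand side is finite and any true state with $\supp\sigma_T\subset\supp\rho_T$. To establish it, I would introduce the tilted density operator $\wt{\rho}_T:=\re^{\ln\rho_T+Y}/\Tr\re^{\ln\rho_T+Y}$ and use the nonnegativity of $\bD(\sigma_T\|\wt{\rho}_T)$, which yields $\Tr(\sigma_T Y)-\bD(\sigma_T\|\rho_T)\<\ln\Tr\re^{\ln\rho_T+Y}$, together with the Golden--Thompson inequality $\Tr\re^{\ln\rho_T+Y}\<\Tr(\rho_T\re^Y)$ to pass from $\re^{\ln\rho_T+Y}$ to $\rho_T\re^Y$.

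Next I would specialise this inequality to $Y:=\frac{\theta}{2}Q_T$, noting that $Q_T$ in (\ref{Q}) acts on $\fH_T$ so that $\bE_\sigma Q_T=\Tr(\sigma_T Q_T)$ and $\Xi_{\theta,T}=\Tr(\rho_T\re^{\frac{\theta}{2}Q_T})$ by (\ref{sigrhoT}). This gives $\frac{\theta}{2}\bE_\sigma Q_T\<\bD(\sigma_T\|\rho_T)+\ln\Xi_{\theta,T}$, and dividing by $\frac{\theta}{2}T$ leads to
\[
    \frac{1}{T}\bE_\sigma Q_T
    \<
    \frac{2}{\theta}\Big(\frac{1}{T}\bD(\sigma_T\|\rho_T)+\frac{1}{T}\ln\Xi_{\theta,T}\Big).
\]
Taking $\limsup_{T\to+\infty}$ and using subadditivity of the upper limit, the definition (\ref{fSeps}) of $\fS_\eps$ bounds the relative-entropy term by $\eps$, while Theorem~\ref{th:limXi} (applicable since the hypotheses guarantee that $\cL_T$ has no zero eigenvalues for all sufficiently large $T$ and that $\theta$ satisfies (\ref{spec1})) identifies the limit of $\frac{1}{T}\ln\Xi_{\theta,T}$ as $\Ups(\theta)$. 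Hence $\limsup_{T\to+\infty}\frac{1}{T}\bE_\sigma Q_T\<\frac{2}{\theta}(\Ups(\theta)+\eps)$ for every admissible $\sigma$.

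Finally, since the resulting bound is independent of $\sigma$, I would take the supremum over $\sigma\in\fS_\eps$ and then the infimum over all $\theta>0$ subject to (\ref{spec1}), which yields (\ref{supEQ}). The main obstacle is the rigorous justification of the variational inequality for the \emph{unbounded} operator $Q_T$: one must ensure that $\re^{\frac{\theta}{2}Q_T}$ is trace class against $\rho_T$ (finiteness of the QEF, secured under (\ref{spec1}) by the Gaussian computation underlying Theorems~\ref{th:main} and \ref{th:limXi}) and that the tilting argument together with the Golden--Thompson inequality carries over to this setting via a suitable spectral truncation of $Q_T$ and a monotone passage to the limit. The remaining steps---subadditivity of $\limsup$ and the interchange of the $\sup_\sigma$ and $\inf_\theta$ operations---are routine once the duality inequality is in place.
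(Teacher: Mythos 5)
Your proposal is correct and follows essentially the same route as the paper: the paper invokes Lemma~2.1 and its corollary Eq.~(9) of Ref.~\refcite{YB_2009}, based on the Golden--Thompson inequality, to obtain exactly your inequality $\frac{\theta}{2}\bE_\sigma Q_T \< \ln\Xi_{\theta,T} + \bD(\sigma_T\|\rho_T)$, and then proceeds with the same limsup/subadditivity step, the bound $\eps$ from (\ref{fSeps}), the identification of the QEF rate via Theorem~\ref{th:limXi}, and the final $\sup_\sigma$/$\inf_\theta$ argument. The only difference is that you rederive the cited duality inequality inline via the tilted state $\re^{\ln\rho_T+Y}/\Tr\,\re^{\ln\rho_T+Y}$ plus Golden--Thompson (which is precisely how that lemma is established in the reference), and you flag the unboundedness of $Q_T$ as a technical point---a care the paper delegates to the citation.
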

\begin{proof}
Application of Lemma~2.1 from  Ref.~\refcite{YB_2009} and its corollary Eq.~(9) therein, based on the Golden-Thompson inequality\cite{G_1965,OP_1993,T_1965}
\begin{equation}
\label{GTI}
    \Tr(\re^{\alpha +\beta}) \< \Tr (\re^\alpha \re^\beta)
\end{equation}
for self-adjoint operators $\alpha$, $\beta$, to the quantum variables $\alpha := \ln \rho_T$ and $\beta := \frac{\theta}{2} Q_T$ yields
\begin{equation}
\label{GT}
  \frac{\theta}{2} \bE_\sigma Q_T
  \<
  \ln \Xi_{\theta,T}
  +
  \bD(
    \sigma_T
    \|
    \rho_T
  ),
\end{equation}
where the QEF $\Xi_{\theta,T} = \Tr (\rho_T \re^{\frac{\theta}{2}Q_T})$ in (\ref{Xi}) is over the nominal state $\rho$.  By combining (\ref{GT}) with (\ref{Ups}), (\ref{fSeps}), it follows that
\begin{equation}
\label{limsup}
  \frac{\theta}{2}
  \limsup_{T\to +\infty}
  \Big(
    \frac{1}{T}
    \bE_\sigma Q_T
  \Big)
  \<
  \limsup_{T\to +\infty}
  \Big(
    \frac{1}{T}
  \ln \Xi_{\theta,T}
  \Big)
  +
  \limsup_{T\to +\infty}
  \Big(
    \frac{1}{T}
  \bD(
    \sigma_T
    \|
    \rho_T
  )
  \Big)
  \<
    \Ups(\theta) + \eps
\end{equation}
for any $\theta>0$ satisfying (\ref{spec1}) and any $\sigma \in \fS_\eps$. The right-hand side of (\ref{limsup}), as an upper bound for the left-hand side, holds uniformly  over $\sigma \in \fS_\eps$, and hence, multiplication by $\frac{2}{\theta}$ leads to
$
    \sup_{\sigma \in \fS_\eps}
  \limsup_{T\to +\infty}
  \big(
    \frac{1}{T}
    \bE_\sigma Q_T
  \big)
  \<
    \frac{2}{\theta}(\Ups(\theta) + \eps)
$.
Since this inequality holds for any $\theta>0$
satisfying (\ref{spec1}),  and its left-hand side does not depend on $\theta$, the minimization of the right-hand side over $\theta$ yields (\ref{supEQ}).
\end{proof}

In the framework of the quantum statistical uncertainty model (\ref{fSeps}), specified by $\eps$ in terms of (\ref{bD}),  the worst-case quadratic cost growth rate on the left-hand side of (\ref{supEQ}) is similar to the robust performance criteria in classical minimax LQG control\cite{DJP_2000,P_2006,PJD_2000}. However, in contrast to the duality relation, which is used by the classical approach  in the context of a relative entropy description of statistical uncertainty, the equality in its quantum counterpart (\ref{GT}) is not necessarily achievable.

The uncertainty class (\ref{fSeps}) contains a smaller set of quantum states $\sigma$ on $\fH$,  given by
\begin{equation}
\label{fRepsc}
    \fR_{\eps, c}
    :=
  \Big\{
    \sigma:\
    \bD(\sigma_T \| \rho_T)
    \< \eps T + c(T)\
    {\rm for\ all}\ T>0
  \Big\}
  \subset \fS_\eps
\end{equation}
and parameterised by $\eps \> 0$ and an arbitrary function $c: (0,+\infty)\to \mR_+$ satisfying $c(T)=o(T)$, as $T\to +\infty$. In particular, if $\eps = 0$ and $c=0$,   then the set (\ref{fRepsc})  is a singleton consisting of the nominal state:
$\fR_{0,0} = \{\rho\}$, which corresponds to the absence of statistical uncertainty.
In general, a reasoning, similar to the proof of Theorem~\ref{th:worst}, leads to
\begin{equation}
\label{supEQ1}
  \sup_{\sigma  \in \fR_{\eps,c}}
  \limsup_{T\to+\infty}
  \Big(
  \frac{1}{T}
  \bE_\sigma Q_T
  \Big)
  \<
  \limsup_{T\to+\infty}
  \Big(
  \frac{1}{T}
  \sup_{\sigma  \in \fR_{\eps,c}}
  \bE_\sigma Q_T
  \Big)
   \<
  2
  \inf_{\theta >0\ {\rm subject\ to}\ (\ref{spec1})}
    \frac{\Ups(\theta) + \eps}{\theta}.
\end{equation}
Here, the maximization $  \sup_{\sigma  \in \fR_{\eps,c}}
  \bE_\sigma Q_T
$ of the quadratic cost $\bE_\sigma Q_T = \bE_{\sigma_T} Q_T$ under the quantum relative entropy constraint in (\ref{fRepsc})
is closely related to a quantum statistical mechanical problem\cite{S_1994}
\begin{equation}
\label{Helm1}
    -\frac{1}{2}\bE_\sigma Q_T + \frac{1}{\theta}\bD(\sigma_T \| \rho_T)
    =
    \bE_\sigma
    H_{\theta,T}
    -
    \frac{1}{\theta}
    \bH(\sigma_T)
    \longrightarrow
    \min
\end{equation}
of unconstrained minimization of a free energy functional over the density operators $\sigma_T$ on $\fH_T$. Here,
\begin{equation}
\label{H}
    H_{\theta,T}
    :=
            -
        \frac{1}{\theta} \ln \rho_T
    -\frac{1}{2}Q_T
\end{equation}
plays the role of a fictitious Hamiltonian which involves
both $Q_T$ and the nominal state $\rho_T$ on the system-field subspace $\fH_T$.
Accordingly, $\frac{1}{\theta}$ is the Lagrange multiplier for the constraint on $\bD(\sigma_T \| \rho_T)$, with $\theta$ corresponding to the inverse temperature (up to the Boltzmann constant).  The minimum in (\ref{Helm1}) is achieved at the Gibbs-Boltzmann density operator\cite{BB_2010}
$
    \frac{1}{Z_{\theta,T}} \re^{-\theta H_{\theta,T}}
$,
where, in view of (\ref{H}) and the Golden-Thompson inequality (\ref{GTI}), the partition function $Z_{\theta,T} :=     \Tr (\re^{-\theta H_{\theta,T}})
    =
    \Tr (\re^{\ln \rho_T + \frac{\theta}{2} Q_T })
$
is bounded by the nominal value of the QEF in (\ref{Xi}):
$
    Z_{\theta,T}
    \<
    \Tr (\rho_T \re^{\frac{\theta}{2}Q_T})
    =
    \Xi_{\theta,T}
$.

For any given $\eps\>0$, the quantity $\frac{1}{\theta}(\Ups(\theta)+\eps)$ under minimization  in (\ref{supEQ}) (and (\ref{supEQ1}))  is a convex function of $\theta>0$ due to each of the functions $\frac{\Ups(\theta)}{\theta}$ and $\frac{\eps}{\theta}$ being convex.
Also, since this quantity is increasing with respect to $\Ups(\theta)$, the  inequalities  (\ref{supEQ}), (\ref{supEQ1}) remain valid if the exact value of $\Ups(\theta)$ from (\ref{Ups}) is replaced with its upper bound. Such estimates for the QEF growth rate $\Ups(\theta)$ are provided, for example, by Theorem~5 of Ref.~\refcite{VPJ_2018a}.

In the context of risk-sensitive quantum feedback control and filtering problems\cite{B_1996,J_2004,J_2005,YB_2009}  with linear quantum plants and controllers or observers,  the resulting closed-loop systems are organised as OQHOs. The minimization of the QEF rate $\Ups(\theta)$ over admissible parameters of the controllers and observers (at a suitably chosen $\theta>0$),  as a performance criterion  for such systems, enhances their robustness properties in terms of the large deviations and worst-case quadratic cost bounds (\ref{supP}), (\ref{supEQ}), (\ref{supEQ1}).

\section{Conclusion}
\label{sec:conc}

For a quantum process of time-varying self-adjoint quantum variables with CCRs,  similar  to those of positions and momenta, we have developed a finite-horizon QKL expansion over the eigenbasis of the skew self-adjoint operator with the commutator kernel function, provided it has no zero eigenvalues. The QKL expansion has been applied to obtain a randomised representation of the QEFs for such quantum processes, which resembles the Doleans-Dade exponentials in the context of Girsanov's theorem and involves an auxiliary Gaussian random process whose covariance structure is specified by the commutator kernel of the underlying quantum process.
This representation has allowed the QEF to be related to the MGF of the quantum process and computed for the case of multipoint Gaussian states.
For stationary Gaussian quantum processes,  we have obtained a frequency-domain formula for the infinite-horizon asymptotic growth rate of the QEF in terms of the Fourier transforms of the real and imaginary parts of the quantum covariance function in composition with trigonometric functions. A homotopy technique has been outlined for numerical computation and approximation of the QEF growth rate as a function of the risk sensitivity parameter. A sufficient condition for no zero eigenvalues has been obtained in the case of  stationary Gaussian quantum processes, related linearly to the system variables of
stable OQHOs driven by vacuum fields, when the quantum covariance function has a rational Fourier transform and the eigenanalysis is reduced to a boundary value problem for a second-order ODE.
For this class of quantum systems, we have also discussed asymptotic upper bounds on the worst-case mean square costs   in terms of the QEF growth rate in the presence of statistical uncertainty described in terms of quantum relative entropy of the actual density operator with respect to the nominal state. In combination with exponential upper bounds on tail probabilities for quantum trajectories, the role of the QEF growth rate for the robustness properties of OQHOs makes it important to implement the frequency-domain representation, obtained in this paper,  in the form of state-space methods for its computation. The state-space approach would also benefit the solution of optimal control problems for OQHOs with quadratic-exponential performance criteria,  which are currently considered in the frequency domain\cite{VJP_2020}.

\section*{Acknowledgements}

IGV thanks Professor Robin L. Hudson for pointing out Ref.~\refcite{IT_2010} and for other useful discussions in the context of Refs.~\refcite{CH_2013,H_2018,VPJ_2019b}. This work is supported by the Australian Research Council grant DP210101938.

\appendix
\section{Randomised Representation for Elementary Quadratic-Exponential Functions}\label{sec:A}

For the purposes of Sec.~\ref{sec:quadro}, consider a self-adjoint quantum variable, which  is associated with the position and momentum operators $\xi$, $\eta$ in (\ref{zeta}), (\ref{posmomCCR}) of Sec.~\ref{sec:proc} as
\begin{equation}
\label{f}
  f(\sigma)
  :=
  \bM \re^{\sigma(\alpha \xi + \beta \eta )}
  =
  \frac{1}{2\pi}
  \int_{\mR^2}
    \re^{\sigma(a \xi + b\eta )
    -\frac{1}{2}(a^2 +b^2)}
  \rd a \rd b
\end{equation}
 and depends on a parameter $\sigma \in \mR$ satisfying the constraint
\begin{equation}
\label{sigmagood}
    |\sigma| < \sqrt{2}.
\end{equation}
The classical expectation $\bM(\cdot)$ in (\ref{f}) is over auxiliary independent standard normal  random variables $\alpha$, $\beta$.

\begin{theorem}
\label{th:fact}
The quantum variables (\ref{f}) are related to quadratic-exponential functions  of the position-momentum pair $(\xi, \eta)$ by
\begin{equation}
  \label{qefrand}
    \re^{\omega (\xi^2+\eta^2)}
    =
    \frac{1}{\cosh \omega}
    f(\sigma),
    \qquad
    \sigma := \sqrt{2\tanh \omega}
  \end{equation}
  for any $\omega\>0$.
\hfill$\square$
\end{theorem}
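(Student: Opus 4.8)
The plan is to diagonalise both sides of (\ref{qefrand}) in the Fock basis of the single mode determined by the pair $(\xi,\eta)$. First I would pass to the annihilation and creation operators $q:=\frac{1}{\sqrt2}(\xi+i\eta)$ and $q^\dagger=\frac{1}{\sqrt2}(\xi-i\eta)$, which satisfy $[q,q^\dagger]=1$ in view of $[\xi,\eta]=i$, and record the elementary identity $\xi^2+\eta^2 = 2q^\dagger q + 1 = 2N+1$, where $N:=q^\dagger q$ is the number operator. The left-hand side of (\ref{qefrand}) then reads $\re^{\omega(\xi^2+\eta^2)} = \re^{\omega}\re^{2\omega N}$, which is diagonal in the orthonormal Fock basis $(|n\rangle)_{n\in\mZ_+}$ with eigenvalues $\re^{\omega(2n+1)}$. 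It therefore suffices to show that $f(\sigma)$ is the diagonal operator with entries $\cosh\omega\,\re^{\omega(2n+1)}$.

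Next I would rewrite the random exponent: setting $\lambda:=\frac{\sigma}{\sqrt2}(\alpha+i\beta)$ gives $\sigma(\alpha\xi+\beta\eta) = \lambda q^\dagger + \bar\lambda q$, so that $f(\sigma)=\bM\,\re^{\lambda q^\dagger+\bar\lambda q}$ is a Gaussian average of self-adjoint (non-unitary) Weyl-type exponentials, with $|\lambda|^2 = \frac{\sigma^2}{2}(\alpha^2+\beta^2)$. I would first argue that this average commutes with $N$ and hence is a function of $N$ alone: conjugation by the phase rotation $\re^{i\phi N}$ sends $q\mapsto \re^{-i\phi}q$, whence $\lambda\mapsto \re^{i\phi}\lambda$ in the exponent, and the rotation-invariant complex Gaussian law of $\lambda$ is left unchanged. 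Thus $f(\sigma)$ is diagonal, and it remains to compute $g_n:=\langle n|f(\sigma)|n\rangle$.

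To evaluate $g_n$ I would normal-order, using $\re^{\lambda q^\dagger+\bar\lambda q} = \re^{\frac12|\lambda|^2}\re^{\lambda q^\dagger}\re^{\bar\lambda q}$ (valid since $[\lambda q^\dagger,\bar\lambda q]=-|\lambda|^2$ is central), which yields $\langle n|\re^{\lambda q^\dagger+\bar\lambda q}|n\rangle = \re^{\frac12|\lambda|^2}\sum_{j=0}^n\binom nj\frac{|\lambda|^{2j}}{j!}$. Writing $t:=|\lambda|^2$, whose law is exponential with mean $\sigma^2$ (density $\sigma^{-2}\re^{-t/\sigma^2}$ on $\mR_+$, from the chi-squared law of $\alpha^2+\beta^2$), the elementary integrals $\bM(\re^{t/2}t^j) = \frac{j!}{\sigma^2 c^{j+1}}$ with $c:=\frac{1}{\sigma^2}-\frac12$ converge precisely when $c>0$, i.e. under the standing constraint (\ref{sigmagood}). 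Summing the resulting binomial series gives the closed form
\[
  g_n \;=\; \frac{1}{\sigma^2 c}\,(1+c^{-1})^n \;=\; \frac{2}{2-\sigma^2}\Big(\frac{2+\sigma^2}{2-\sigma^2}\Big)^{n}.
\]

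Finally I would substitute $\sigma^2 = 2\tanh\omega$ and simplify via $1\mp\tanh\omega = \re^{\mp\omega}/\cosh\omega$, which turns the prefactor into $\frac{2}{2-\sigma^2}=\cosh\omega\,\re^{\omega}$ and the ratio into $\frac{2+\sigma^2}{2-\sigma^2}=\re^{2\omega}$, so that $g_n = \cosh\omega\,\re^{\omega(2n+1)}$. Comparing with the diagonal entries of $\re^{\omega(\xi^2+\eta^2)}$ from the first step gives $f(\sigma) = \cosh\omega\,\re^{\omega(\xi^2+\eta^2)}$, which is (\ref{qefrand}). The main obstacle is the book-keeping with unbounded operators: both $\re^{\omega(\xi^2+\eta^2)}$ and the integrand $\re^{\sigma(a\xi+b\eta)}$ are unbounded, so the interchange of the classical expectation $\bM$ with the Fock-space matrix elements and with the term-by-term summation must be justified. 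This is legitimate because all the summands and integrands involved are nonnegative, so Tonelli's theorem applies, and the entire identity is to be read as an equality of diagonal positive self-adjoint operators, with finiteness holding exactly on the range $|\sigma|<\sqrt2$.
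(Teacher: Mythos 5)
Your proof is correct, and it takes a genuinely different route from the paper's. The paper proceeds by a flow argument in the parameter $\sigma$: it differentiates $f(\sigma)$, uses the Weyl/BCH relations together with the Gaussian identity $\bM(\gamma \re^{\gamma z}) = z\bM\re^{\gamma z}$ to close a linear operator ODE $f'(\sigma) = \frac{\sigma}{1-\sigma^4/4}\big(\xi^2+\eta^2 + \frac{\sigma^2}{2}\big)f(\sigma)$, and then solves it via the substitution $\omega = \frac{1}{2}\ln\frac{1+\sigma^2/2}{1-\sigma^2/2}$, which is exactly where $\sigma = \sqrt{2\tanh\omega}$ appears. You instead diagonalise both sides in the Fock basis of the mode: the rotation-invariance argument showing $[f(\sigma),N]=0$, the normal-ordered matrix elements $\re^{|\lambda|^2/2}\sum_{j\le n}\binom{n}{j}\frac{|\lambda|^{2j}}{j!}$, the exponential law of $|\lambda|^2$, and the closed-form sum $g_n = \frac{2}{2-\sigma^2}\big(\frac{2+\sigma^2}{2-\sigma^2}\big)^n$ are all verified correctly, as is the final simplification via $1\mp\tanh\omega = \re^{\mp\omega}/\cosh\omega$. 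What your approach buys: it is a direct spectral verification in which all quantities are nonnegative, so the unbounded-operator bookkeeping reduces to Tonelli plus equality of diagonal positive operators, and it makes completely transparent that the constraint $|\sigma|<\sqrt{2}$ is precisely the finiteness condition $c=\frac{1}{\sigma^2}-\frac{1}{2}>0$ for the exponential moments. What the paper's ODE approach buys: it never invokes the Fock-basis spectral decomposition or number-operator machinery, working purely with the CCR algebra and classical Gaussian identities, which keeps the argument within the formal operator calculus used throughout the rest of the paper (and in it the same constraint surfaces instead as the singularity of the denominator $1-\sigma^4/4$). Both yield the same bijection $[0,+\infty)\ni\omega \leftrightarrow \sigma\in[0,\sqrt{2})$, and your result $f(\sigma)=\cosh\omega\,\re^{\omega(\xi^2+\eta^2)}$ is exactly the paper's equation (\ref{qeff}).
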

\begin{proof}
Differentiation of (\ref{f}) with respect to $\sigma$  yields
\begin{equation}
\label{f'}
    f'(\sigma)
     =
    \bM ((\alpha \xi + \beta \eta ) g(\sigma,\alpha,\beta))\\
    =
    \zeta^\rT
    \mu(\sigma),
\end{equation}
where
\begin{equation}
\label{BCH}
    g(\sigma, a, b)
    :=
    \re^{\sigma(a \xi + b \eta )}
    =
    \re^{-\frac{i}{2}\sigma^2 ab}
    \re^{\sigma a\xi}\re^{\sigma b\eta }
    =
    \re^{\frac{i}{2}\sigma^2 ab}
    \re^{\sigma b\eta }\re^{\sigma a\xi}
    =
    \re^{\sigma a(\xi-\frac{i}{2}\sigma b)}
    \re^{\sigma b\eta }
    =
    \re^{\sigma b(\eta+\frac{i}{2}\sigma a)}
    \re^{\sigma a\xi }
\end{equation}
is an auxiliary self-adjoint quantum variable which depends on the parameters $\sigma, a, b\in \mR$.  Here,
the position-momentum vector $\zeta$ from (\ref{zeta}) is used together with
\begin{equation}
\label{mu}
    \mu(\sigma)
    :=
      {\begin{bmatrix}
      \mu_1(\sigma)\\
      \mu_2(\sigma)
    \end{bmatrix}}
    =
     {\begin{bmatrix}
      \bM(\alpha g(\sigma,\alpha,\beta))\\
      \bM(\beta g(\sigma,\alpha,\beta))
    \end{bmatrix}},
\end{equation}
which is a vector consisting of two self-adjoint quantum variables, depending on the parameter $\sigma$.
The second and third equalities in (\ref{BCH}) follow from the Baker-Campbell-Hausdorff (BCH) formula (or, equivalently, the Weyl CCRs, similar to (\ref{Weyl})). In what follows, we will employ the identity
\begin{equation}
\label{nice}
    \bM (\gamma \re^{\gamma z}) = z\re^{\frac{1}{2}z^2} = z \bM \re^{\gamma z},
\end{equation}
which holds for a standard normal random variable $\gamma$ and extends from complex numbers to operators $z$. Since $\alpha$, $\beta$ are independent standard normal random variables, then substitution of the forth equality from (\ref{BCH}) into (\ref{mu}) and  application of the tower property of conditional classical expectations\cite{S_1996} together with (\ref{nice}) lead to
\begin{align}
\nonumber
    \mu_1(\sigma)
    & =
    \bM
    (
    \alpha
    \re^{\sigma\alpha (\xi-\frac{i}{2}\sigma \beta)}
    \re^{\sigma\beta \eta }
    )
    =
    \bM
    (
    \bM
    (
    \alpha
    \re^{\sigma\alpha (\xi-\frac{i}{2}\sigma \beta)}
    \mid
    \beta
    )
    \re^{\sigma\beta \eta }
    )\\
\nonumber
    &=
    \sigma
    \bM
    \Big(
    \bM
    \Big(
    \Big(\xi-\frac{i}{2}\sigma \beta\Big)
    \re^{\sigma\alpha
    (\xi-\frac{i}{2}\sigma \beta
    )}
    \,\Big|\,
    \beta
    \Big)
    \re^{\sigma\beta \eta }
    \Big)
    =
    \sigma
    \bM
    \Big(
    \Big(\xi-\frac{i}{2}\sigma \beta\Big)
    \re^{\sigma\alpha (\xi-\frac{i}{2}\sigma \beta)}
    \re^{\sigma\beta \eta }
    \Big)    \\
\label{Mg1}
    &=
    \sigma
    \bM
    \Big(
    \Big(\xi-\frac{i}{2}\sigma \beta\Big)
    g(\sigma,\alpha,\beta)
    \Big)
    =
    \sigma
    \xi
    f(\sigma)
    -
    \frac{i}{2}
    \sigma^2
    \mu_2(\sigma),
\end{align}
where use is also made of the relation $f(\sigma) = \bM g(\sigma, \alpha, \beta)$ in view of (\ref{f}). By  a similar reasoning, a combination of the last equality from (\ref{BCH}) with (\ref{mu}), (\ref{nice}) yields
\begin{align}
\nonumber
    \mu_2(\sigma)
    & =
    \bM
    (
    \beta
    \re^{\sigma\beta (\eta+\frac{i}{2}\sigma \alpha)}
    \re^{\sigma\alpha\xi }
    )
    =
    \bM
    (
    \bM
    (
    \beta
    \re^{\sigma\beta (\eta+\frac{i}{2}\sigma \alpha)}
    \mid
    \alpha
    )
    \re^{\sigma\alpha\xi }
    )\\
\nonumber
    &=
    \sigma
    \bM
    \Big(
    \bM
    \Big(
    \Big(\eta+\frac{i}{2}\sigma \alpha\Big)
    \re^{\sigma\beta (\eta+\frac{i}{2}\sigma \alpha)}
    \, \Big|\,
    \alpha
    \Big)
    \re^{\sigma\alpha\xi }
    \Big)
    =
    \sigma
    \bM
    \Big(
    \Big(\eta+\frac{i}{2}\sigma \alpha\Big)
    \re^{\sigma\beta (\eta+\frac{i}{2}\sigma \alpha)}
    \re^{\sigma\alpha\xi }
    \Big)    \\
\label{Mg2}
    &=
    \sigma
    \bM
    \Big(
    \Big(\eta+\frac{i}{2}\sigma \alpha\Big)
    g
    \Big)
    =
    \sigma
    \eta
    f(\sigma)
    +
    \frac{i}{2}
    \sigma^2
    \mu_1(\sigma).
\end{align}
The relations (\ref{Mg1}), (\ref{Mg2}) form a set of two linear equations for the vector $\mu(\sigma)$ in  (\ref{mu}), which can be represented by using the matrix (\ref{bJ}) and the vector (\ref{zeta}) as
$
    \mu(\sigma)
    =
    \sigma
    \zeta
    f(\sigma)
    -
    \frac{i}{2}
    \sigma^2
    \bJ
    \mu(\sigma)
$,
and hence,
\begin{equation}
\label{mu1}
    \mu(\sigma)
    =
    \sigma
    \Big(
    I_2
    +
    \frac{i}{2}
    \sigma^2
    \bJ
    \Big)^{-1}
    \zeta
    f(\sigma)
    =
    \frac{\sigma}{1-\frac{1}{4} \sigma^4}
    \Big(I_2
    -
    \frac{i}{2}
    \sigma^2
    \bJ
    \Big)
    \zeta
    f(\sigma)
\end{equation}
in view of the involutive property $(i\bJ)^2 = I_2$ (whereby $(I_2 + ic \bJ)(I_2 - ic \bJ) =
(1-c^2)I_2$ for any $c \in \mC$). Substitution of (\ref{mu1}) into (\ref{f'}) leads to
\begin{equation}
\label{f'1}
    f'(\sigma)
    =
    \frac{\sigma}{1-\frac{1}{4} \sigma^4}
    \zeta^\rT
    \Big(I_2
    -
    \frac{i}{2}
    \sigma^2
    \bJ
    \Big)
    \zeta
    f(\sigma)
    =
    \frac{\sigma}{1-\frac{1}{4} \sigma^4}
    \Big(
    \zeta^\rT \zeta
    +
    \frac{1}{2}
    \sigma^2
    \Big)
    f(\sigma),
\end{equation}
where use is made of the relations $\zeta^{\rT}\bJ \zeta = \xi\eta - \eta \xi = [\xi, \eta] = i$ which follow from (\ref{zeta})--(\ref{bJ}). Here, the denominator $1-\frac{1}{4} \sigma^4$ originates from the BCH correction factors $\re^{\pm \frac{i}{2}\sigma^2 \alpha \beta}$ in (\ref{BCH}), which explains the nature of the constraint (\ref{sigmagood}).
Now, (\ref{f'1}) is a linear operator differential equation for $f$ with the identity operator as the initial condition: $f(0)=\cI$ in view of (\ref{f}). Its solution is given by a leftward-ordered exponential
\begin{equation}
\label{fsol}
  f(\sigma)
  =
  \overleftarrow{\exp}
  \Big(
  \int_0^\sigma
    \frac{\tau}{1-\frac{1}{4} \tau^4}
    \Big(
    \zeta^\rT \zeta
    +
    \frac{1}{2}
    \tau^2
    \Big)
    \rd \tau
  \Big)
  =
  \exp
  \Big(
    \int_0^\sigma
    \frac{\tau}{1-\frac{1}{4} \tau^4}
    \rd \tau
    \zeta^\rT \zeta
    +
    \frac{1}{2}
    \int_0^\sigma
    \frac{\tau^3}{1-\frac{1}{4} \tau^4}
    \rd \tau
  \Big),
\end{equation}
which reduces to the usual exponential of an affine function of the quantum variable $\zeta^\rT \zeta$ (noncommutativity issues do not arise here because (\ref{fsol}) involves only one quantum variable). The coefficients of this affine function can be computed in terms of a new integration  variable $\omega\>0$, related to $\sigma$  by
\begin{equation}
\label{lamsig}
    \omega := \frac{1}{2}\ln\frac{1+\frac{1}{2}\sigma^2}{1-\frac{1}{2}\sigma^2},
    \qquad
    \sigma = \sqrt{2\frac{\re^{2\omega}-1}{\re^{2\omega}+1}} = \sqrt{2\tanh \omega},
\end{equation}
which describes a bijection $[0,+\infty)\ni \omega \leftrightarrow \sigma \in [0,\sqrt{2})$. More precisely, it follows from (\ref{lamsig}) that
$
    \frac{\sigma \rd \sigma}{1-\frac{1}{4} \sigma^4}
     =
    \frac{1}{2}
    \big(
        \frac{1}{1+\frac{1}{2} \sigma^2}
        +
        \frac{1}{1-\frac{1}{2} \sigma^2}
    \big)
    \rd (\sigma^2/2)
    =
    \rd \omega$
    and
    $\frac{1}{2}
    \frac{\sigma^3 \rd \sigma}{1-\frac{1}{4} \sigma^4}
     =
    -\frac{1}{2}
    \rd \ln
    \big(
        1-\frac{1}{4} \sigma^4
    \big)
    =
    -\frac{1}{2}
    \rd \ln
    (
        1-(\tanh \omega)^2
    )
    =
    \rd \ln \cosh \omega
$,
and hence, (\ref{fsol}) takes the form
\begin{equation}
\label{qeff}
    f(\sigma)
    =
    \re^{\omega \zeta^\rT \zeta + \ln \cosh \omega}
    =
    \re^{\omega \zeta^\rT \zeta} \cosh \omega.
\end{equation}
This can also be obtained by representing the ODE (\ref{f'1}) as $\rd f = f'\rd \sigma = (\zeta^\rT \zeta + \tanh \omega )f\rd \omega$ and using the relation $\int_0^\omega\tanh u \rd u = \ln \cosh \omega$.
It now remains to note that $\zeta^\rT \zeta = \xi^2 + \eta^2$ in view of (\ref{zeta}),  whereby (\ref{qeff}) establishes (\ref{qefrand}).
\end{proof}

\end{document}